\def\weitdacherl#1{\mathop{\vbox{\m@th\ialign{##\crcr\noalign{\kern3\p@}%
      $\hfil{\scriptscriptstyle\frown}\hfil$\crcr\noalign{\kern1\p@\nointerlineskip}%
      $\displaystyle{#1}$\crcr}}}}
\def\weithutzl#1{\mathop{\vbox{\m@th\ialign{##\crcr\noalign{\kern3\p@}%
      $\hfil{\scriptscriptstyle\smile}\hfil$\crcr\noalign{\kern1\p@\nointerlineskip}%
      $\displaystyle{#1}$\crcr}}}}
\def\part{\@startsection{part}{0}%
  \z@{2.5\linespacing\@plus\linespacing}{1.0\linespacing}%
  {\LARGE\bfseries\raggedright}}
\newtheorem{thm}{Theorem}[section]
\newtheorem{prop}[thm]{Proposition}
\newtheorem{lem}[thm]{Lemma}
\newtheorem{defn}[thm]{Definition}
\newtheorem{cor}[thm]{Corollary}
\newtheorem{rem}{Remark}
\newtheorem{conj}{Conjecture}
\newcommand*{\bigtimes}{\mathop{\hbox{\Large{$\times$}}}}
\newcommand{\la}{\langle}
\newcommand{\ra}{\rangle}
\newcommand\dirmin{\ensuremath{\weithutzl{\times}}\,}
\newcommand\dirmax{\ensuremath{\weitdacherl{\times}}\,}
\newcommand\strmin{\ensuremath{\weithutzl{\boxtimes}}\,}
\newcommand\strmax{\ensuremath{\weitdacherl{\boxtimes}}\,}
\newcommand\NrMin{\ensuremath{|\dirmin|}\,}
\newcommand\NrMax{\ensuremath{|\dirmax|}\,}
\newcommand\skel{\ensuremath{\mathbb{S}}\,}
\renewcommand\simeq{\cong}
\journal{Discrete Applied Mathematics}
\begin{document}

\sloppy

\begin{frontmatter}


\title{Strong Products of Hypergraphs: Unique Prime Factorization Theorems and Algorithms}

\author[SB]{Marc Hellmuth\corref{cor1}}
				\ead{marc.hellmuth@bioinf.uni-sb.de}
				\cortext[cor1]{corresponding author}
\author[SB]{Manuel Noll}
				\ead{mnoll@bioinf.uni-sb.de}
\author[LEI,MIS]{Lydia Ostermeier}
				\ead{glydia@bioinf.uni-leipzig.de}

\address[SB]{Center for Bioinformatics, Saarland University, Building E
             2.1, Room 413, P.O. Box 15 11 50, D-66041 Saarbr\"{u}cken,
             Germany
   					}
\address[LEI]{Bioinformatics Group, Department of Computer Science,
              and Interdisciplinary Center for Bioinformatics,
              University of Leipzig,
              H{\"a}rtelstra{\ss}e 16-18, D-04107 Leipzig, Germany
             }
\address[MIS]{Max Planck Institute for Mathematics in the Sciences,
              Inselstra{\ss}e 22, D-04103 Leipzig, Germany
             }

\begin{abstract}
It is well-known that all finite connected graphs have a unique prime
factor decomposition (PFD) with respect to the strong graph product which can be
computed in polynomial time. Essential for the PFD computation is the 
construction of the so-called Cartesian skeleton of the graphs under
investigation. 

In this contribution, we show that every connected thin hypergraph $H$ has a
unique prime factorization with respect to the normal and strong (hypergraph)
product. Both products coincide with the usual strong \emph{graph} product
whenever $H$ is a graph. We introduce the notion of the Cartesian skeleton
of hypergraphs as a natural generalization of the Cartesian skeleton of
graphs and prove that it is uniquely defined for thin hypergraphs. Moreover,
we show that the Cartesian skeleton of hypergraphs can be determined in
$O(|E|^2)$ time and that the PFD can be computed in $O(|V|^2|E|)$ time, for
hypergraphs $H=(V,E)$ with bounded degree and bounded rank.
\end{abstract}

\begin{keyword}
Hypergraph \sep strong product \sep normal product \sep 
Prime Factor Decomposition Algorithms \sep Cartesian Skeleton


\end{keyword}

\end{frontmatter}

\section{Introduction}

As shown by D{\"o}rfler and Imrich \cite{DOeIM-69} and independently by
McKenzie \cite{McK-71}, all finite connected graphs have a unique prime
factor decomposition (PFD) with respect to the strong product. The first
who provided a polynomial-time algorithm for the prime factorization of
strong product graphs were Feigenbaum and Sch{\"a}ffer \cite{FESC-92}. The
latest and fastest approaches are due to Hammack and Imrich \cite{HAIM-09}
and Hellmuth \cite{Hel-11}. In all these approaches, the key idea for the
prime factorization of a strong product graph $G$ is to find a subgraph
$\skel(G)$ of $G$ with special properties, the so-called \emph{Cartesian
skeleton}, that is then decomposed with respect to the \emph{Cartesian}
product. Afterwards, one constructs the prime factors of $G$ using the
information of the PFD of $\skel(G)$. 

Hypergraphs are natural generalizations of graphs, see
\cite{Berge:Hypergraphs}. It is well-known that hypergraphs have a unique
PFD w.r.t. the Cartesian product \cite{Imrich67:Mengensysteme,
OstHellmStad11:CartProd}, which can be computed in polynomial time
\cite{BSV-13}. For more details about hypergraph products, see \cite{HOS-12}. 
As it is shown in \cite{HOS-12}, it is possible to find several non-equivalent
generalizations of the standard graph products to hypergraph products. In
this contribution, we are concerned with two generalizations of the 
strong graph product, namely, the so-called normal product
\cite{Sonntag90:NormalProd} and the strong (hypergraph) product
\cite{HOS-12}. We show that every connected simple thin hypergraph has a
unique PFD with respect to these two products. For this purpose, we
introduce the notion of the Cartesian skeleton of hypergraphs as a
generalization of the Cartesian skeleton of graphs \cite{HAIM-09} and show
that it is uniquely defined for thin hypergraphs. Finally, we give an
algorithm for the computation of the Cartesian skeleton that runs in
$O(|E|^2)$ time and an algorithm for the PFD of hypergraphs that runs in
$O(|V|^2 |E|)$ time, for hypergraphs $H=(V,E)$ with bounded degree and
bounded rank.


\section{Preliminaries}

\subsection{Basic Definitions}

A \emph{hypergraph} $H=(V,E)$ consists of a finite set $V$ and a
collection $E$ of non-empty subsets of $V$. The elements of $V$ are called
\emph{vertices} and the elements of $E$ are called \emph{hyperedges}, or
simply \emph{edges} of the hypergraph. Throughout this contribution, we
only consider hypergraphs without multiple edges and thus, being $E$ a
usual set. If there is a risk of confusion we will denote the vertex set
and the edge set of a hypergraph $H$ explicitly by $V(H)$ and $ E(H)$,
respectively.

Two vertices $u$ and $v$ are \emph{adjacent} in $H=(V, E)$ if there is an
edge $e\in E$ such that $u,v\in e$. The set of all vertices $u$ that are
adjacent to $v$ in $H$ is denoted by $N^H(v)$. The set
$N^H[v]=N^H(v)\cup\{v\}$ is called the \emph{(closed) neighborhood} of $v$.
If any two distinct vertices $u,v\in V$ can be distinguished by their
neighborhoods, that is, $N^H[u]\neq N^H[v]$, then the hypergraph $H=(V, E)$
is called \emph{thin}. A vertex $v$ and an edge $e$ of $H$ are
\emph{incident} if $v\in e$. The \emph{degree} $\deg(v)$ of a vertex $v\in
V$ is the number of edges incident to $v$. The \emph{maximum degree}
$\max_{v\in V} \deg(v)$ is denoted by $\Delta_H$ or just by $\Delta$.

A hypergraph $H=(V,E)$ is \emph{simple} if no edge is contained in any
other edge and $|e|\geq 2$ for all $e\in E$. A hypergraph is \emph{trivial}
if $|V|=1$. The \emph{rank} of a hypergraph $H=(V,E)$ is $r(H)=\max_{e\in
E}|e|$. A hypergraph with $r(H)\leq 2$ is a \emph{graph}.  

A sequence $P_{v_0,v_{k}} = (v_0,e_1,v_1,e_2,\ldots,e_k,v_k)$ in a
hypergraph $H=(V, E)$, where $e_1,\ldots,e_k \in E$ and $v_0,\ldots,v_k\in
V$, such that each $v_{i-1},v_i\in e_i$ for all $i=1,\ldots,k$ and $v_i\neq
v_j$, $e_i\neq e_j$ for all $i\neq j$ with $i,j\in \{1,\dots,k\}$ is called
a \emph{path of length $k$} (joining $v_0$ and $v_k$). The \emph{distance}
$d_H(v,v')$ between two vertices $v,v'$ of $H$ is the length of a
shortest path joining them. A hypergraph $H=(V,E)$ is called
\emph{connected}, if any two distinct vertices are joined by a path.

A \emph{partial hypergraph} $H'=(V',E')$ of a hypergraph $H=(V,E)$, denoted
by $H'\subseteq H$, is a hypergraph such that $V'\subseteq V$ and
$E'\subseteq E$. In the class of graphs partial hypergraphs are called
\emph{subgraphs}. A partial hypergraph $H'\subseteq H$ is a \emph{spanning}
hypergraph of $H$ if $V(H')=V(H)$. $H'\subseteq H$ is \emph{induced} if $E'
= \{e\in E\mid e\subseteq V'\}$. Induced hypergraphs will be denoted by
$\left\la V'\right\ra$. 

For two hypergraphs $H_1=(V_1,E_1)$ and $H_2=(V_2, E_2)$ a
\emph{homomorphism} from $H_1$ into $H_2$ is a mapping $\varphi:
V_1\rightarrow V_2$ such that
$\varphi(e)=\{\varphi(v_1),\ldots,\varphi(v_r)\}$ is an edge in $H_2$, if
$e=\{v_1,\ldots,v_r\}$ is an edge in $H_1$. A homomorphism $\varphi$ that
is bijective is called an \emph{isomorphism} if it holds $\varphi(e)\in
E_2$ if and only if $e\in E_1$. We say, $H_1$ and $H_2$ are
\emph{isomorphic}, in symbols $H_1\cong H_2$, if there exists an
isomorphism between them. If $H_1\cong H_2$ then we will identify their
edge sets and will write for the sake of convenience $E(H_1)=E(H_2)$. An
isomorphism from $H$ into $H$ is called \emph{automorphism}.

A graph $G=(V,E)$ in which all vertices are pairwise adjacent is called
\emph{complete graph} and is denoted by $K_{|V|}$.
The $2$-section $[H]_2$ of a hypergraph $H=(V,E)$ is the graph $(V,E')$
with $E'=\left\{\{x,y\}\subseteq V\mid \,\exists\; e\in E:
  \{x,y\}\subseteq e,\,x\neq y \right\}$, that is, two vertices are adjacent in
$[H]_2$ if they belong to the same hyperedge in $H$, \cite{Berge:Hypergraphs}. 
Thus, every hyperedge of a simple hypergraph $H$ is a complete subgraph in $[H]_2$.

\begin{rem}
	In the sequel of this paper we only consider \emph{finite, simple, connected hypergraphs}, 
	and therefore, call them for the sake of convenience just \emph{hypergraphs}. 
\end{rem}

\subsection{Hypergraph Products}

As shown in \cite{HOS-12}, it is possible to find several non-equivalent
generalizations of the standard graph products to hypergraph products. We
define in the following the Cartesian product $\Box$, the normal product
$\strmin$ and the strong product $\strmax$, where the latter two products
can be considered as generalizations of the usual strong \emph{graph}
product.

In all of these three products, the vertex sets are the Cartesian set
products of the vertex sets of the factors:
$$V(H_1\Box H_2)=V(H_1\strmax H_2) = V(H_1\strmin H_2) = V(H_1)\times V(H_2)$$

For an arbitrary Cartesian set product $V=\bigtimes_{i=1}^n V_i$ of (finitely
many) sets $V_i$, the \emph{projection} $p_j:V\to V_j$ is defined by
$v=(v_1,\dots,v_n) \mapsto v_j$. We will call $v_j$ the \emph{$j$-th
coordinate} of $v\in V$. 
With this notation, the edge sets are defined as
follows.

\begin{center}
\begin{tabular}{llccl}
	Cartesian product: & $e\in E(H_1\Box H_2)$ & if and only if & &
					 $p_i(e)\in E(H_i), p_j(e)\in V(H_j)$ with $i,j\in\{1,2\}$, $i\neq j$. \\
	Strong product: & $e\in E(H_1\strmax H_2)$ & if and only if & $(i)$ & $e\in E(H_1\Box H_2)$  or \\
&	& &		 $(ii)$ & $p_i(e)\in E(H_i)$, for $i=1,2$  and \\& & & & $|e|=\max_{i=1,2} \{|p_i(e)|\}$\\
	Normal product: & $e\in E(H_1\strmin H_2)$ & if and only if & $(i)$ & $e\in E(H_1\Box H_2)$ or\\
	& & & $(ii)$ & $p_i(e)\subseteq e_i \in E(H_i)$, for $i=1,2$  and \\& & & & $|e|=|p_i(e)|=\min_{j=1,2}\{|e_j|\}$
\end{tabular}
\end{center}

For other equivalent definitions, see \cite{HOS-12}. Note, if $H_1$ and
$H_2$ are simple graphs, then the normal and strong (hypergraph) product
coincides with the usual strong \emph{graph} product \cite{Hammack:2011a}. 
The edges, henceforth, of
the normal and the strong product, fulfilling Condition $(i)$ are called
Cartesian edges w.r.t. the factorization $H_1\boxtimes H_2$, and the other
edges are called non-Cartesian w.r.t. $H_1\boxtimes H_2$, $\boxtimes\in
\{\strmin, \strmax\}$, see also Figure \ref{fig:Exmpl}. 

\begin{figure}[tbp]
  \centering
  \subfigure[Shown are the non-Cartesian edges of the normal product $H_1\protect\strmin H_2$.]{
    \label{fig:Labelname1}
    \includegraphics[bb=178 387 579 667, width=0.35\textwidth]{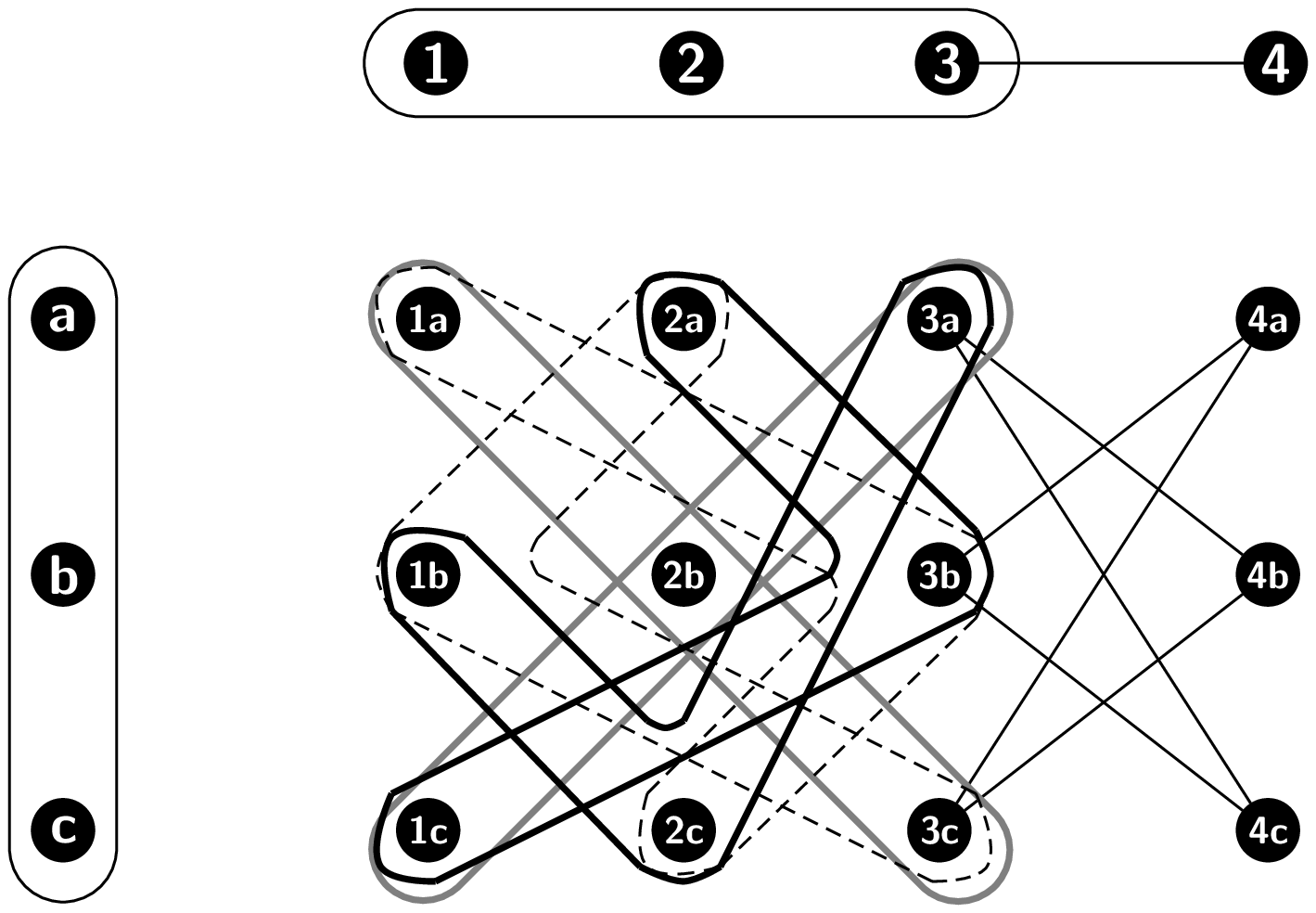}
  }  $\qquad\qquad$ 
  \subfigure[Shown are the non-Cartesian edges of the strong product $H_1\protect\strmax H_2$.]{
    \label{fig:Labelname2}
    \includegraphics[bb=178 387 579 667, width=0.35\textwidth]{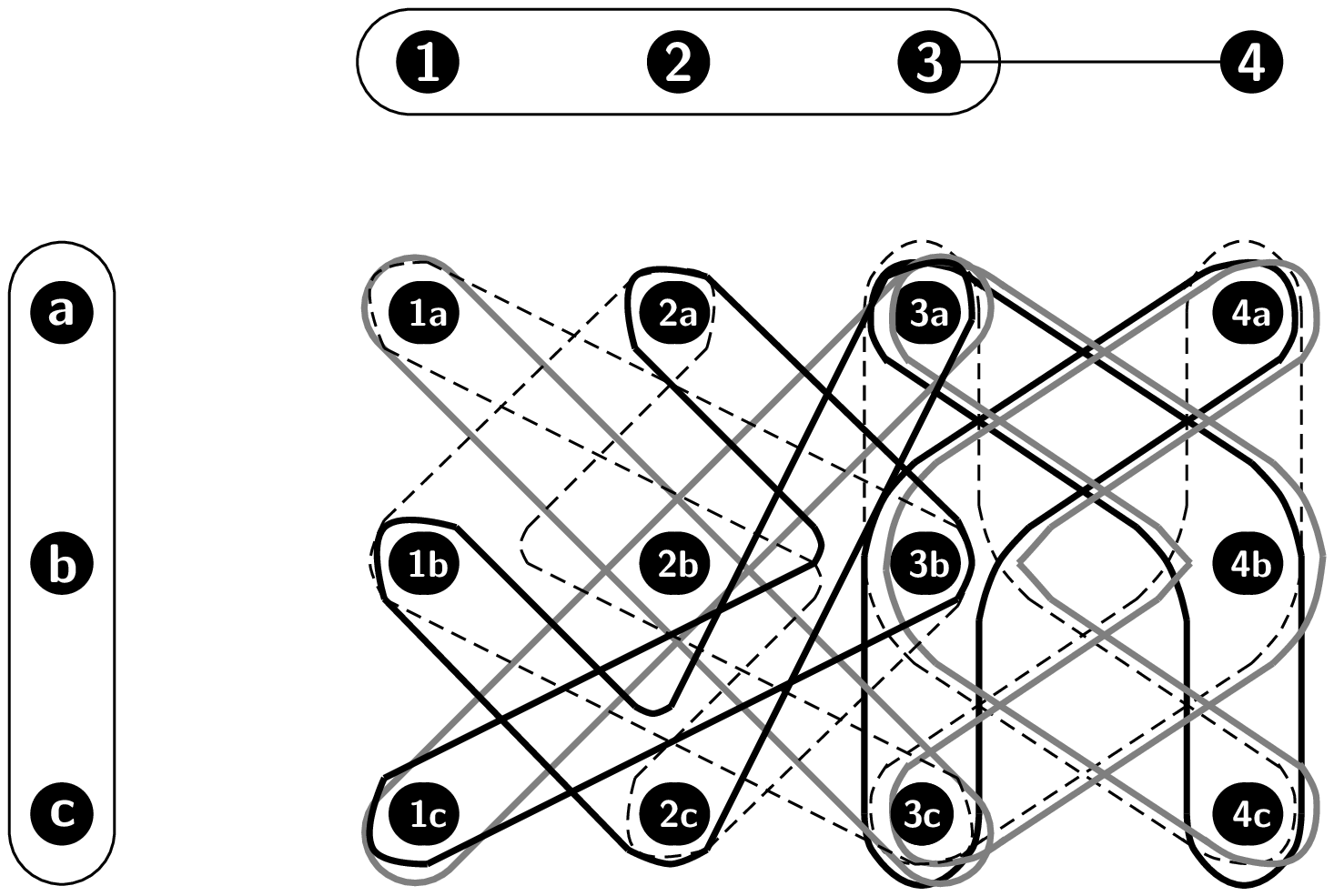}
  }\\
  \centering
  \subfigure[Shown is the Cartesian product $H_1\Box H_2$.]{
    \label{fig:Labelname3}
        \includegraphics[bb=178 387 579 667, width=0.35\textwidth]{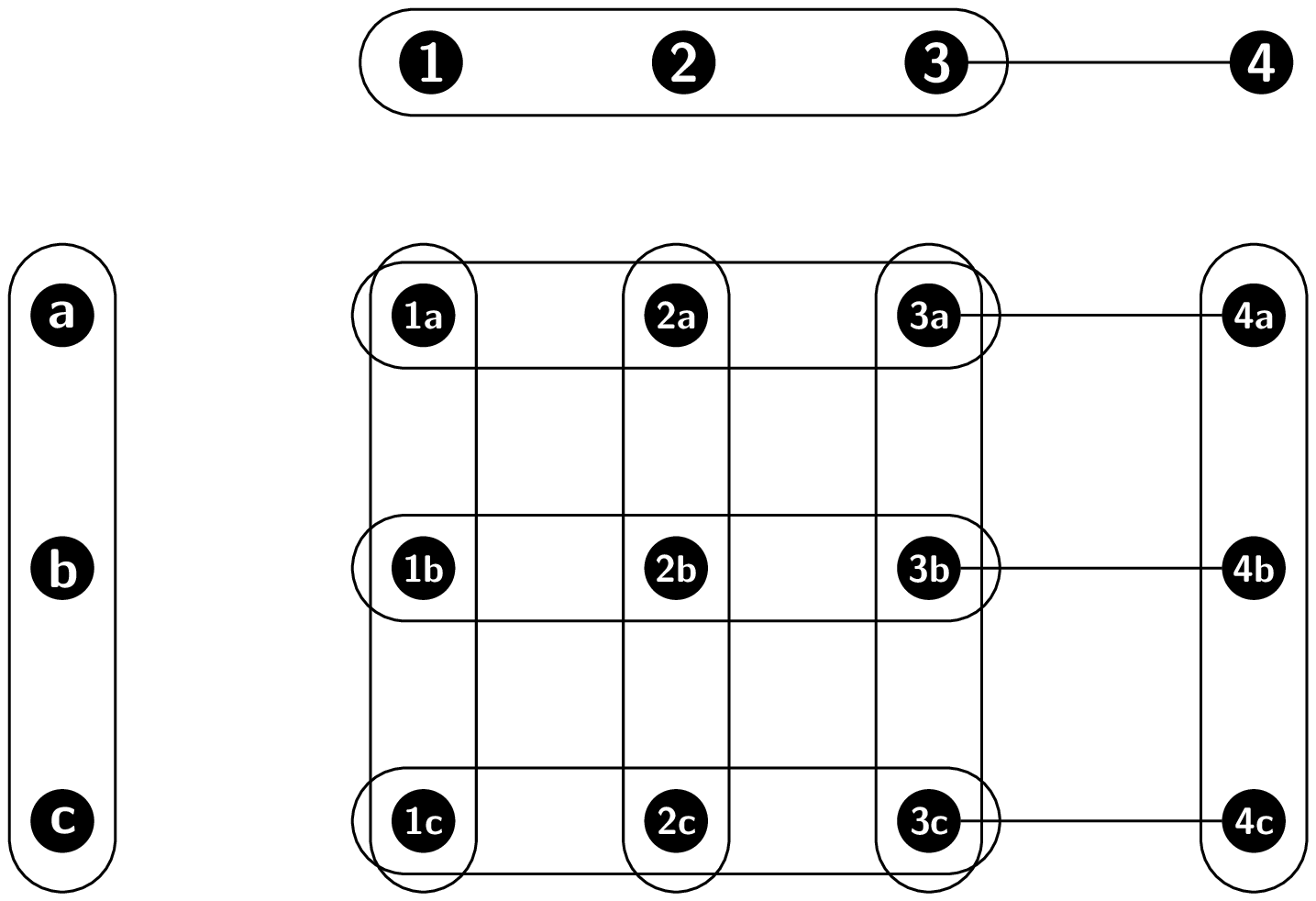}
  } $\qquad\qquad$
  \subfigure[Shown is the 2-section $\ensuremath{[}H_1\ensuremath{]}_2  
	\boxtimes \ensuremath{[}H_2\ensuremath{]}_2 = \ensuremath{[}H_1\protect\strmin H_2\ensuremath{]}_2=
				\ensuremath{[}H_1\protect\strmax H_2\ensuremath{]}_2$]{
    \label{fig:Labelname4}
        \includegraphics[bb=178 387 579 667, width=0.35\textwidth]{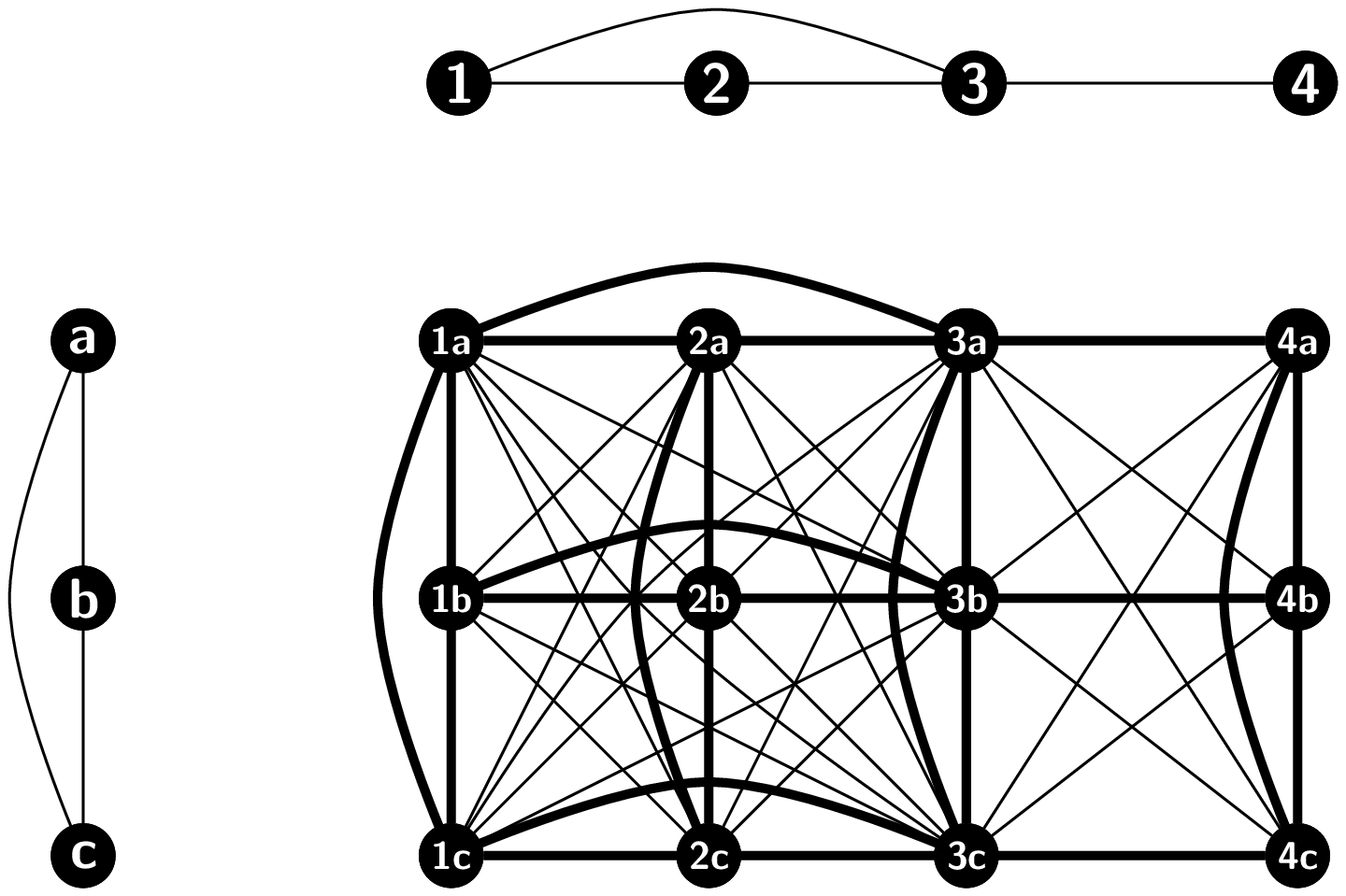}
  }
  \caption{Depicted are the Cartesian and non-Cartesian edges of the different products under 
						investigation. The non-Cartesian edges are drawn in different line-styles, to improve
						visualization. The hypergraph factors $H_1$ and $H_2$ are not thin, 
						and thus neither $H_1\protect\strmin H_2$ nor  $H_1\protect\strmax H_2$ is. }
  \label{fig:Exmpl}
\end{figure}

\begin{rem}
	For the normal product $H=H_1\strmin H_2$ and an edge $e\in E(H)$ holds, 
	if $p_i(e)\subseteq
	e_i\in E(H_i)$ then $|e|\leq |e_i|$. In particular, $p_i(e)\subseteq
	e_i\in E(H_i)$ and $|e|= |e_i|$ implies that $p_i(e)=e_i \in E(H_i)$, $i\in\{1,2\}$.

	For the strong product $H=H_1\strmax H_2$ and an edge $e\in E(H)$ holds, 
	 if $p_i(e)= e_i\in
	E(H_i)$ then $|e|\geq |e_i|$. In particular, $p_i(e)= e_i\in E(H_i)$ and
	$|e|= |e_i|$ implies that $p_i(x)\neq p_i(y)$ for all $x, y\in e$ with
	$x\neq y$, $i\in\{1,2\}$.
\label{REM}
\end{rem}

These three hypergraph products are associative and commutative, thus the
product of finitely many factors is well defined.  
The one-vertex hypergraph $K_1$ without edges
serves as unit element for the Cartesian, normal and strong product, that
is, it holds the \emph{trivial} product representation 
$K_1\circledast H \cong H$, for all $H$ and $\circledast\in\{\Box,
\strmin, \strmax\}$. 
A hypergraph is \emph{prime} w.r.t. $\circledast\in\{\Box, \strmin, \strmax\}$ 
if it has only a trivial product representation. The
Cartesian, normal and strong product of connected hypergraphs is always
connected \cite{HOS-12}. Moreover, it is known that every connected
hypergraph $H=(V,E)$ has a unique prime factor decomposition
w.r.t. (weak) Cartesian product \cite{Imrich67:Mengensysteme,
OstHellmStad11:CartProd}. Furthermore, the number $k$ of Cartesian prime
factors of $H=(V,E)$ is bounded by $\log_2(|V|)$, since every Cartesian
product of $k$ non-trivial hypergraphs has at least $2^k$ vertices. 

Having associativity we can conclude, that a vertex $x$ in these three
products $\circledast_{i\in I} H_i$, $\circledast\in\{\Box, \strmin,
\strmax\}$ is properly ``coordinatized'' by the vector $(x_1, \dots,
x_{|I|})$ whose entries are the vertices $x_i$ of its factors $H_i$. Two
adjacent vertices in the Cartesian product, respectively vertices of a
Cartesian edge in the normal and the strong product, therefore differ in
exactly one coordinate. Moreover, the coordinatization of a product is
equivalent to a (partial) edge coloring of $H$ in which edges $e$ share the
same color $c(e) = k$ if all $x, y\in e$ differ only in the value of a
single coordinate $k$, i.e., if $x_i = y_i$, $i \neq k$ and $x_k \neq y_k$.
This colors the Cartesian edges of $H$ (with respect to the given product
representation). It is easy to see, that for each color $k$ the partial hypergraph
$(V',E')$ with $E' = \{e \in E(H) | c(e) = k\}$ as the set of edges with color
$k$ and  $V'=\cup_{e\in E'} e$ spans $H=(V,E)$, that is, $V' =V$. 

For a given vertex $w\in V(H)$, $H=\circledast_{i\in I} H_i$ the
\emph{$H_j$-layer (through $w$)} is the induced partial hypergraph of $H$   
\[ H_j^w = \left\langle \{v\in V(H) \mid p_k(v)=p_k(w)
  \text{ for } k\neq j \}\right\rangle.\]
For $\circledast\in\{\Box,\strmin,\strmax\}$, we have
$H_j^w\cong H_j$ for all $j\in I,\, w\in V(H)$ \cite{HOS-12}.

Furthermore, for sake of convenience, we introduce the following notations.
Let $H_1$ and $H_2$ be hypergraphs and $\circledast\in\{\Box,\strmin,\strmax\}$.
For $H_1 \circledast H_2$ let $e_i\in E(H_i), i=1,2$
and define 
$$e_1 \circledast e_2 := (e_1,\{e_1\})\circledast (e_2,\{e_2\}).$$
Note, for $\circledast \in \{\Box, \strmin, \strmax\} $ holds $E(e_1
\circledast e_2) \subseteq E(H_1 \circledast H_2)$. Moreover, for an
arbitrary subset $E'\subseteq E(H_1)$ and $x\in V(H_2)$ we denote by
$E'\times \{x\}:= \{e\times \{x\} \mid e\in E'\}$. For later reference we
remark, since $K_1$ is the unit element for $\circledast$ we can rewrite
$E'\times \{x\} = E((V', E') \circledast (x,\emptyset))$ where $V'=\cup_{e
\in E'} e$.

We now give several useful results, that will be needed later on. 

\begin{lem}[\cite{HOS-12}]  
	\label{lem:2sectionStrong}
  The $2$-section of the product $H' \circledast H''$, $\circledast\in\{\Box,\strmin, \strmax\}$ 
	is the respective graph product 
  of the $2$-section of $H'$ and $H''$, more formally:
  $$[H'  \circledast H'']_2 =[H']_2  \circledast [H'']_2.$$
\end{lem}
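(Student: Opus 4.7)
The plan is to verify the equality $[H'\circledast H'']_2 = [H']_2 \circledast [H'']_2$ for each $\circledast \in \{\Box,\strmin,\strmax\}$ by double inclusion on the edge sets; the two sides agree on the vertex set $V(H')\times V(H'')$ directly from the definitions of the hypergraph products and of the $2$-section. For the Cartesian case, a hyperedge of $H'\Box H''$ has the form $e'\times\{y\}$ or $\{x\}\times e''$, so two distinct vertices $(a,b)$ and $(c,d)$ share such a hyperedge iff either $b=d$ and $\{a,c\}\subseteq e'$ for some $e'\in E(H')$, or dually $a=c$ and $\{b,d\}\subseteq e''$ for some $e''\in E(H'')$. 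That is precisely the adjacency rule for the Cartesian graph product $[H']_2 \Box [H'']_2$.

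For $\circledast\in\{\strmin,\strmax\}$, the Cartesian hyperedges contribute exactly the adjacencies of the above form, so what remains is to show that the non-Cartesian hyperedges contribute exactly the ``diagonal'' edges of the strong graph product, namely pairs $\{(a,b),(c,d)\}$ with $a\neq c$ adjacent in $[H']_2$ and $b\neq d$ adjacent in $[H'']_2$. The forward direction is immediate from the definitions: if $(a,b),(c,d)$ lie in a non-Cartesian hyperedge $e$, then $a,c\in p_1(e)\subseteq e_1\in E(H')$ and $b,d\in p_2(e)\subseteq e_2\in E(H'')$, which yields the required diagonal adjacency in $[H']_2 \circledast [H'']_2$.

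The reverse direction, i.e.\ producing a non-Cartesian hyperedge of $H'\circledast H''$ through a prescribed diagonal pair $(a,b),(c,d)$, is the main technical step. Given $e_1\in E(H')$ with $a,c\in e_1$ and $e_2\in E(H'')$ with $b,d\in e_2$, I would construct $e$ as the graph of a suitable function $f\colon e_1\to e_2$ sending $a\mapsto b$ and $c\mapsto d$. For $\strmax$, assume WLOG $|e_1|\geq|e_2|$ and extend $f$ to a surjection, which is possible since $|e_1|-2\geq|e_2|-2\geq 0$; the resulting $e$ satisfies $p_i(e)=e_i$ and $|e|=|e_1|=\max\{|e_1|,|e_2|\}$, so Condition~(ii) of the strong-product definition holds. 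For $\strmin$, assume WLOG $|e_1|\leq|e_2|$ and extend $f$ to an injection, which is possible because $a\neq c$, $b\neq d$, and $|e_1|\leq|e_2|$; then $p_1(e)=e_1$ and $p_2(e)\subseteq e_2$ with $|e|=|p_i(e)|=\min\{|e_1|,|e_2|\}$, verifying Condition~(ii) of the normal product. The only real obstacle is the cardinality bookkeeping prescribed by Remark~\ref{REM}, which is dispatched by the two WLOG case splits above; the rest is routine verification.
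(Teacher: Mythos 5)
The paper does not prove this lemma itself; it is imported from \cite{HOS-12} as a known result, so there is no in-paper argument to compare yours against. Judged on its own terms, your double-inclusion proof is correct: the Cartesian case, the forward inclusion, and the construction of a non-Cartesian hyperedge through a prescribed diagonal pair as the graph of a surjection (for $\strmax$) or injection (for $\strmin$) extending $a\mapsto b$, $c\mapsto d$ all go through, with the cardinality conditions of Condition~(ii) verified exactly as you describe. One small imprecision: for $\strmax$ with $|e_1|>|e_2|$, a non-Cartesian hyperedge $e$ is only injective under the projection to the larger factor, so it may contain two distinct vertices $(a,b)$ and $(c,b)$ agreeing in the second coordinate; such a pair is not a ``diagonal'' edge but a Cartesian edge of $[H']_2\strmax[H'']_2$ (via $a,c\in e_1$), so the forward inclusion still holds --- you should just not claim that non-Cartesian hyperedges contribute \emph{exactly} the diagonal adjacencies. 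For $\strmin$ the issue does not arise, since $|e|=|p_1(e)|=|p_2(e)|$ forces both projections to be injective on $e$.
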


\begin{lem}[\cite{HOS-12}] 
   \label{lem:simple}
	 The product $H' \circledast H''$, $\circledast\in\{\Box,\strmin, \strmax\}$ 
	  of simple hypergraphs $H'$ and $H''$ is simple.
\end{lem}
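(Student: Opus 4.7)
The plan is to reduce the simplicity of $H'\circledast H''$ — namely, $|e|\geq 2$ for every edge $e$ and no edge properly contained in another — to casework on whether the edges involved are Cartesian or non-Cartesian, and on their respective directions.

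First I would dispatch the size bound. For a Cartesian edge $e$ one has $|e|=|p_i(e)|\geq 2$ for the unique $i$ with $p_i(e)\in E(H_i)$, by simplicity of $H_i$. For a non-Cartesian edge the defining max/min identity gives $|e|=|p_i(e)|$ for at least one $i$, and $p_i(e)$ is, or has the same size as, an edge of the simple factor $H_i$; hence $|e|\geq 2$.

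Next I would address non-containment: assume $e\subseteq f$ and argue $e=f$ by cases. If both edges are Cartesian in the same direction $i$, the $j$-projections ($j\neq i$) are singletons which must coincide, while $p_i(e)\subseteq p_i(f)$ are nested edges of the simple $H_i$, hence equal. Cartesian edges in distinct directions cannot be nested, as one projection would have to be both a singleton and of size $\geq 2$. If both edges are non-Cartesian, project $e\subseteq f$ onto each factor and use simplicity of the $H_i$ — together with Remark~\ref{REM} in the normal case — to conclude that the corresponding projections (or their enveloping edges) agree; the max/min identity then yields $|e|=|f|$.

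The hard part will be the mixed case in which, say, $e$ is Cartesian in direction $1$ and $f$ is non-Cartesian. Simplicity of $H_1$ still forces $p_1(e)=p_1(f)$, pinning down $|e|=|p_1(f)|$. For $\strmax$ I would then split according to whether the maximum in $f$ is realized at index $1$ or index $2$: in the former subcase $|e|=|f|$ gives $e=f$, but then $p_2(f)=p_2(e)$ is a singleton, contradicting $p_2(f)\in E(H_2)$; in the latter, $|p_2(f)|=|f|$ makes $p_2$ injective on $f$, so $e$ — whose $p_2$-image is a single vertex — can occupy at most one vertex of $f$, contradicting $|e|\geq 2$. The normal case will be handled analogously via Remark~\ref{REM}. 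The reverse mixed configuration ($e$ non-Cartesian inside a Cartesian $f$) is immediate, since $f$ has a singleton projection in some direction while $e$ has projections of size $\geq 2$ in both.
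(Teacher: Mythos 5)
The paper offers no proof of this lemma at all: it is imported verbatim from \cite{HOS-12}, so there is nothing internal to compare your argument against. On its own merits, your case analysis is correct and complete. The size bound is immediate in all three products, the purely Cartesian and purely non-Cartesian cases for $\strmax$ reduce to simplicity of the factors applied to the projections, and your treatment of the mixed case is exactly right: for $\strmax$ the dichotomy on where the maximum is attained yields either a singleton edge-projection or an injective $p_2$ on $f$ that crushes $e$ to a point, and for $\strmin$ the mixed case is even easier since \emph{both} projections are injective on a non-Cartesian edge. The one spot that deserves the extra care your parenthetical already hints at is the nested non-Cartesian case for $\strmin$: there $p_i(e)\subseteq p_i(f)$ are in general only \emph{subsets} of edges, so simplicity of $H_i$ cannot be applied to them directly; one must first invoke Remark~\ref{REM} to identify an index $i$ with $p_i(e)=e_i\in E(H_i)$, then use $e_i\subseteq f_i$ and simplicity to force $e_i=f_i$, whence $p_i(f)=p_i(e)$ and $|f|=|p_i(f)|=|e|$. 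With that step made explicit, the argument is a clean, self-contained verification of a fact the paper merely cites.
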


\begin{lem}[Distance Formula \cite{HOS-12}]
  \label{lem:distH2}
  Let $H = (V,E)$ be a hypergraph and $x,y\in V$.
  Then the distances between $x$ and $y$
  in $H$ and in $[H]_2$ are the same.
\end{lem}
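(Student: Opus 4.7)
The plan is to prove the distance formula by establishing both inequalities $d_{[H]_2}(x,y)\leq d_H(x,y)$ and $d_H(x,y)\leq d_{[H]_2}(x,y)$ separately.

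The easy direction is $d_{[H]_2}(x,y)\leq d_H(x,y)$. Given a shortest path $P_{x,y}=(v_0,e_1,v_1,\dots,e_k,v_k)$ in $H$ with $v_0=x$ and $v_k=y$, every consecutive pair $v_{i-1},v_i$ lies in $e_i$ and hence is adjacent in the $2$-section by definition of $[H]_2$. Since the vertices $v_0,\dots,v_k$ are pairwise distinct, $v_0,v_1,\dots,v_k$ is a path of length $k$ in $[H]_2$, which yields $d_{[H]_2}(x,y)\leq k = d_H(x,y)$.

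For the reverse inequality $d_H(x,y)\leq d_{[H]_2}(x,y)$, I would start with a shortest path $v_0,v_1,\dots,v_k$ in $[H]_2$ from $x$ to $y$ and lift it back to a path in $H$. For each $i\in\{1,\dots,k\}$, the adjacency of $v_{i-1}$ and $v_i$ in $[H]_2$ guarantees a hyperedge $e_i\in E(H)$ with $\{v_{i-1},v_i\}\subseteq e_i$; pick one such $e_i$ for every $i$. The candidate sequence $(v_0,e_1,v_1,\dots,e_k,v_k)$ has distinct vertices because it comes from a shortest path in the $2$-section, so it remains to verify that the chosen edges can be taken distinct.

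The main obstacle is precisely this edge-distinctness requirement in the hypergraph path definition. I would handle it by a shortcut argument: suppose $e_i=e_j$ for some $1\leq i<j\leq k$. Then $v_{i-1}$ and $v_j$ both belong to the common hyperedge $e_i=e_j$, so they are adjacent in $[H]_2$. Replacing the segment from $v_{i-1}$ to $v_j$ by the single $[H]_2$-edge $\{v_{i-1},v_j\}$ produces a walk in $[H]_2$ from $x$ to $y$ of length $k-(j-i)<k$, contradicting that the original path was a shortest one. Hence the $e_i$ are pairwise distinct and the lifted sequence is a genuine path in $H$ of length $k$, giving $d_H(x,y)\leq k=d_{[H]_2}(x,y)$. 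Combining the two inequalities completes the proof.
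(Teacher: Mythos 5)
Your proof is correct. The paper does not actually prove this lemma---it is stated as an imported result from \cite{HOS-12}---so there is no in-paper argument to compare against; your two-inequality argument is the natural self-contained proof. The one point that genuinely needs care, given the paper's definition of a hypergraph path (pairwise distinct vertices \emph{and} pairwise distinct hyperedges), is the edge-distinctness of the lifted path, and your shortcut argument handles it correctly: if $e_i=e_j$ with $i<j$ then $v_{i-1},v_j\in e_i$ are adjacent in $[H]_2$, yielding a strictly shorter $x$--$y$ walk in the $2$-section and contradicting minimality.
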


As for the strong graph product $G=G'\boxtimes G''$ holds that 
$G$ is thin if and only if $G'$ and $G''$ are thin \cite{Hammack:2011a}, 
we obtain together with the latter lemma the following results. 

\begin{cor}
	Let $H=H' \boxtimes H''$, $\boxtimes\in\{\strmax, \strmin\}$. 
	Then it holds $N^H[x] = N^{[H]_2}[x]$. Moreover, 
	$H$ is thin if and only if $[H]_2$ is thin if and only if $H'$ and $H''$
	are thin. 
	\label{cor:thin}
\end{cor}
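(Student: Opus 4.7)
The plan is to observe first that the identity $N^H[x] = N^{[H]_2}[x]$ is actually not special to products: by definition two vertices are adjacent in $H$ exactly when they share a hyperedge, and by construction this is precisely the adjacency relation in $[H]_2$. So I would open the proof by noting that adjacency in a hypergraph and in its $2$-section coincide, and conclude $N^H[x] = N^{[H]_2}[x]$ directly from the definitions; the hypothesis that $H$ is a strong/normal product is not needed at this step.

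Given this, the first equivalence in the second statement is immediate: since thinness is defined via the closed neighborhoods, and these agree in $H$ and $[H]_2$, we have $H$ thin if and only if $[H]_2$ thin. The only nontrivial content is the equivalence with $H'$ and $H''$ being thin, and this I would handle by reducing to the known graph case. Concretely, by Lemma \ref{lem:2sectionStrong} we have $[H]_2 = [H']_2 \boxtimes [H'']_2$, and the paragraph preceding the corollary cites the fact that a strong graph product is thin if and only if both of its factors are thin. Hence $[H]_2$ is thin iff $[H']_2$ and $[H'']_2$ are thin. Applying the first equivalence again (now to $H'$ and $H''$ in place of $H$), this is in turn equivalent to $H'$ and $H''$ being thin, closing the chain of equivalences.

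The only small obstacle is making sure the same argument covers both $\boxtimes = \strmin$ and $\boxtimes = \strmax$. This is handled uniformly: Lemma \ref{lem:2sectionStrong} applies to both products, and both yield the same $2$-section, which is the usual strong graph product $[H']_2 \boxtimes [H'']_2$; once we are in the graph setting, the strong graph product result finishes the argument without having to distinguish cases. I therefore do not expect any real difficulty beyond assembling these three ingredients in the right order.
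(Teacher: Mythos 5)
Your proposal is correct and assembles exactly the ingredients the paper intends: the adjacency/neighborhood identity between $H$ and $[H]_2$ (which the paper gets from the Distance Formula, Lemma~\ref{lem:distH2}, and you get even more directly from the definitions), Lemma~\ref{lem:2sectionStrong} to write $[H]_2=[H']_2\boxtimes[H'']_2$, and the cited fact that a strong graph product is thin iff its factors are. This matches the paper's derivation, so no further comment is needed.
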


For later reference we state the next lemma. 

\begin{lem}
  Let $H_1,H_2$ be two hypergraphs.
  For the number $\NrMin$ of non-Cartesian edges 
  in $H=H_1\strmin H_2$ holds
  \[ \NrMin := |E(H_1\strmin H_2)\setminus E(H_1\Box H_2) | =  
	\sum_{e_1\in E_1,e_2\in E_2}\frac{(\max\{|e_1|,|e_2|\})!}{\big||e_1|-|e_2|\big|!}.\]
  For the number $\NrMax$ of non-Cartesian edges 
  in $H=H_1\strmax H_2$ holds
  \[ \NrMax := |E(H_1\strmax H_2)\setminus E(H_1\Box H_2) | =  
	  \sum_{e_1\in E_1,e_2\in E_2}(\min\{|e_1|,|e_2|\})!S_{\max\{|e_1|,|e_2|\},\min\{|e_1|,|e_2|\} },\]
	where $S_{n,k}$ denotes the the Stirling number of the second kind
	$ S_{n,k} = \frac{1}{k!}\sum_{j=0}^k (-1)^{k-j} \binom{k}{j} j^n. 	$
 \label{lem:Nr-nonCart}
\end{lem}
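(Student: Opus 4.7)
The plan is to classify each non-Cartesian edge according to the pair of factor edges $(e_1,e_2)\in E_1\times E_2$ that witnesses condition (ii) of the respective product definition, and then count those edges by a direct combinatorial bijection. Because the formula to be proved is symmetric in $|e_1|$ and $|e_2|$, one may in each case WLOG fix an ordering of the two sizes and double the result by summing over ordered pairs.

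\textbf{Strong product.} Fix $(e_1,e_2)\in E_1\times E_2$ with $|e_1|\ge |e_2|$. For a non-Cartesian edge $e$ of $H_1\strmax H_2$ with $p_i(e)=e_i$, condition (ii) forces $|e|=|e_1|$. By Remark~\ref{REM}, the first coordinates of the vertices in $e$ are pairwise distinct, so $e$ is the graph of a function $f\colon e_1\to e_2$; the equality $p_2(e)=e_2$ additionally forces $f$ to be surjective, and conversely every surjection $e_1\twoheadrightarrow e_2$ produces such an edge. Because the factors are simple, $e_i=p_i(e)$ is recovered uniquely from $e$, so the assignment $e\mapsto (e_1,e_2,f)$ is a bijection. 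The number of surjections from an $|e_1|$-set to an $|e_2|$-set equals $|e_2|!\,S_{|e_1|,|e_2|}$, which is precisely the summand for this ordering; summing over all ordered pairs yields the stated formula for $\NrMax$.

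\textbf{Normal product.} Fix $(e_1,e_2)\in E_1\times E_2$ with $|e_1|\le |e_2|$. For a non-Cartesian edge $e$ of $H_1\strmin H_2$ with $p_i(e)\subseteq e_i$, condition (ii) gives $|e|=|e_1|$. Remark~\ref{REM} then implies $p_1(e)=e_1$ and $|p_2(e)|=|e_1|$ with $p_2(e)\subseteq e_2$, so $e$ is the graph of an injection $\phi\colon e_1\hookrightarrow e_2$; conversely each injection yields such an edge. Counting injections from an $|e_1|$-set into an $|e_2|$-set produces $|e_2|!/(|e_2|-|e_1|)!$, matching the summand. Summing over ordered pairs gives the expression for $\NrMin$.

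\textbf{Main obstacle.} The delicate point in both cases is to establish that the correspondence $e\leftrightarrow (e_1,e_2,f)$ (resp.\ $(e_1,e_2,\phi)$) really is a bijection, so that no non-Cartesian edge is counted with the wrong multiplicity. For the strong product this is immediate since simpleness pins down $e_i=p_i(e)$. For the normal product, when $|e_1|<|e_2|$ the projection $p_2(e)$ can be a proper subset of $e_2$, so one must argue carefully (again via Remark~\ref{REM} combined with the simpleness assumption on the factors, which rules out one edge containing another) that the parameterization by ordered pairs still enumerates each edge exactly once. Once this is settled, the two equalities follow by assembling the per-pair counts.
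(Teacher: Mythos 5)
Your overall strategy --- characterizing each non-Cartesian edge as the graph of a surjection (for $\strmax$) or an injection (for $\strmin$) between witness edges of the factors, and then counting such maps pair by pair --- is exactly the route the paper takes. The $\NrMax$ half of your argument is complete and correct: condition $(ii)$ of the strong product forces the witness pair to be $(p_1(e),p_2(e))$, so the correspondence $e\leftrightarrow(e_1,e_2,f)$ really is a bijection and the surjection count goes through. (One small slip in the framing: the sum in the statement is already over ordered pairs in $E_1\times E_2$, so nothing is ``doubled''; the WLOG on sizes only decides the direction of the map for each fixed pair.)

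The gap is precisely the ``delicate point'' you flag for $\strmin$, and the resolution you sketch does not close it. Simpleness forbids one edge of a factor from containing another, but it does not forbid two distinct edges of $H_2$ from sharing a common $|e_1|$-subset, and that is exactly what causes over-counting. Concretely, let $E(H_1)=\{\{a,b\}\}$ and let $H_2$ have edges $\{1,2,3\}$ and $\{1,2,4\}$ (both factors simple and connected). The edge $e=\{(a,1),(b,2)\}$ satisfies condition $(ii)$ of the normal product with witness $(\{a,b\},\{1,2,3\})$ \emph{and} with witness $(\{a,b\},\{1,2,4\})$, since $p_2(e)=\{1,2\}$ is a proper subset of both; hence $e$ contributes to two summands of $\sum_{e_1,e_2}|e_2|!/(|e_2|-|e_1|)!$. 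In this example there are $10$ distinct non-Cartesian edges while the formula evaluates to $12$, so the parameterization by ordered pairs does \emph{not} enumerate each edge exactly once, and no appeal to Remark~\ref{REM} or to simpleness can repair this: the identity itself over-counts whenever $|e_1|<|e_2|$ and two distinct edges of a factor overlap in at least $\min\{|e_1|,|e_2|\}$ vertices. It is worth noting that the paper's own proof makes the same silent leap from ``every non-Cartesian edge arises from some injection'' to ``the non-Cartesian edges are in bijection with the triples $(e_1,e_2,\phi)$''; your write-up at least names the missing step, but as it stands neither argument establishes the $\NrMin$ formula, which would need either an extra hypothesis (distinct edges of each factor intersect in fewer than $\min$-many vertices, as happens automatically for graphs) or an inclusion--exclusion correction.
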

\begin{proof}
 To prove validity of the formula for $\NrMin$, we show that $e$ is a
 non-Cartesian edge in $H_1\strmin H_2$ if and only if there are edges
 $e_1\in E(H_1)$ and $e_2\in E(H_2)$ such that 
 $p_1(x)\mapsto p_2(x)$ for
 all $x\in e$  defines an injective mapping $e_1\to e_2$ whenever
 $|e_1|\leq |e_2|$ and else that $p_2(x)\mapsto p_1(x)$ for all
 $x\in e$ defines an injective mapping $e_2\to e_1$.

 Let $e$ be a non-Cartesian edge in $H_1\strmin H_2$. Clearly, by
 definition of the normal product, there are edges $e_1\in E(H_1)$ and
 $e_2\in E(H_2)$ with $e\in E(e_1\strmin e_2)$. Assume w.l.o.g. $|e_1|\leq
 |e_2|$, otherwise interchange the role of $e_1$ and $e_2$. By definition
 of the normal product it holds $|p_1(e)|=|p_2(e)|=|e|=|e_1|\leq |e_2|$. 
 Thus, we have $p_1(e)=e_1\in E(H_1)$.
 Therefore, 
 we can conclude that all vertices of $e$ differ in each coordinate, and
 thus, $p_1(x)\neq p_1(x')$ implies $p_2(x)\neq p_2(x')$ for all distinct
 vertices $x,x'\in e$. Since $p_2(e)\subseteq e_2$, it follows that
 $p_1(x)\mapsto p_2(x)$, $x\in e$ indeed defines an injective mapping
 $e_1\to e_2$.
 Conversely, if there are edges $e_1\in E(H_1)$ and $e_2\in E(H_2)$ such
 that w.l.o.g. $p_1(x)\mapsto p_2(x)$, $x\in e$ defines an injective
 mapping $e_1\to e_2$, we can conclude that $p_1(e)=e_1$ and
 $p_2(e)\subseteq e_2$. Since $p_1(x)\mapsto p_2(x)$, $x\in e$ is a
 mapping, we have $|e|=|e_1|$ and by injectivity, it follows
 $|e_1|=|p_1(e)|=|p_2(e)|\leq |e_2|$. Hence, $e$ satisfies the condition
 $(ii)$ in the definition of the edges in the normal product and thus, $e\in
 E(H_1\strmin H_2)$.
 Finally, it is well-known, that for any two sets $N$, $M$
 with $|N|\leq |M|$ there are $\frac{|M|!}{(|M|-|N|)!}$ injective mappings 
 from $N$ to $M$. Applying this result to every pair of edges  $e_1\in E(H_1)$ and $e_2\in E(H_2)$
 the assertion for $\NrMin$ follows.

 To prove validity of the formula for $\NrMax$, we show that $e$ is a
 non-Cartesian edge in $H_1\strmax H_2$ if and only if there are edges
 $e_1\in E(H_1)$ and $e_2\in E(H_2)$ such that 
 $p_1(x)\mapsto p_2(x)$ for all $x\in e$
 defines a surjective mapping $e_1\to e_2$ whenever $|e_1|\geq |e_2|$ 
	and else that $p_2(x)\mapsto p_1(x)$ for all $x\in
 e$ defines a surjective mapping $e_2\to e_1$.

  Let $e$ be a non-Cartesian edge in $H_1\strmax H_2$. Clearly, by
 definition of the strong product, there are edges $e_1\in E(H_1)$ and
 $e_2\in E(H_2)$ with $e\in E(e_1\strmax e_2)$. Assume w.l.o.g. $|e_1|\geq
 |e_2|$, otherwise interchange the role of $e_1$ and $e_2$. By definition
 of the strong product it holds that $|e|=|e_1|$ and $p_1(e)=e_1$ which
 implies that $p_1(x)\neq p_1(x')$ for all distinct vertices $x, x'\in e$.
 Thus, $p_1(x)\mapsto p_2(x)$ indeed defines a mapping $e_1\to e_2$. Since
  $p_2(e)=e_2$, this mapping is surjective. 
  Conversely, if there are edges $e_1\in E(H_1)$ and $e_2\in E(H_2)$ such
 that w.l.o.g. $p_1(x)\mapsto p_2(x)$, $x\in e$ defines a surjective
 mapping $e_1\to e_2$ we can conclude that $p_1(e)=e_1$ and $p_2(e)=e_2$
 and thus, in particular that $|p_1(e)|=|e_1|$. Moreover, it follows that
 $|e|=|p_1(e)|$, since $p_1(x)\mapsto p_2(x)$ defines a mapping and
 moreover, $|p_2(e)|\leq |p_1(e)|=|e_1|$, since this mapping is surjective.
 Hence, $e$ satisfies the condition $(ii)$ in the definition of the edges
 in the strong product and thus, $e\in E(H_1\strmax H_2)$.
 Finally, it is well-known, that for any two sets $N$, $M$
 with $|N|\geq |M|$ there are $|M|!S_{|N|,|M|}$ surjective mappings from
 $N$ to $M$. Applying this result to every pair of edges $e_1\in E(H_1)$
 and $e_2\in E(H_2)$ the assertion for $\NrMax$ follows.
\end{proof}

\begin{rem}
	In the sequel of this paper, we will 
	use the symbol $\boxtimes$ for both products, that is, $\boxtimes\in \{\strmin, \strmax\}$,
	unless there is a risk of confusion.
\end{rem}

\section{The Cartesian Skeleton and PFD Uniqueness Results}

\subsection{The Cartesian Skeleton}

For graphs $G$, the key idea of finding the PFD with respect to the strong
product is to find the PFD of a subgraph $\skel(G)$ of $G$, the so-called
\emph{Cartesian skeleton}, with respect to the Cartesian product and
construct the prime factors of $G$ using the information of the PFD of
$\skel(G)$. This concept was first introduced for graphs by Feigenbaum and
Sch{\"a}ffer in \cite{FESC-92} and later on improved by Hammack and Imrich,
see \cite{HAIM-09}. Following the approach of Hammack and Imrich, one
removes edges in $G$ that fulfill so-called dispensability conditions,
resulting in a subgraph $\skel(G)$ that is the desired Cartesian skeleton.
The underlying concept of \emph{dispensability} as defined for graphs in
\cite{HAIM-09} can be generalized in a natural way for hypergraphs. 

\begin{defn}[Dispensability]
An edge $e\in E(H)$ is \emph{dispensable} in $H$
if there exists a vertex $z \in V(H)$ and distinct vertices $x,y \in e$ for which both
of the following statements hold:
\begin{enumerate}
\item $N[x] \cap N[y] \subset N[x] \cap  N[z] \text{ or } N[x] \subset N[z] \subset N[y]$
\item $N[x] \cap N[y] \subset N[y] \cap  N[z] \text{ or } N[y] \subset N[z] \subset N[x]$.
\end{enumerate}
\end{defn}

Note, the latter definition coincides with the one given in \cite{HAIM-09},
if $H$ is a simple graph. Now, we are able to define the Cartesian skeleton for
hypergraphs.

\begin{defn}[Cartesian Skeleton]
Let $D(H)\subseteq E(H)$ be the set of dispensable edges in a given hypergraph $H$.
The \emph{Cartesian skeleton} of a hypergraph $H$ is the
partial hypergraph $\skel[H] \subseteq H$ where all dispensable edges $D(H)$ are
removed from $H$, that is $V(\skel[H]) = V(H)$ and $E(\skel[H])=E(H) \setminus D(H)$.
\end{defn}

In the next theorem, we shortly summarize the results established by
Hammack and Imrich \cite{HAIM-09} concerning the Cartesian skeleton of
graphs and show in the sequel, that these results can easily be transferred
to hypergraphs by usage of its corresponding 2-sections.

\begin{thm}[\cite{HAIM-09}]
	Let $G = G_1 \boxtimes G_2$ be a strong product graph. 
	\begin{enumerate}
	\item If $G$ is thin then every non-dispensable edge $e \in E(G)$ is
	      Cartesian w.r.t. any factorization $G'_1 \boxtimes G_2'$ of $G$. 
	\item	If $G$ is connected, then $\skel(G)$ is connected.
	\item If $G_1$ and $G_2$ are thin graphs then $\skel(G_1\boxtimes G_2) =
	      \skel(G_1) \Box \skel(G_2).$
	\item Any isomorphism $\varphi: G \rightarrow H$, as a map $V (G)
	      \rightarrow V (H)$, is also an isomorphism $\varphi : \skel(G)
	      \rightarrow \skel(H)$.
	\end{enumerate}
	\label{thm:CartSk-hamm}
\end{thm}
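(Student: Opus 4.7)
This theorem collects results of Hammack and Imrich, so the plan is to give a structural sketch rather than redo the full case analysis of \cite{HAIM-09}. The fact that drives every item is that in a strong graph product $G = G_1 \boxtimes G_2$ closed neighborhoods factor as $N^G[(v_1,v_2)] = N^{G_1}[v_1] \times N^{G_2}[v_2]$; since dispensability is defined purely in terms of closed neighborhoods, each claim reduces to a statement about how these product neighborhoods behave under intersection and inclusion.

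I would dispose of (4) first, essentially for free: any isomorphism $\varphi$ sends closed neighborhoods bijectively onto closed neighborhoods, hence preserves all inclusions among them and so maps $D(G)$ onto $D(H)$. For (1), given a non-Cartesian edge $e = \{x,y\}$ of a thin product $G = G_1 \boxtimes G_2$, one has $x_i \neq y_i$ and $\{x_i,y_i\} \in E(G_i)$ for $i=1,2$. By the graph analogue of Corollary~\ref{cor:thin} both factors are thin, so in at least one coordinate there is a vertex strictly distinguishing the neighborhoods of $x_i$ and $y_i$; choosing the witness $z = (y_1,x_2)$ and unfolding the product formula for neighborhoods translates the two dispensability conditions into strict inclusions that hold in the respective factors.

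For (3), part (1) already forbids non-Cartesian edges from the skeleton, so it remains to check that a Cartesian edge $\{x,y\}$ with, say, $x_1 = y_1$ is dispensable in $G$ if and only if $\{x_2,y_2\}$ is dispensable in $G_2$. Again the product decomposition of neighborhoods, combined with thinness to fix the first coordinate of the witness, reduces the two-sided dispensability condition in $G$ to the corresponding condition in the factor $G_2$; the reverse direction is symmetric. The main obstacle is (2): there is no local reason why deleting every dispensable edge should leave the graph connected. The plan for this part is the induction of \cite{HAIM-09}: for each dispensable edge $xy$ with witness $z$, argue that $x$ and $y$ can be reconnected through $z$ in $\skel(G)$, using a monovariant such as $|N[x]\,\triangle\,N[y]|$ or the distance $d_G(x,y)$ to guarantee termination of the recursion. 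This inductive step is where essentially all of the work in \cite{HAIM-09} sits, and I expect it to be the only part that resists a short sketch.
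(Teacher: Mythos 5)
The paper does not prove this statement at all: Theorem~\ref{thm:CartSk-hamm} is imported verbatim from Hammack and Imrich \cite{HAIM-09} and used as a black box, the paper's actual contribution being the transfer of these facts to hypergraphs through the $2$-section (Lemma~\ref{lem:disp}, Corollary~\ref{cor:2-SectionCartSkel}, Lemma~\ref{lem:cart}, Proposition~\ref{prop:cartSkelcart}). So there is no in-paper argument to compare yours against; what you have written is a reconstruction of the cited source, and as such it follows the same line as the original: the product formula $N^G[(v_1,v_2)]=N^{G_1}[v_1]\times N^{G_2}[v_2]$, witnesses taken among the ``mixed'' vertices, item (4) for free, and the connectivity statement (2) correctly identified as the part that genuinely requires the inductive argument of \cite{HAIM-09}.

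One concrete inaccuracy in your sketch of item (1): the fixed witness $z=(y_1,x_2)$ does not always certify dispensability of a non-Cartesian edge $\{x,y\}$. With that witness, the first dispensability condition (via either disjunct) forces $N^{G_2}[x_2]\not\subseteq N^{G_2}[y_2]$ or $N^{G_1}[x_1]\subsetneq N^{G_1}[y_1]$; so in the case $N^{G_1}[y_1]\subsetneq N^{G_1}[x_1]$ and $N^{G_2}[x_2]\subsetneq N^{G_2}[y_2]$ (perfectly compatible with thinness) both disjuncts of condition (1) fail for $z=(y_1,x_2)$, while the other mixed vertex $z=(x_1,y_2)$ works. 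The correct argument is a short case analysis on how $N^{G_i}[x_i]$ and $N^{G_i}[y_i]$ compare in each factor (incomparable, or one strictly contained in the other), choosing $(y_1,x_2)$ or $(x_1,y_2)$ accordingly. This does not change the architecture of your proof, but as stated the single-witness claim is false. The rest --- (4) as an immediate consequence of dispensability being neighborhood-defined, (3) by reducing dispensability of a Cartesian edge to dispensability of its nontrivial projection, and the honest deferral of (2) --- is consistent with the source.
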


Since neighborhoods of vertices in a hypergraph and its 2-section are identical
by  Corollary \ref{cor:thin}  and dispensability is defined only in terms
of neighborhoods, we easily obtain the following lemma and corollary. 

\begin{lem}
	Let $H$ be a hypergraph. The edge $e\in E(H)$ is dispensable in $H$ if
	and only if there is an edge $e'\in E([H]_2)$ with $e'\subseteq e$ and
	$e'$ is dispensable in $[H]_2$.
	\label{lem:disp}
\end{lem}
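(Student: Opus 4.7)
The plan is to reduce the claim to the tautology that for every hypergraph $H$ and every vertex $v$ one has $N^H[v]=N^{[H]_2}[v]$: two distinct vertices share a hyperedge in $H$ if and only if the corresponding pair is an edge of $[H]_2$. Since the definition of dispensability is phrased purely in terms of closed neighborhoods, a triple $(x,y,z)$ satisfies conditions (1) and (2) in $H$ if and only if it satisfies them in $[H]_2$. This neighborhood identity is also recorded in Corollary~\ref{cor:thin} (there only for products, but the same argument is trivial in general).

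For the forward implication I would let $x,y\in e$ (distinct) and $z\in V(H)$ witness dispensability of $e$ in $H$, and set $e':=\{x,y\}$. Then $e'\subseteq e$, and since $x,y$ both lie in the hyperedge $e\in E(H)$, we have $e'\in E([H]_2)$. The very same witnesses $x,y,z$ now exhibit dispensability of $e'$ in $[H]_2$ by the neighborhood identity, which is exactly what is required.

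For the converse I would take $e'\in E([H]_2)$ with $e'\subseteq e$ that is dispensable in $[H]_2$. Because $[H]_2$ is a graph, $e'=\{u,v\}$ with $u\neq v$, so the pair of distinct elements of $e'$ demanded by the definition of dispensability is forced to be $\{u,v\}$ itself. Since $\{u,v\}\subseteq e$, these two vertices together with the third witness $z$ supplied for $e'$ in $[H]_2$ are also valid witnesses for $e$ in $H$, once again thanks to the equality $N^H[\cdot]=N^{[H]_2}[\cdot]$.

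I do not expect any genuine obstacle: the entire argument is a translation between $H$ and $[H]_2$ with the same triple of witnesses used on both sides. The only points to verify are the elementary observations that any two distinct elements of a hyperedge of $H$ form an edge of $[H]_2$, and that a graph edge contains exactly two vertices, which automatically supplies the pair of distinct witnesses needed in the definition of dispensability.
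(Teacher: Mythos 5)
Your proof is correct and follows exactly the route the paper intends: the paper derives Lemma~\ref{lem:disp} in one line from the observation that closed neighborhoods in $H$ and $[H]_2$ coincide and that dispensability is phrased purely in terms of neighborhoods, and your argument is simply the careful expansion of that observation (transferring the same witness triple $x,y,z$ between $H$ and $[H]_2$ in both directions). No gaps.
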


\begin{cor}
	For all hypergraphs $H$ holds:
		$[\skel(H)]_2= \skel([H]_2)$.
	\label{cor:2-SectionCartSkel}
\end{cor}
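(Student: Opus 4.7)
The plan is to show that $[\skel(H)]_2$ and $\skel([H]_2)$, which share the vertex set $V(H)$ by construction, have the same edge set. The key tool is Lemma \ref{lem:disp}, which I will use in its contrapositive reformulation: a hyperedge $e \in E(H)$ is non-dispensable in $H$ if and only if every $2$-section edge $e' \in E([H]_2)$ with $e' \subseteq e$ is non-dispensable in $[H]_2$. Both sides of the claimed equality are graphs on $V(H)$, so it suffices to prove equality of the edge sets.

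First I will establish the inclusion $E([\skel(H)]_2) \subseteq E(\skel([H]_2))$. Taking $\{u,v\} \in E([\skel(H)]_2)$, by the definition of the $2$-section there is some $e \in E(\skel(H)) = E(H)\setminus D(H)$ with $\{u,v\} \subseteq e$. Since $e$ is non-dispensable, the contrapositive reformulation of Lemma \ref{lem:disp} yields that every $2$-section pair contained in $e$, in particular $\{u,v\}$, is non-dispensable in $[H]_2$. Combined with $\{u,v\} \in E([H]_2)$ (which is immediate from $u,v \in e \in E(H)$), this gives $\{u,v\} \in E(\skel([H]_2))$.

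For the reverse inclusion, I will take $\{u,v\} \in E(\skel([H]_2))$, so $\{u,v\}$ is a non-dispensable edge of $[H]_2$ and, by the definition of the $2$-section, some $e \in E(H)$ contains $\{u,v\}$. The task is to exhibit such an $e$ that is non-dispensable, which would give $\{u,v\} \in E([\skel(H)]_2)$. By the forward direction of Lemma \ref{lem:disp}, every dispensable hyperedge around $\{u,v\}$ must contain some dispensable $2$-section pair; since $\{u,v\}$ itself is non-dispensable in $[H]_2$, such a pair must be different from $\{u,v\}$. The argument then reduces to ruling out the scenario in which every hyperedge containing $\{u,v\}$ is dispensable by virtue of some other pair, so that at least one witnessing hyperedge $e \ni \{u,v\}$ is actually non-dispensable.

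I expect this reverse inclusion to be the main obstacle. The forward direction falls out of Lemma \ref{lem:disp} almost immediately, because the non-dispensability of a hyperedge transfers to every $2$-section pair it contains. The reverse direction goes the other way around: it requires transferring the non-dispensability of a single $2$-section pair back to a non-dispensable witnessing hyperedge of $H$, which is the delicate structural step, and it is the only place where the interplay between hyperedges of $H$ and edges of $[H]_2$ goes beyond a direct application of Lemma \ref{lem:disp}.
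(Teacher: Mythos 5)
Your forward inclusion $E([\skel(H)]_2)\subseteq E(\skel([H]_2))$ is correct and is exactly the direction that falls out of Lemma~\ref{lem:disp}: a non-dispensable hyperedge contains only non-dispensable $2$-section pairs, so every pair inside a surviving hyperedge survives in $\skel([H]_2)$. For the reverse inclusion, however, you only \emph{name} the difficulty --- ``ruling out the scenario in which every hyperedge containing $\{u,v\}$ is dispensable by virtue of some other pair'' --- and never actually rule it out. That scenario is the entire content of the second inclusion, so as written the proposal establishes only half of the claimed equality. Lemma~\ref{lem:disp} gives no leverage here: it says a hyperedge is deleted as soon as it contains \emph{some} dispensable pair, and says nothing about whether a non-dispensable pair of $[H]_2$ must lie inside \emph{some} non-dispensable hyperedge of $H$. (The paper itself offers no argument either, presenting the corollary as an immediate consequence of Lemma~\ref{lem:disp}; only the inclusion you proved is immediate.)

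Worse, the scenario you hoped to exclude can actually occur, so the gap is not a routine omission. Take $V=\{a,b,c,z,w,w'\}$ and $E=\{\{a,b,c\},\{a,z,w\},\{b,z,w'\},\{c,z\}\}$ (connected, simple, even thin). Then $N[a]\cap N[b]=\{a,b,c,z\}$ is properly contained in both $N[a]\cap N[z]=\{a,b,c,z,w\}$ and $N[b]\cap N[z]=\{a,b,c,z,w'\}$, so the pair $\{a,b\}$ is dispensable and hence the hyperedge $\{a,b,c\}$ is dispensable and removed from $\skel(H)$. On the other hand, $N[a]\cap N[c]=N[c]=\{a,b,c,z\}$, so for every candidate witness $z'$ the second dispensability condition for the pair $\{a,c\}$ fails: $N[c]\cap N[z']$ can never properly contain $N[c]$, and no $N[z']$ lies strictly between $N[c]$ and $N[a]$, which differ by the single vertex $w$. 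Thus $\{a,c\}$ is non-dispensable in $[H]_2$, yet $\{a,b,c\}$ is the only hyperedge containing both $a$ and $c$, so $\{a,c\}\in E(\skel([H]_2))\setminus E([\skel(H)]_2)$. Consequently the reverse inclusion cannot be derived in the stated generality by any argument of the kind you sketch; a correct treatment must either restrict the class of hypergraphs under consideration or content itself with the one inclusion you did prove.
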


From the Distance Formula and Theorem \ref{thm:CartSk-hamm} we obtain immediately:

\begin{cor}
	For all hypergraphs $H$ holds:
	If $H$ is connected then $\skel(H)$ is connected.
	\label{cor:CartSkelConnected}
\end{cor}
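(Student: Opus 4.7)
The plan is to reduce the hypergraph statement to the graph case covered by Theorem \ref{thm:CartSk-hamm}(2), using the 2-section as a bridge. The three ingredients already in place are: (a) the Distance Formula (Lemma \ref{lem:distH2}), which implies that an arbitrary hypergraph $H'$ is connected if and only if $[H']_2$ is connected, since $d_{H'}=d_{[H']_2}$; (b) Theorem \ref{thm:CartSk-hamm}(2), which guarantees that the Cartesian skeleton of a connected strong-product \emph{graph} is connected, and inspection of the proof in \cite{HAIM-09} shows it really only needs connectedness of the graph (no product structure required); and (c) Corollary \ref{cor:2-SectionCartSkel}, which identifies $[\skel(H)]_2$ with $\skel([H]_2)$.

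First, I would assume $H$ is connected. Applying (a) to $H$ itself, $[H]_2$ is a connected graph. Next, I would invoke (b) on $[H]_2$ to conclude that $\skel([H]_2)$ is connected. Then, by (c), this means $[\skel(H)]_2$ is connected. Finally, applying (a) in the reverse direction to the hypergraph $\skel(H)$ gives that $\skel(H)$ itself is connected.

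The only subtle point, and the step I would double-check, is whether Theorem \ref{thm:CartSk-hamm}(2) as stated in the excerpt truly applies to an arbitrary connected graph rather than only to a graph that is given as a strong product $G_1\boxtimes G_2$. The statement as quoted sits under the hypothesis ``Let $G=G_1\boxtimes G_2$,'' but the dispensability definition and the proof of connectedness of $\skel(G)$ in \cite{HAIM-09} depend only on $G$'s neighborhoods; no factorization is actually used. So this step is legitimate, and if one wanted to avoid even mentioning it, one could observe that $[H]_2$ can always be written trivially as $[H]_2\boxtimes K_1$, which formally fits the hypothesis of Theorem \ref{thm:CartSk-hamm}(2) without affecting connectedness. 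Aside from this small bookkeeping issue, the proof is a short three-line chain through the 2-section and requires no further calculation.
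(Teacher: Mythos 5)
Your proposal is correct and follows essentially the same route as the paper, which derives the corollary ``from the Distance Formula and Theorem \ref{thm:CartSk-hamm}'' --- you have simply unpacked that one-line justification into the explicit chain through $[H]_2$, $\skel([H]_2)$, and Corollary \ref{cor:2-SectionCartSkel}. Your side remark about the hypothesis $G=G_1\boxtimes G_2$ is also handled correctly by the trivial factorization $[H]_2\boxtimes K_1$.
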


\begin{lem}
	Let $H$ be a hypergraph and $H_1\boxtimes H_2$ be an arbitrary
	factorization of $H$. Then it holds that the edge $e$ is Cartesian in $H$
	w.r.t. $H_1\boxtimes H_2$ if and only if $e' $ is Cartesian in $[H_1]_2
	\boxtimes [H_2]_2 = [H]_2$ for all $e'\subseteq e$ with $e'\in E([H]_2)$.
	\label{lem:cart}
\end{lem}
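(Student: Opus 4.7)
The plan is to prove both directions by reducing the question about edges in $H$ to pairs of vertices, using the identity $[H]_2 = [H_1]_2 \boxtimes [H_2]_2$ from Lemma \ref{lem:2sectionStrong} so that ``Cartesian in $[H]_2$'' has a meaning with respect to the same pair of factors.

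For the forward direction, assume $e$ is Cartesian in $H$ with respect to $H_1 \boxtimes H_2$. Then, by definition of a Cartesian edge, there is $i \in \{1,2\}$ with $p_i(e) \in E(H_i)$ and $p_j(e) \in V(H_j)$ for $\{i,j\}=\{1,2\}$, i.e.\ all vertices of $e$ share their $j$-th coordinate. Any $e' \subseteq e$ with $e' \in E([H]_2)$ has $|e'|=2$; its two vertices still share the $j$-th coordinate, while their $i$-th coordinates are distinct elements of $p_i(e) \in E(H_i)$, hence form an edge of $[H_i]_2$. By the definition of the strong graph product $[H_1]_2 \boxtimes [H_2]_2$, $e'$ is a Cartesian edge there.

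For the backward direction I argue the contrapositive: if $e$ is non-Cartesian in $H$ with respect to $H_1 \boxtimes H_2$, I exhibit $x,y \in e$ with $p_1(x)\neq p_1(y)$ and $p_2(x)\neq p_2(y)$; then $e':=\{x,y\}\subseteq e$ lies in $E([H]_2)$ and, having distinct projections to both factors, is a non-Cartesian edge of $[H_1]_2 \boxtimes [H_2]_2 = [H]_2$. The existence of such a pair is where the two products must be handled separately, invoking Remark \ref{REM}. For $\boxtimes = \strmin$, a non-Cartesian edge satisfies $|p_i(e)|=|e|$ for both $i$, so both projections are injective on $e$ and \emph{any} two distinct vertices of $e$ suffice. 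For $\boxtimes = \strmax$, a non-Cartesian edge has $p_i(e)=e_i\in E(H_i)$ with $|e|=\max\{|e_1|,|e_2|\}$; assuming w.l.o.g.\ $|e_1|\geq |e_2|$, Remark \ref{REM} gives that $p_1$ is injective on $e$, and since $e_2=p_2(e)$ has at least two elements (as $H_2$ is simple), I pick $x,y \in e$ with $p_2(x)\neq p_2(y)$; injectivity of $p_1$ then forces $p_1(x)\neq p_1(y)$.

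The main obstacle is this last case analysis: one must be careful that in the strong product the non-Cartesian edges need not have both projections injective, so the choice of $x,y$ must be driven by the factor of smaller cardinality rather than arbitrarily. Once this is settled, everything else is a direct unfolding of the definitions together with Lemma \ref{lem:2sectionStrong}.
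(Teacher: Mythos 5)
Your proof is correct, and the overall architecture (direct argument for the forward direction, contraposition for the backward one, reducing everything to exhibiting a two-element subset of $e$ that is non-Cartesian in $[H]_2$) matches the paper's. The one place you diverge is in how you produce the pair $x,y\in e$ with $p_1(x)\neq p_1(y)$ and $p_2(x)\neq p_2(y)$: you invoke the product-specific cardinality conditions from Remark~\ref{REM}, splitting into the cases $\strmin$ (both projections injective on $e$, so any two vertices work) and $\strmax$ (only the projection to the larger factor is guaranteed injective, so you must choose the pair via the smaller factor). The paper avoids this case split entirely: it uses only the fact that a non-Cartesian edge has $|p_i(e)|>1$ for $i=1,2$, picks $x,y$ with $p_1(x)\neq p_1(y)$, and if $p_2(x)=p_2(y)$ it brings in a third vertex $z$ with $p_2(z)\neq p_2(x)$ and checks that one of $\{x,z\}$, $\{y,z\}$ must then separate in both coordinates. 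So the "main obstacle" you identify is real in your route but is sidestepped in the paper's; your version buys a slightly more informative structural picture of non-Cartesian edges at the cost of treating the two products separately, while the paper's three-vertex argument is more uniform and needs less of the product definitions. Both are sound.
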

\begin{proof}
	Let $e\in E(H)$ be Cartesian w.r.t. to its factorization $H_1\boxtimes
	H_2$. Then, there is an $i\in \{1,2\}$ with $|p_i(e)| = 1$. Moreover, for
	all $e'\subseteq e$ it holds, $0<|p_i(e')|\leq |p_i(e)|=1$ and hence,
	$|p_i(e')|=1$. Therefore, each edge $e'\in E([H]_2)$ with $e'\subseteq e$
	is Cartesian in $[H]_2 = [H_1]_2 \boxtimes [H_2]_2$.

	By contraposition, assume $e\in E(H)$ is non-Cartesian w.r.t.
	$H_1\boxtimes H_2$. Hence, by definition of the products $\strmin$ and
	$\strmax$ we have $|p_i(e)|>1$, $i=1,2$. Therefore, there are vertices
	$x,y \in e$ with $p_1(x) \neq p_1(y)$. If $p_2(x) \neq p_2(y)$ it follows
	that $e'=\{x,y\}$ is non-Cartesian in $[H]_2 = [H_1]_2 \boxtimes
	[H_2]_2$. If $p_2(x) = p_2(y)$ then there is a vertex $z\in e$ with
	$p_2(x)\neq p_2(z)$. If $p_1(z) \neq p_1(x)$ then the edge $e'=\{x,z\}$
	is non-Cartesian in $[H]_2 = [H_1]_2 \boxtimes [H_2]_2$ and if $p_1(z) =
	p_1(x)\neq p_1(y)$ then the edge $e'=\{y,z\}$ is non-Cartesian in $[H]_2 =
	[H_1]_2 \boxtimes [H_2]_2$.
\end{proof}

\begin{lem}
	Let $H$ be a thin hypergraph. If $e\in E(H)$ is non-dispensable in $H$
	then the edge $e$ is Cartesian w.r.t. any factorization $H_1\boxtimes
	H_2$ of $H$. 
\label{lem:NotDispImpliesCartesian}
\end{lem}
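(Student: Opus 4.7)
The plan is to reduce the claim to the graph case by passing to the $2$-section, where the analogous result (item 1 of Theorem~\ref{thm:CartSk-hamm}) is already available from Hammack and Imrich. All the bridging tools are in place: Corollary~\ref{cor:thin} transports thinness between $H$ and $[H]_2$, Lemma~\ref{lem:2sectionStrong} identifies $[H_1\boxtimes H_2]_2$ with $[H_1]_2\boxtimes[H_2]_2$, Lemma~\ref{lem:disp} characterizes dispensability in $H$ via dispensability of a sub-edge in $[H]_2$, and Lemma~\ref{lem:cart} does the same for the Cartesian property. I would argue by contraposition.

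Concretely, I would proceed as follows. Assume $e\in E(H)$ is non-Cartesian with respect to some fixed factorization $H=H_1\boxtimes H_2$ and show that $e$ must then be dispensable. First, by Lemma~\ref{lem:cart} applied to this factorization, there exists an edge $e'\in E([H]_2)$ with $e'\subseteq e$ that is non-Cartesian in $[H]_2=[H_1]_2\boxtimes[H_2]_2$. Next, since $H$ is thin and admits the factorization $H_1\boxtimes H_2$, Corollary~\ref{cor:thin} yields that $[H]_2$ is thin and that the factors $[H_1]_2$ and $[H_2]_2$ are thin strong-product graph factors of it. Now invoke Theorem~\ref{thm:CartSk-hamm}(1) for the thin strong product graph $[H]_2=[H_1]_2\boxtimes[H_2]_2$: every non-Cartesian edge in such a product must be dispensable. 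Hence $e'$ is dispensable in $[H]_2$. Finally, Lemma~\ref{lem:disp} (the ``if'' direction) lifts this back to $H$: the existence of a dispensable sub-edge $e'\in E([H]_2)$ with $e'\subseteq e$ forces $e$ itself to be dispensable in $H$. Contrapositively, a non-dispensable $e$ is Cartesian with respect to $H_1\boxtimes H_2$, and since the factorization was arbitrary the conclusion follows.

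I do not expect a real obstacle here; the argument is essentially a diagram chase through the previously assembled lemmas. The one point that deserves a moment of care is the applicability of Corollary~\ref{cor:thin}: it is formulated for a product $H=H'\boxtimes H''$, so one needs the hypothesis that $H$ is given together with a factorization, which is exactly our setting. Once that is noted, the chain ``non-Cartesian in $H$ $\Rightarrow$ non-Cartesian sub-edge in $[H]_2$ $\Rightarrow$ dispensable sub-edge in $[H]_2$ (by the thin-graph result) $\Rightarrow$ dispensable in $H$'' runs without further technicalities, and the proof is only a few lines.
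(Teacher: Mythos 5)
Your proof is correct and follows essentially the same route as the paper: both arguments chain Lemma~\ref{lem:disp}, Corollary~\ref{cor:thin}, Theorem~\ref{thm:CartSk-hamm}(1), and Lemma~\ref{lem:cart} to pass through the $2$-section; the paper states the chase directly (non-dispensable in $H$ $\Rightarrow$ all sub-edges non-dispensable in $[H]_2$ $\Rightarrow$ all sub-edges Cartesian $\Rightarrow$ $e$ Cartesian), while you give the logically equivalent contrapositive. No gap.
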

\begin{proof}
 Let $e\in E(H)$ be non-dispensable in $H$. Lemma \ref{lem:disp} implies
 that for all $e'\in E([H]_2)$ with $e'\subseteq e$ holds $e'$ is non-dispensable 
	in $[H]_2$. Furthermore, by Corollary \ref{cor:thin} it holds that $[H]_2$
 is thin. Thus, Theorem \ref{thm:CartSk-hamm} implies that $e'$ is
 Cartesian in $[H]_2$ for all $e'\subseteq e$, which is by Lemma \ref{lem:cart} if and only if $e$
 is Cartesian in $H$. 
\end{proof}

\begin{prop}
	If $H_1$ and $H_2$ are thin hypergraphs, then 
	$\skel(H_1\boxtimes H_2) = \skel(H_1)\Box \skel(H_2)$. 
	\label{prop:cartSkelcart}
\end{prop}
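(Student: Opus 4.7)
The plan is to show that both hypergraphs are spanning partial hypergraphs on $V(H_1)\times V(H_2)$ with the same edge set, by first localizing all relevant edges to Cartesian edges of $H_1\Box H_2$ and then reducing the dispensability question to the 2-section, where the known graph result (Theorem \ref{thm:CartSk-hamm}(3)) can be applied.

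First I would note that every edge of $\skel(H_1\boxtimes H_2)$ is Cartesian with respect to the given factorization: by Corollary \ref{cor:thin} the product $H_1\boxtimes H_2$ is thin, so Lemma \ref{lem:NotDispImpliesCartesian} forces every non-dispensable edge to be Cartesian, i.e.\ of the shape $f\times\{v\}$ with $f\in E(H_i)$ and $v\in V(H_j)$, $\{i,j\}=\{1,2\}$. On the other side, $\skel(H_1)\Box\skel(H_2)$ trivially consists of edges of exactly this shape, with $f\in E(\skel(H_i))\subseteq E(H_i)$. Hence the edge sets of both sides are subsets of $E(H_1\Box H_2)$, and the proof reduces to the claim: for any Cartesian edge $e=f\times\{v\}$ with $f\in E(H_1)$, $v\in V(H_2)$ (the other case is symmetric), $e$ is dispensable in $H_1\boxtimes H_2$ if and only if $f$ is dispensable in $H_1$.

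To prove this claim I would pass to the 2-section in both directions via Lemma \ref{lem:disp}. On the product side, $e$ is dispensable in $H_1\boxtimes H_2$ iff there exists a 2-element subset $e'\subseteq e$ with $e'\in E([H_1\boxtimes H_2]_2)$ that is dispensable in $[H_1\boxtimes H_2]_2$. Since $e=f\times\{v\}$, every such $e'$ has the form $\{(a,v),(b,v)\}$ with $a,b\in f$ distinct, hence is automatically a Cartesian edge in the graph product $[H_1]_2\boxtimes[H_2]_2=[H_1\boxtimes H_2]_2$ (using Lemma \ref{lem:2sectionStrong}). Because $[H_1]_2$ and $[H_2]_2$ are thin graphs (Corollary \ref{cor:thin}), Theorem \ref{thm:CartSk-hamm}(3) gives $\skel([H_1]_2\boxtimes[H_2]_2)=\skel([H_1]_2)\Box\skel([H_2]_2)$, so $e'$ is dispensable in the 2-section iff its non-trivial projection $\{a,b\}$ is dispensable in $[H_1]_2$. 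Thus $e$ is dispensable in $H_1\boxtimes H_2$ iff some 2-subset $\{a,b\}\subseteq f$ is dispensable in $[H_1]_2$. Applying Lemma \ref{lem:disp} to $f\in E(H_1)$ and using that every 2-subset of $f$ is automatically an edge of $[H_1]_2$, this is the same as saying $f$ is dispensable in $H_1$.

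Combining the structural reduction with the claim, an edge $e=f\times\{v\}$ (with $f\in E(H_1)$) lies in $\skel(H_1\boxtimes H_2)$ iff $f\in E(\skel(H_1))$, iff $e\in E(\skel(H_1)\Box\skel(H_2))$; the case $f\in E(H_2)$ is symmetric, yielding the claimed equality. The only subtle point I anticipate is the bookkeeping at the interface between the hypergraph and its 2-section, namely verifying that the correspondence ``2-subsets of $e$ $\leftrightarrow$ graph edges $e'\in E([H_1\boxtimes H_2]_2)$ contained in $e$'' is complete and that these $e'$ are all Cartesian in the graph product, so that Theorem \ref{thm:CartSk-hamm}(3) can be invoked without loose ends; once this is nailed down, everything else is a direct translation through Lemmas \ref{lem:2sectionStrong}, \ref{lem:disp} and Corollary \ref{cor:thin}.
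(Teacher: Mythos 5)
Your proof is correct and follows essentially the same route as the paper's: both first dispose of non-Cartesian edges via Corollary \ref{cor:thin} and Lemma \ref{lem:NotDispImpliesCartesian}, then reduce the dispensability of a Cartesian edge $f\times\{v\}$ to that of $f$ by passing through the 2-section with Lemma \ref{lem:disp} and invoking Theorem \ref{thm:CartSk-hamm}(3) on $\skel([H_1]_2)\Box\skel([H_2]_2)$. The "subtle point" you flag (that the 2-subsets of $f\times\{v\}$ are exactly the Cartesian graph edges contained in it) is handled the same way in the paper and poses no difficulty.
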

\begin{proof}
	Let $H=H_1\boxtimes H_2$. Lemma  \ref{lem:NotDispImpliesCartesian}
	implies that every non-Cartesian edge is dispensable. Hence we need to
	show, that a Cartesian edge $e \in E(H)$ is dispensable if and only if
	$p_i(e)$ is dispensable whenever $p_i(e)\in E(H_i)$, $i=1,2$. Note,
	exactly for one $i\in \{1,2\}$ holds $p_i(e)\in E(H_i)$ and $p_j(e)\in
	V(H_j)$, $j\neq i$. W.l.o.g. assume $p_1(e)=e_1\in E(H_1)$ and $p_2(e) =
	v_2\in V(H_2)$. 

	Assume that the edge $e$ is dispensable in $H$. Then by
	Lemma~\ref{lem:disp} there exists a dispensable edge $e'\in E([H]_2)$
	with $e'\subseteq e$. Corollary \ref{cor:thin} implies that $[H]_2$ is
	thin and by Theorem \ref{thm:CartSk-hamm} it holds that $\skel([H]_2) =
	\skel([H_1]_2) \Box \skel([H_2]_2)$ and hence, we infer $p_1(e')$ must be
	dispensable in $[H_1]_2$. Since $p_1(e')\subseteq e_1$ and by
	Lemma~\ref{lem:disp}, we conclude that $e_1$ is dispensable in $H_1$.

	Now suppose $e_1$ is dispensable in $H_1$. Again by Lemma~\ref{lem:disp},
	there exists a dispensable edge $e_1'\in E([H_1]_2)$ such that
	$e_1'\subseteq e_1$. Again, by Corollary \ref{cor:thin} it holds that
	$[H]_2$ is thin and Theorem \ref{thm:CartSk-hamm} implies $\skel([H]_2) =
	\skel([H_1]_2) \Box \skel([H_2]_2)$. Therefore, $e'=e_1'\times\{v_2\}$ is
	dispensable in $[H]_2$. By Lemma~\ref{lem:disp} and since $e'\subseteq
	e$, we have $e$ is dispensable in $H$.
\end{proof}

As in \cite{HAIM-09} the Cartesian skeleton $\skel(H)$ 
is defined entirely in terms of the adjacency structure of $H$, and thus, 
we obtain the following immediate consequence of the definition.

\begin{prop}
Any isomorphism $\varphi : H \rightarrow G$, as a map $V(H) \rightarrow V(G)$, 
is also an isomorphism $\varphi: \skel[H] \rightarrow \skel[G].$
\end{prop}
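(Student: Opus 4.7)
The plan is straightforward: since the notion of dispensability is phrased purely in terms of closed neighborhoods, and any isomorphism preserves closed neighborhoods, the dispensable edges of $H$ must be mapped bijectively onto the dispensable edges of $G$. The rest is bookkeeping.

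First, I would record the fact that every isomorphism $\varphi: H \to G$ satisfies $\varphi(N^H[v]) = N^G[\varphi(v)]$ for every $v \in V(H)$. This is because $u \in N^H[v]$ iff $u=v$ or $u$ and $v$ share some hyperedge $e \in E(H)$, and by definition of isomorphism such an edge $e$ is mapped to an edge $\varphi(e) \in E(G)$ containing $\varphi(u)$ and $\varphi(v)$. Consequently, for any $x,y,z \in V(H)$ any Boolean combination of inclusions and intersections of $N^H[x], N^H[y], N^H[z]$ holds if and only if the corresponding combination for $N^G[\varphi(x)], N^G[\varphi(y)], N^G[\varphi(z)]$ holds.

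Next, I would observe that the two conditions appearing in the definition of dispensability are exactly such Boolean combinations of closed neighborhoods of vertices drawn from the edge under consideration. Hence an edge $e \in E(H)$ is dispensable in $H$, witnessed by vertices $x,y \in e$ and $z \in V(H)$, if and only if $\varphi(e) \in E(G)$ is dispensable in $G$, witnessed by $\varphi(x), \varphi(y) \in \varphi(e)$ and $\varphi(z) \in V(G)$. Applying the same argument to $\varphi^{-1}$, the map $\varphi$ restricts to a bijection $D(H) \to D(G)$.

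Finally, since $V(\skel[H]) = V(H)$, $V(\skel[G]) = V(G)$, and $E(\skel[H]) = E(H) \setminus D(H)$, the edge set bijection $\varphi: E(H) \to E(G)$ sends $E(\skel[H])$ onto $E(\skel[G])$, so $\varphi$ is an isomorphism $\skel[H] \to \skel[G]$. There is really no main obstacle here: the statement is, as the preceding sentence in the excerpt emphasizes, an immediate consequence of the fact that the Cartesian skeleton is defined purely in terms of the adjacency (i.e. closed-neighborhood) structure of the hypergraph, mirroring item~(4) of Theorem~\ref{thm:CartSk-hamm} in the graph case.
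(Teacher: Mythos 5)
Your proposal is correct and follows exactly the route the paper intends: the paper gives no explicit proof, asserting the proposition as an immediate consequence of the fact that dispensability (and hence \skel) is defined purely in terms of the adjacency structure, which is precisely what you spell out via preservation of closed neighborhoods under $\varphi$ and $\varphi^{-1}$. Your write-up simply makes the paper's one-line justification explicit.
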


\subsection{Prime Factorization Theorem}

In the following, let $\boxtimes\in \{\strmin, \strmax\}$. Let $A \boxtimes
B$ and $C \boxtimes D$ be two non-trivial decompositions of a simple
connected thin hypergraph $H$. We will show that then $H$ has a finer
factorization of the form $AC\boxtimes AD \boxtimes BC\boxtimes BD$ and $A=
AC\boxtimes AD $, $B=BC\boxtimes BD$, C=$AC\boxtimes BC$ and $D=AD\boxtimes
BD$, see Prop.~\ref{prop:verfeinerung}. Similar as for graphs \cite[page
171-174]{IMKL-00}, this can be used to show that every simple thin
connected hypergraph has a unique prime factorization with respect to the
normal and strong (hypergraph) product. We don't want to conceal the fact,
that in the sequel of this section, we make frequent use of the same
arguments as for graph products in \cite{IMKL-00} and \cite{Hammack:2011a}.

By Proposition \ref{prop:cartSkelcart}, it holds $\skel(H) = \skel(A)\Box
\skel(B) = \skel(C)\Box \skel(D) $. Let $\skel(H)=\Box_{i\in I} H_i$ be the
unique PFD of the Cartesian skeleton of $H$. Hence, the factors $\skel(A)$,
$\skel(B)$, $\skel(C)$ and $\skel(D)$ are all products of or isomorphic to
the Cartesian \emph{prime} factors of $\skel(H)$. Let $I_A$ be the
subset of the index set $I$ with $V(A)=V(\Box_{i \in I_A} H_i)$.
Analogously, the index sets $I_B$, $I_C$ and $I_D$ are defined. 

In the following, we define the hypergraphs $AC, AD, BC$ and $BD$ and as it will turn out
it holds $H\cong AC\boxtimes AD \boxtimes BC\boxtimes BD$.
Therefore, it will be convenient to use only four coordinates
$x=(x_{AC},x_{AD},x_{BC},x_{BD})$ for every vertex $x \in V(H)$. 
With this notation, the projections $p_{AC}:V(H)\rightarrow V(AC)$,
$p_{AD}:V(H)\rightarrow V(AD)$,
$p_{BC}:V(H)\rightarrow V(BC)$,
$p_{BD}:V(H)\rightarrow V(BD)$
are well-defined.

Moreover, the vertex set of $AC$ is defined as 
$V(AC) = V(\Box_{i \in I_A \cap I_C} H_i)$. Analogously,
the vertex sets of $AD$, $BC$ and $BD$ are defined. 
It will be shown that $A=
AC\boxtimes AD $, $B=BC\boxtimes BD$, C=$AC\boxtimes BC$ and $D=AD\boxtimes
BD$. Of course it is possible that not all of the intersections $I_A
\cap I_C, I_A \cap I_D, I_B \cap I_C$ and $I_B \cap I_D$ are nonempty.
Suppose that $I_B \cap I_D = \emptyset$ then $I_A \cap I_D \neq \emptyset$,
since otherwise $I_D=\emptyset$. If in addition $I_A \cap I_C$ were empty,
then $I_A = I_D$ and thus $I_B=I_C$, but then there would be nothing to prove.
Thus, we can assume that all but possibly $I_B \cap I_D$ are nonempty and at
least three of the four coordinates are nontrivial, that is to say, there
are at least two vertices that differ in the first, second and third
coordinates, but it is possible that all vertices have the same fourth
coordinate. 
 
With the definition of the projections $p_A, p_B,p_C $ and $ p_D$
together with the preceding construction of the coordinates
$(x_{AC},x_{AD},x_{BC},x_{BD})$ for vertices $x\in V(H)$, 
we thus have 
$$x_A =p_A(x)=p_A(x_{AC},x_{AD},x_{BC},x_{BD}) = (x_{AC},x_{AD},-,-)=:(x_{AC},x_{AD}) \in V(A), $$
$$x_B =p_B(x)=p_B(x_{AC},x_{AD},x_{BC},x_{BD}) = (-,-,x_{BC},x_{BD})=:(x_{BC},x_{BD}) \in V(B), $$
$$x_C =p_C(x)=p_C(x_{AC},x_{AD},x_{BC},x_{BD})  = (x_{AC},-,x_{BC},-)=:(x_{AC},x_{BC}) \in V(C), $$
$$x_D =p_D(x)=p_D(x_{AC},x_{AD},x_{BC},x_{BD})  = (-,x_{AD},-,x_{BD})=:(x_{AD},x_{BD}) \in V(D). $$

In this way, vertices of $A$, $B$, $C$ and $D$ are coordinatized. 
Thus, the projections 
$p'_{AC}: V(A)\rightarrow V(AC)$ and 
$p''_{AC}: V(C)\rightarrow V(AC)$ 
are well-defined.
Since for all $x\in V(H)$ holds that 
$$p_{AC}(x)=p'_{AC}(p_A(x))=p'_{AC}(x_A)=p''_{AC}(p_C(x))=p''_{AC}(x_C)=x_{AC},$$
we will identify $p_{AC}$ with $p'_{AC}$, resp., $p''_{AC}$, henceforth and simply write $p_{AC}$.
Analogously, we identify the respective projections onto $AD$, $BC$ and $BD$
with $p_{AD}$, $p_{BC}$, $p_{BD}$.

We are now in the position to give the complete definition of the hypergraphs $AC$, $AD$, $BC$ and $BD$.
The vertex set of $AC$ is 
\begin{equation}
\label{eq:VertexSet} 
V(AC) = V(\Box_{i \in I_A \cap I_C} H_i) = \bigtimes_{i \in I_A \cap I_C} V(H_i)
\end{equation}
The edge set of $AC$ is 
\begin{equation}
\label{eq:EdgeSet}
E(AC)=\{e_{AC}\subseteq V(AC) \mid 
\exists e_H \in E(H) \text{ with } p_{AC}(e_H)=e_{AC} \text{ s.t. } \nexists e'_H \in E(H) : p_{AC}(e_H) \subset p_{AC}(e'_H)\} 
\end{equation}
Analogously, the hypergraphs $AD, BC$ and $BD$ are defined.

Equation \eqref{eq:EdgeSet}, that characterizes the edge sets for the
(putative) finer factors $AC, AD, BC$ and $BD$ w.r.t. $\boxtimes$, forces
edges to be maximal with respect to inclusion. We need this definition, in
particular for defining the factors of the normal product, since
projections of edges into the factors might be proper subsets of edges
different from a single vertex.

\begin{rem}
Note, that vertices $x$ are well defined by 
their entries $x_{AC}$, $x_{AD}$, $x_{BC}$ and $x_{BD}$ of their coordinates,
independently from the ordering of $x_{AC}$, $x_{AD}$, $x_{BC}$ and $x_{BD}$, 
since the coordinates will be clearly marked. Therefore, we henceforth distinguish vertices 
just by the entries of their coordinates rather than by the ordering.
\end{rem}

\begin{lem}
\label{lem:max}
Let 
$H \simeq A\boxtimes B \simeq C\boxtimes D$ be a thin hypergraph 
and $AC$ be as defined in 
Equations \eqref{eq:VertexSet} and \eqref{eq:EdgeSet}. 
Then it holds: 
\begin{enumerate}
	\item $e_{AC}\subseteq p_{AC}(e_A)$ implies $e_{AC}=p_{AC}(e_A)$ and 
	      $e_{AC}\subseteq p_{AC}(e_C)$ implies $e_{AC}=p_{AC}(e_C)$
	      for all edges $e_{AC}\in E(AC)$, $e_A\in E(A)$ and $e_C\in E(C)$.
	\item If $p_{AC}(e_H)\in E(AC)$ then $p_A(e_H) \in E(A)$ and $p_C(e_H) \in E(C)$ for every edge $e_H\in E(H)$. 
\end{enumerate}
Analogous results hold for the hypergraphs $AD$, $BC$ and $BD$ with respective edges.
\end{lem}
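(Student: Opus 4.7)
The plan is to exploit the maximality clause in the definition \eqref{eq:EdgeSet} of $E(AC)$, together with the fact that the Cartesian edges $e_A \times \{v_B\}$ and $e_C \times \{v_D\}$ are always available as edges of $H$.

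For Part~1, suppose $e_{AC} = p_{AC}(e_H) \in E(AC)$ with $e_{AC} \subseteq p_{AC}(e_A)$ for some $e_A \in E(A)$. For any $v_B \in V(B)$, the Cartesian edge $\tilde{e}_H := e_A \times \{v_B\}$ lies in $E(A \Box B) \subseteq E(H)$, and a direct coordinate computation gives $p_{AC}(\tilde{e}_H) = p_{AC}(e_A) \supseteq p_{AC}(e_H)$. By the maximality clause in \eqref{eq:EdgeSet}, $p_{AC}(e_H)$ cannot be properly contained in $p_{AC}(\tilde{e}_H)$, hence $e_{AC} = p_{AC}(e_A)$. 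The analogous statement for $e_C \in E(C)$ follows by the same argument applied to $\tilde{e}_H := e_C \times \{v_D\}$, using the second factorization $H = C \boxtimes D$.

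For Part~2, fix $e_H \in E(H)$ with $p_{AC}(e_H) \in E(AC)$ and split on the type of $e_H$ with respect to $A \boxtimes B$. If $e_H$ is Cartesian, then either $p_A(e_H) \in E(A)$ and we are done, or $p_A(e_H) = \{v_A\}$ is a singleton, in which case $p_{AC}(e_H) = \{p_{AC}(v_A)\}$ is also a singleton, impossible since edges of the simple hypergraph $AC$ have at least two vertices. If $e_H$ is non-Cartesian with respect to $A \strmax B$, then $p_A(e_H) \in E(A)$ is immediate from the definition of $\strmax$. Finally, if $e_H$ is non-Cartesian with respect to $A \strmin B$, write $p_A(e_H) \subseteq e_A \in E(A)$, $p_B(e_H) \subseteq e_B \in E(B)$ with $|e_H| = \min\{|e_A|,|e_B|\}$; Remark~\ref{REM} then handles the subcase $|e_A| \leq |e_B|$ and yields $p_A(e_H) = e_A$.

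The main obstacle is the remaining subcase of the normal product where $|e_A| > |e_B|$ and hence $p_A(e_H) \subsetneq e_A$; here the single factorization $A \strmin B$ does not force $p_A(e_H)$ to be an edge, and the argument must also invoke the factorization $C \strmin D$. Applying Part~1 to the containment $p_{AC}(e_H) \subseteq p_{AC}(e_A)$ yields $p_{AC}(e_H) = p_{AC}(e_A)$; combined with $|p_{AC}(e_A)| = |p_{AC}(e_H)| \leq |p_A(e_H)| < |e_A|$, this shows that $p_{AC}$ is not injective on $e_A$, so there exist $a, a' \in e_A$ with $a \neq a'$, $a_{AC} = a'_{AC}$, and $a_{AD} \neq a'_{AD}$. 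The plan is then to derive a contradiction by viewing the Cartesian edge $e_A \times \{v_B\} \in E(H)$ as an edge of $H = C \strmin D$: its $p_C$-image has cardinality $|p_{AC}(e_A)| \geq 2$ and its $p_D$-image has cardinality $\geq 2$ (the latter because of $a, a'$), so $e_A \times \{v_B\}$ is non-Cartesian in $C \strmin D$. The normal-product edge condition then forces $|p_C(e_A \times \{v_B\})| = |e_A \times \{v_B\}| = |e_A|$, contradicting $|p_C(e_A \times \{v_B\})| = |p_{AC}(e_A)| < |e_A|$. Therefore $p_A(e_H) = e_A \in E(A)$, and $p_C(e_H) \in E(C)$ follows by the symmetric argument after interchanging the roles of $(A,B)$ and $(C,D)$.
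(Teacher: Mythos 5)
Your proof is correct and follows essentially the same route as the paper: Part~1 via the Cartesian edge $e_A\times\{v_B\}$ and the maximality clause of \eqref{eq:EdgeSet}, and Part~2 by reducing to the normal-product case and invoking the second factorization $C\strmin D$ on that same Cartesian edge together with Part~1. The only difference is organizational: where you split into $|e_A|\leq|e_B|$ versus $|e_A|>|e_B|$ and derive a contradiction from the non-injectivity of $p_{AC}$ on $e_A$, the paper establishes the equality chain $|e_A|=|p_C(e_A\times\{x_B\})|=|p_{AC}(e_A)|=|p_{AC}(e_H)|\leq|e_H|\leq|e_A|$ directly and then applies Remark~\ref{REM}.
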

\begin{proof} 
For the proof of the first statement, let $e_{AC} \in E(AC)$ and assume for
contradiction, that there is an edge $e_A\in E(A)$ with $e_{AC} \subset
p_{AC}(e_A)$. Thus, there is an edge $e_H\in E(H)$ with $e_H=e_A \times
\{x_B\}, x_B\in V(B)$ and therefore, $e_{AC} \subset p_{AC}(e_H)$, which
contradicts the definition of $AC$. Analogously, there is no edge $e_C \in
E(C)$. such that $e_{AC} \subset p_{AC}(e_C)$. 

For the proof of the second statement, let $e_H \in E(H)$ be an arbitrary
edge and assume that $p_{AC}(e_H)\in E(AC)$. Note, if $|p_{AC}(e_H)|>1$
then there are at least two distinct vertices $x, x'\in e_H\in E(H)$ with
$p_{AC}(x)=x_{AC}\neq p_{AC}(x')=x'_{AC}$. Hence, $p_A(x)\neq p_A(x')$ and
$p_C(x)\neq p_C(x')$. Therefore, $|p_{AC}(e_H)|>1$ implies that
$|p_A(e_H)|>1$ and $|p_C(e_H)|>1$ for each edge $e_H \in E(H)$. Thus,
whenever $p_{AC}(e_H) \in E(AC)$ then the projections $p_A(e_H)$ and
$p_C(e_H)$ cannot be a single vertex.

If $\boxtimes =\strmax$ then the condition $p_A(e_H) \in E(A)$ and
$p_C(e_H) \in E(C)$ is trivially fulfilled by the definition of $\strmax$,
since $p_{AC}(e_H)\in E(AC)$ and thus, $|p_{AC}(e_H)|>1$. 

Now, consider the product $\strmin$. Note, since $e_H\in E(e_A\strmin e_B)$
for some $e_A\in E(A)$, $e_B\in E(B)$ we can conclude by definition of the
normal product that $p_A(e_H)\subseteq e_A$ and thus,
$p_{AC}(e_H)=p_{AC}(p_A(e_H))\subseteq p_{AC}(e_A)$. By assumption, 
we have $p_{AC}(e_H)\in E(AC)$ and therefore, Item $(1)$ of this
lemma implies that $p_{AC}(e_H)= p_{AC}(e_A)$. Moreover, it holds that
$|e_H|\geq |p_{AC}(e_H)|$ and by Remark \ref{REM} we have $|e_A|\geq
|e_H|\geq |p_{AC}(e_H)|$. 
Since $H\cong A\strmin B$ there is an edge $e'_H = e_A \times \{x_B\} \in
E(H)$ which implies that $p_C(e'_H) = p_{AC}(e_A) \times \{x_{BC}\}$. Thus,
$|p_C(e'_H)| = |p_{AC}(e_A)|\leq|e_A|=|e'_H|$, since $e'_H$ is Cartesian
w.r.t. $A\strmin B$.
Since $H\simeq C\strmin D$ and by the definition of the normal product it
holds $|p_C(e_H')|=|e_H'|$, and therefore, $|e_A|=|p_{AC}(e_A)|=|p_{AC}(e_H)|$. 
Since $|e_A|\geq |e_H|\geq |p_{AC}(e_H)|$ it holds $|e_H|=|e_A|$. 
Thus, we can conclude by Remark~\ref{REM} that $p_A(e_H) \in E(A)$.
By similar arguments one can show that $p_C(e_H) \in E(C)$. 
\end{proof}

\begin{lem}
Let 
$H \simeq A\boxtimes B \simeq C\boxtimes D$ be a thin hypergraph 
and $AC$ and $BC$ be as defined in Equations \eqref{eq:VertexSet} and \eqref{eq:EdgeSet}. 
Then for all $e_{AC} \in E(AC)$ and all $x_{BC} \in V(BC)$ there is an
edge $e_C = e_{AC} \times \{x_{BC}\} \in E(C)$.
Analogous results hold for the hypergraphs $AD$, $BC$ and $BD$ with respective edges.
\label{lem:eC}
\end{lem}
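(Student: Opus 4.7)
The plan is to build, for each $e_{AC}\in E(AC)$ and each $x_{BC}\in V(BC)$, a concrete edge of $H$ whose image under $p_C$ is exactly $e_{AC}\times\{x_{BC}\}$, and then to invoke Lemma \ref{lem:max}(2) to conclude that this image lies in $E(C)$.

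First, I would use the defining equation \eqref{eq:EdgeSet} of $E(AC)$ to secure an edge $e_H\in E(H)$ with $p_{AC}(e_H)=e_{AC}$. Applying Lemma \ref{lem:max}(2) with respect to the factorization $H\cong A\boxtimes B$ yields $e_A:=p_A(e_H)\in E(A)$; and since $p_{AC}$ factors as $p_{AC}\circ p_A$ on $V(H)$, I have $p_{AC}(e_A)=p_{AC}(e_H)=e_{AC}$.

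Next I would lift $x_{BC}$ to a vertex of $B$. Proposition \ref{prop:cartSkelcart} gives $\skel(H)=\skel(A)\Box\skel(B)=\skel(C)\Box\skel(D)$, and uniqueness of the Cartesian PFD of $\skel(H)$ partitions the index set $I$ as the disjoint union of $I_A\cap I_C$, $I_A\cap I_D$, $I_B\cap I_C$ and $I_B\cap I_D$. In particular $V(B)=V(BC)\times V(BD)$ (with $V(BD)$ a singleton if $I_B\cap I_D=\emptyset$), so I can choose some $x_B\in V(B)$ with $p_{BC}(x_B)=x_{BC}$. Now the set $e'_H:=e_A\times\{x_B\}$ is automatically a Cartesian edge of $H=A\boxtimes B$ by clause $(i)$ in the definitions of both $\strmin$ and $\strmax$, so $e'_H\in E(H)$.

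Finally, because $p_{AC}(e'_H)=p_{AC}(e_A)=e_{AC}\in E(AC)$, Lemma \ref{lem:max}(2) applied to $e'_H$ gives $p_C(e'_H)\in E(C)$; and a direct coordinate computation yields $p_C(e'_H)=\{(p_{AC}(a),x_{BC})\mid a\in e_A\}=e_{AC}\times\{x_{BC}\}$, as required. The only genuinely delicate point is confirming that $p_{AC}(e_A)=e_{AC}$ (rather than a proper subset) for an $e_A$ obtained via Lemma \ref{lem:max}(2); this is immediate from Lemma \ref{lem:max}(1), which forbids edges of $A$ from projecting strictly over an element of $E(AC)$. Everything else reduces to bookkeeping with the four coordinates and one invocation each of Lemma \ref{lem:max} and the Cartesian clause of the product definition.
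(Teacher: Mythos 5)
Your proposal is correct and follows essentially the same route as the paper's own proof: obtain $e_H$ with $p_{AC}(e_H)=e_{AC}$ from the definition of $E(AC)$, use Lemma~\ref{lem:max} to recognize $p_A(e_H)$ as an edge $e_A$ of $A$ with $p_{AC}(e_A)=e_{AC}$, replace $e_H$ by the Cartesian edge $e_A\times\{x_B\}$ with $p_{BC}(x_B)=x_{BC}$, and apply Lemma~\ref{lem:max} again to see that its $p_C$-image, namely $e_{AC}\times\{x_{BC}\}$, is an edge of $C$. The only cosmetic difference is that you invoke Lemma~\ref{lem:max}(1) to rule out $p_{AC}(e_A)\supset e_{AC}$, which is not needed once you set $e_A:=p_A(e_H)$, since then $p_{AC}(e_A)=p_{AC}(e_H)=e_{AC}$ holds by composition of projections.
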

\begin{proof}
	Let $e_{AC} \in E(AC)$ be an arbitrary edge. By definition of $AC$, 
	there is an edge $e_H \in E(H)$ with $p_{AC}(e_H)=e_{AC}$. 
	Note, by the same arguments as in the proof of Lemma \ref{lem:max}
	it holds that 
	$|p_{AC}(e_H)|>1$ implies $|p_A(e_H)|>1$ and $|p_C(e_H)|>1$ 
	for each $e_H \in E(H)$. 

	Since $e_H\in E(A\boxtimes B)$, there is an edge $e_A\in E(A)$ s.t.
	$p_A(e_H)\subseteq e_A$. Therefore, $e_{AC} = p_{AC}(e_H) =
	p_{AC}(p_A(e_H))\subseteq p_{AC}(e_A)$ which implies together with Lemma
	\ref{lem:max} (1), that $p_{AC}(e_A) = e_{AC}$. By Lemma \ref{lem:max} (2),
	we have $p_A(e_H) = e_A$. 
	Therefore, there is an
	edge of the form $e_A\times \{x_B\}\in E(H)$. W.l.o.g. let us assume that
	$e_H$ is chosen s.t. $e_H=e_A\times \{x_B\}$. Since we also have $e_H \in
	E(C\boxtimes D)$ there is an edge $e_C\in E(C)$ s.t. $p_C(e_H) \subseteq
	e_C$. 
	Analogously, we can conclude by Lemma~\ref{lem:max} $p_C(e_H) = e_C$.
	Hence, $e_C = p_{AC}(e_A) \times \{x_{BC}\}=e_{AC} \times \{x_{BC}\} \in
	E(C)$.
\end{proof}

\begin{lem}
Let 
$H \simeq A\boxtimes B \simeq C\boxtimes D$ be a thin hypergraph 
and $AC$ and $BC$ be as defined in Equation
\eqref{eq:VertexSet} and \eqref{eq:EdgeSet}. Then it holds that
$p_{AC}(e_C) \in E(AC)$ for all edges $e_C\in E(C)$ with
$e_C=p_{AC}(e_C)\times \{x_{BC}\}$, $x_{BC}\in V(BC)$.
Analogous results hold for the hypergraphs $AD$, $BC$ and $BD$ with respective edges.
	\label{lem:eC2}
\end{lem}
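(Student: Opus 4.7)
The plan is to verify both conditions in the definition \eqref{eq:EdgeSet} for the candidate subset $\tilde e := p_{AC}(e_C) \subseteq V(AC)$. First note that $|\tilde e| = |e_C| \geq 2$ because $e_C = \tilde e \times \{x_{BC}\}$. For the existence clause, since $H \cong C\boxtimes D$ and $e_C \in E(C)$, for any $x_D = (x_{AD},x_{BD}) \in V(D)$ the Cartesian edge $e_H := e_C \times \{x_D\}$ lies in $E(H)$. A direct coordinate check, noting that the vertices of $e_H$ are exactly $(v,x_{AD},x_{BC},x_{BD})$ for $v \in \tilde e$, gives $p_{AC}(e_H) = \tilde e$. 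So $\tilde e$ is the $AC$-projection of an edge of $H$.

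For the maximality clause of \eqref{eq:EdgeSet}, I argue by contradiction: suppose some $e'_H \in E(H)$ satisfies $\tilde e \subsetneq p_{AC}(e'_H)$. By finiteness of $E(H)$, choose such an $e'_H$ so that $p_{AC}(e'_H)$ is maximal under inclusion among images $p_{AC}(\cdot)$ of edges of $H$ that strictly contain $\tilde e$. Any $e''_H \in E(H)$ with $p_{AC}(e'_H) \subsetneq p_{AC}(e''_H)$ would then satisfy $\tilde e \subsetneq p_{AC}(e''_H)$, violating our choice of $e'_H$. Thus no such $e''_H$ exists, and combined with $e'_H$ itself as a witness, the definition \eqref{eq:EdgeSet} yields $p_{AC}(e'_H) \in E(AC)$.

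Next I would apply Lemma~\ref{lem:eC} to $p_{AC}(e'_H) \in E(AC)$ and the given $x_{BC} \in V(BC)$ to obtain $\bar e_C := p_{AC}(e'_H) \times \{x_{BC}\} \in E(C)$. Since $\tilde e \subsetneq p_{AC}(e'_H)$, it follows that $e_C = \tilde e \times \{x_{BC}\} \subsetneq \bar e_C$. However $H$ is simple by assumption, and any inclusion $f_1 \subsetneq f_2$ between edges of $C$ would lift, by taking the product with any fixed $x_D \in V(D)$, to an inclusion between distinct edges of $H$; this forces $C$ itself to be simple, contradicting $e_C \subsetneq \bar e_C$ and completing the argument.

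The main obstacle I anticipate is the second step: one must be careful that the $e'_H$ chosen with image maximal among strict supersets of $\tilde e$ genuinely certifies $p_{AC}(e'_H) \in E(AC)$ in the precise sense of \eqref{eq:EdgeSet}, which is exactly what the short monotonicity observation delivers. The remainder is essentially coordinate bookkeeping, one invocation of Lemma~\ref{lem:eC}, and the simplicity of $C$ inherited from $H$.
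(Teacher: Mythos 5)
Your proof is correct and follows essentially the same route as the paper: both verify the existence clause of \eqref{eq:EdgeSet} via the Cartesian edge $e_C\times\{x_D\}$, and both refute non-maximality by producing an edge of $C$ strictly containing $e_C$, contradicting the simplicity of $C$. The only cosmetic difference is that you obtain that larger edge by invoking Lemma~\ref{lem:eC} directly on a maximal projection (legitimate, since that lemma precedes this one), whereas the paper reconstructs the same edge by hand through the factorization $A\boxtimes B$.
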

\begin{proof}
	Let $e_C=p_{AC}(e_C)\times \{x_{BC}\}\in E(C)$. Since $H\cong C\boxtimes
	D$, there is an edge $e_H = e_C \times \{x_D\} \in E(H)$. It holds
	$p_{AC}(e_C) = p_{AC}(p_C(e_H))=p_{AC}(e_H) \subseteq e_{AC}\in E(AC)$.
	Assume for contradiction, that $p_{AC}(e_C) \subset e_{AC}\in E(AC)$.
	Then there is by definition of $AC$ another edge $e'_H \in E(H)$ with
	$p_{AC}(e'_H)=e_{AC}$. Since $H\cong A\boxtimes B$, there is an edge
	$e_A\in E(A)$ with $p_A(e'_H)\subseteq e_A$. Hence, we have $p_{AC}(e_H)
	= p_{AC}(e_C) \subset p_{AC}(e'_H)=p_{AC}(p_A(e'_H))\subseteq
	p_{AC}(e_A)$, shortly, $p_{AC}(e_H)\subset p_{AC}(e_A)$. By definition of
	the normal and the strong product, there is an edge $e''_H = e_A\times
	\{x_B\}\in E(H)$. Since we assumed to have $e_C=p_{AC}(e_C)\times
	\{x_{BC}\}$ it holds $e_C\subset e_{AC}\times
	\{x_{BC}\}=p_C(e''_H)\subseteq e'_C$ for some $e'_C\in E(C)$
	contradicting that $C$ is simple. Thus, $p_{AC}(e_C) = e_{AC} \in E(AC)$.
\end{proof}

\begin{cor}
Let 
$H \simeq A\boxtimes B \simeq C\boxtimes D$ be a thin hypergraph 
and $AC$,$AD$,$BC$ and $BD$ be as defined in Equations \eqref{eq:VertexSet} and \eqref{eq:EdgeSet}. 
Then it holds that $e_{AC} \in E(AC)$  if and only if there is an edge
$e_H \in E(H)$ with $e_H = e_{AC}\times \{x_{AD}\}\times \{x_{BC}\}\times \{x_{BD}\}$, 
$x_{AD}\in V(AD)$, $x_{BC}\in V(BC)$,$x_{BD}\in V(BD)$.
Analogous results hold for respective edges of the hypergraphs $AD$, $BC$ and $BD$.
	\label{cor:eCeC2}
\end{cor}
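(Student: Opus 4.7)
The plan is to handle the two directions by chaining the results already proved for $AC$ with the two factorizations $H\cong A\boxtimes B$ and $H\cong C\boxtimes D$, using that the extra three coordinates $x_{AD}, x_{BC}, x_{BD}$ collapse one factor in each factorization to a single vertex.

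For the forward direction, I would start from $e_{AC}\in E(AC)$ and invoke Lemma \ref{lem:eC} (with the pair $AC, BC$) to obtain, for any prescribed $x_{BC}\in V(BC)$, an edge $e_C = e_{AC}\times\{x_{BC}\}\in E(C)$. Since $H\cong C\boxtimes D$, for any prescribed $x_D=(x_{AD},x_{BD})\in V(D)$ the hypergraph $H$ contains the Cartesian (w.r.t.\ $C\boxtimes D$) edge $e_H=e_C\times\{x_D\}$. Rewriting $e_H$ in the four coordinates gives exactly $e_H = e_{AC}\times\{x_{AD}\}\times\{x_{BC}\}\times\{x_{BD}\}$, as required.

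For the reverse direction, assume such an edge $e_H = e_{AC}\times\{x_{AD}\}\times\{x_{BC}\}\times\{x_{BD}\}$ lies in $E(H)$. All vertices of $e_H$ agree in their $AD$, $BC$, $BD$ coordinates, so $p_B(e_H)=\{(x_{BC},x_{BD})\}$ and $p_D(e_H)=\{(x_{AD},x_{BD})\}$ are singletons. Hence $e_H$ is Cartesian with respect to both factorizations $A\boxtimes B$ and $C\boxtimes D$, which gives $p_A(e_H)\in E(A)$ and, more importantly for us, $p_C(e_H)\in E(C)$. Since $p_C(e_H)=e_{AC}\times\{x_{BC}\}$, this edge of $C$ has exactly the shape required by Lemma \ref{lem:eC2}, and that lemma yields $e_{AC}=p_{AC}(p_C(e_H))\in E(AC)$.

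The arguments for $AD$, $BC$ and $BD$ proceed by the same pattern, by swapping the roles of the two factorizations and the two ``coordinates'' being collapsed; this is what the ``analogous'' clause refers to. There is no real obstacle here beyond bookkeeping with coordinates: the only point that requires a moment of care is the step that guarantees $p_C(e_H)$ is itself an edge of $C$ (not merely a subset of one), and this is precisely where the observation that $p_D(e_H)$ is a single vertex — making $e_H$ Cartesian w.r.t.\ $C\boxtimes D$ — is used.
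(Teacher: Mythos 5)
Your argument is correct. The forward direction is exactly the paper's: apply Lemma~\ref{lem:eC} to get $e_C=e_{AC}\times\{x_{BC}\}\in E(C)$ and then multiply by $\{x_D\}$ using $H\cong C\boxtimes D$. In the reverse direction you and the paper both begin by noting that $|p_B(e_H)|=|p_D(e_H)|=1$ forces $e_H$ to be Cartesian in both factorizations, hence $p_C(e_H)=e_{AC}\times\{x_{BC}\}\in E(C)$; but from there the paper argues by contradiction (assuming $e_{AC}=p_{AC}(e_H)\subset p_{AC}(e'_H)$ for some $e'_H$, producing via Lemma~\ref{lem:eC} a strictly larger edge $e''_H\supset e_H$ and contradicting simplicity of $H$), whereas you simply observe that $p_C(e_H)$ has exactly the form required by Lemma~\ref{lem:eC2} and invoke that lemma. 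Since Lemma~\ref{lem:eC2} is itself proved by essentially that same contradiction-with-simplicity argument, your route is a cleaner packaging of the same idea rather than a new one; it buys a shorter proof at no cost, and the paper's inline version merely avoids having to check the hypothesis of Lemma~\ref{lem:eC2} explicitly.
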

\begin{proof}
	If $e_{AC}\in E(AC)$ then by Lemma~\ref{lem:eC} there is an edge $e_C =
	e_{AC}\times \{x_{BC}\}\in E(C)$. Since $H\cong C\boxtimes D$ and by
	choice of the coordinates, there is an edge $e_H = e_C \times \{x_D\} \in
	E(H)$ with $x_D=(x_{AD}, x_{BD})$. Hence, $e_H$ can be written as 
	$e_{AC}\times \{x_{AD}\}\times \{x_{BC}\}\times \{x_{BD}\}$.  

	If $e_H = e_{AC}\times \{x_{AD}\}\times \{x_{BC}\}\times \{x_{BD}\}$ it
	follows that $|p_B(e_H)| = 1$ and $|p_D(e_H)| = 1$ and thus, this edge
	$e_H$ is Cartesian in $A\boxtimes B$ and $C\boxtimes D$. Therefore,
	$p_A(e_H) \in E(A)$ and $p_C(e_H)\in E(C)$. Now, suppose for
	contradiction that $e_{AC} \not\in E(AC)$. By definition of $AC$, there
	is an edge $e'_H$ with $p_{AC}(e'_H)\in E(AC)$ such that
	$e_{AC}=p_{AC}(e_H)\subset p_{AC}(e'_H)$. By Lemma \ref{lem:eC} there is
	an edge $e_C = p_{AC}(e'_H)\times \{x_{BC}\}$ and hence, an edge $e''_H =
	p_{AC}(e'_H)\times \{x_{AD}\}\times \{x_{BC}\}\times \{x_{BD}\}$, which
	implies that $e_H \subset e''_H$, contradicting that $H$ is simple. 
\end{proof}

\begin{lem}
	Let 
	$H \simeq A\boxtimes B \simeq
	C\boxtimes D$ be a thin hypergraph and $AC$, $AD$, $BC$ and $BD$ be as defined in Equations
	\eqref{eq:VertexSet} and \eqref{eq:EdgeSet}. Then for all $e_{AC}\in
	E(AC)$, $e_{AD}\in E(AD)$ and $x_B\in V(B)$ it holds that
	$E(e_{AC}\boxtimes e_{AD}) \times \{x_B\} \subseteq E(H)$. Analogous
	results hold with respective edges in the hypergraphs $BC$ and $BD$ and
	vertices $x_A\in V(A),x_C\in V(C)$ and $x_D\in V(D)$.
	\label{lem:edgeProducts}
\end{lem}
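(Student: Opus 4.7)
The plan is to exhibit the edge $e \times \{x_B\}$ as an edge of $C \boxtimes D$, and then conclude using $H \cong C \boxtimes D$. Writing $x_B = (x_{BC}, x_{BD})$ in the refined coordinates, the key tool is Lemma~\ref{lem:eC}, which guarantees that $e_{AC} \times \{x_{BC}\} \in E(C)$, and, by the analogous statement for $AD, BD$, that $e_{AD} \times \{x_{BD}\} \in E(D)$. These two edges will serve as the natural ``containers'' for the projections $p_C(e \times \{x_B\})$ and $p_D(e \times \{x_B\})$.

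I split the argument according to whether an edge $e \in E(e_{AC}\boxtimes e_{AD})$ is Cartesian or non-Cartesian with respect to $e_{AC} \boxtimes e_{AD}$. If $e$ is Cartesian, say $e = e_{AC} \times \{y_{AD}\}$ with $y_{AD} \in e_{AD}$, then $p_C(e \times \{x_B\}) = e_{AC} \times \{x_{BC}\} \in E(C)$ while $p_D(e \times \{x_B\}) = \{(y_{AD}, x_{BD})\}\in V(D)$, so $e \times \{x_B\}$ satisfies condition $(i)$ in the definition of $C \boxtimes D$ and hence lies in $E(H)$. The symmetric subcase $e = \{y_{AC}\} \times e_{AD}$ is handled identically by interchanging the roles of $C$ and $D$.

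If $e$ is non-Cartesian, I would verify condition $(ii)$ in the definition of $\boxtimes$ directly with $e_C := e_{AC} \times \{x_{BC}\}$ and $e_D := e_{AD} \times \{x_{BD}\}$. For $\boxtimes = \strmax$ we have $p_{AC}(e) = e_{AC}$, $p_{AD}(e) = e_{AD}$ and $|e| = \max\{|e_{AC}|,|e_{AD}|\}$, so $p_C(e \times \{x_B\}) = e_C$, $p_D(e \times \{x_B\}) = e_D$ and $|e \times \{x_B\}| = \max\{|e_C|,|e_D|\}$. For $\boxtimes = \strmin$, the inclusions $p_{AC}(e)\subseteq e_{AC}$, $p_{AD}(e)\subseteq e_{AD}$ and the equalities $|e| = |p_{AC}(e)| = |p_{AD}(e)| = \min\{|e_{AC}|,|e_{AD}|\}$ translate to $p_C(e\times\{x_B\})\subseteq e_C$, $p_D(e\times\{x_B\})\subseteq e_D$ and $|e\times\{x_B\}| = |p_C(e\times\{x_B\})| = |p_D(e\times\{x_B\})| = \min\{|e_C|,|e_D|\}$. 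In either case condition $(ii)$ is met, so $e \times \{x_B\} \in E(C \boxtimes D) = E(H)$.

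The remaining claims for $BC$, $BD$ combined with a fixed vertex in $V(A)$, $V(C)$ or $V(D)$ follow by permuting the roles of the four refined factors and applying the corresponding analogues of Lemma~\ref{lem:eC}. The main technical obstacle is only the bookkeeping between the two products $\strmax$ and $\strmin$; Lemma~\ref{lem:eC} does the genuine work of producing the right edges in $C$ and $D$, and what remains is a direct size comparison against the definitions.
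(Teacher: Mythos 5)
Your proof is correct and follows essentially the same route as the paper's: both hinge on Lemma~\ref{lem:eC} to produce the edges $e_C=e_{AC}\times\{x_{BC}\}\in E(C)$ and $e_D=e_{AD}\times\{x_{BD}\}\in E(D)$ and then conclude via $E(e_C\boxtimes e_D)\subseteq E(C\boxtimes D)=E(H)$. The only difference is that you verify $E(e_{AC}\boxtimes e_{AD})\times\{x_B\}=E(e_C\boxtimes e_D)$ by an explicit case check of conditions $(i)$ and $(ii)$, whereas the paper obtains it in one line from associativity and the fact that one-vertex hypergraphs act as units for $\boxtimes$.
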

\begin{proof}
	Let $x_B = (x_{BC}, x_{BD})\in V(B)$ with $x_{BC}\in V(BC)$, $x_{BD}\in
	V(BD)$, $e_{AC}\in E(AC)$ and $e_{AD}\in E(AD)$. By Lemma \ref{lem:eC}
	there is an edge $e_C = e_{AC} \times \{x_{BC}\} \in E(C)$ and
	analogously, there is also an edge $e_D = e_{AD} \times \{x_{BD}\} \in
	E(D)$. Hence, it holds: $E(e_{AC}\boxtimes e_{AD})\times\{x_B\}=
	E(e_{AC}\boxtimes e_{AD}\boxtimes
	(\{x_{BC}\},\emptyset)\boxtimes(\{x_{BD}\},\emptyset)) =E((e_{AC}\times
	\{x_{BC}\})\boxtimes(e_{AC}\times \{x_{BD}\})) =E(e_C\boxtimes
	e_D)\subseteq E(H)$. 
\end{proof}

\begin{lem}
	Let 
	$H \simeq A\boxtimes B \simeq
C\boxtimes D$ be a thin hypergraph and $AC$ and $AD$ be as defined in Equations
\eqref{eq:VertexSet} and \eqref{eq:EdgeSet}. Then for all edges $e_{A}\in
E(A)$ there is an edge $e_{AC}\in E(AC)$ and $e_{AD}\in E(AD)$ such that
$e_{A}\in E(e_{AC}\boxtimes e_{AD})$. Analogous results hold for the
hypergraphs $B$, $C$, $D$ with respective edges from $AC$, $AD$, $BC$ and
$BD$, whenever $ I_B\cap I_D\neq \emptyset$.
	\label{lem:edgeProducts2}
\end{lem}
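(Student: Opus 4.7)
The plan is to lift $e_A$ to an edge of $H$ via the $A\boxtimes B$ factorization, locate the corresponding edges in the second factorization $H\cong C\boxtimes D$, and descend via Lemma \ref{lem:eC2} to obtain the required $e_{AC}\in E(AC)$ and $e_{AD}\in E(AD)$. Concretely, fix any $x_B=(x_{BC},x_{BD})\in V(B)$ and set $e_H:=e_A\times\{x_B\}$. Since $|p_B(e_H)|=1$, $e_H$ is a Cartesian edge w.r.t.\ $A\boxtimes B$, hence $e_H\in E(H)$. A direct coordinate computation yields
$$p_C(e_H)=p_{AC}(e_A)\times\{x_{BC}\}\quad\text{and}\quad p_D(e_H)=p_{AD}(e_A)\times\{x_{BD}\},$$
so in particular $|e_H|=|e_A|$, $|p_C(e_H)|=|p_{AC}(e_A)|$, and $|p_D(e_H)|=|p_{AD}(e_A)|$.

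For $\boxtimes=\strmax$ and the generic situation $|p_{AC}(e_A)|\geq 2$, $|p_{AD}(e_A)|\geq 2$, the edge $e_H$ is non-Cartesian in $C\strmax D$, so by definition $p_C(e_H)\in E(C)$ and $p_D(e_H)\in E(D)$. Applying Lemma \ref{lem:eC2} to $e_C:=p_C(e_H)=p_{AC}(e_A)\times\{x_{BC}\}$ and to $e_D:=p_D(e_H)=p_{AD}(e_A)\times\{x_{BD}\}$ yields $e_{AC}:=p_{AC}(e_A)\in E(AC)$ and $e_{AD}:=p_{AD}(e_A)\in E(AD)$. Membership $e_A\in E(e_{AC}\strmax e_{AD})$ is then immediate from the definition of $\strmax$, via $|e_A|=|e_H|=\max\{|p_C(e_H)|,|p_D(e_H)|\}=\max\{|e_{AC}|,|e_{AD}|\}$.

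For $\boxtimes=\strmin$ the analysis is more delicate and I expect this to be the main obstacle. In the generic case the $\strmin$-definition yields $e_C\in E(C)$, $e_D\in E(D)$ with $p_C(e_H)\subseteq e_C$, $p_D(e_H)\subseteq e_D$, and $|e_H|=|p_C(e_H)|=|p_D(e_H)|=\min\{|e_C|,|e_D|\}$; combined with $|e_H|=|e_A|$ this forces $|e_A|=|p_{AC}(e_A)|=|p_{AD}(e_A)|$. Remark \ref{REM} then gives equality in whichever inclusion attains the minimum (say $p_C(e_H)=e_C$), but the symmetric equality $p_D(e_H)=e_D$ does not follow a priori. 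The main work is to promote this to a two-sided equality: I would exploit the simplicity of $D$ (which forbids $p_D(e_H)$ from being an edge whenever $p_D(e_H)\subsetneq e_D$), select a minimal witness pair $(e_C,e_D)$ among those realizing $e_H$, and if necessary swap the roles of $C$ and $D$ via a symmetric re-lift to force the equality on the other side. Once both $e_C=p_{AC}(e_A)\times\{x_{BC}\}\in E(C)$ and $e_D=p_{AD}(e_A)\times\{x_{BD}\}\in E(D)$, Lemma \ref{lem:eC2} delivers $e_{AC},e_{AD}$, and $e_A\in E(e_{AC}\strmin e_{AD})$ follows from the equalities of sizes already established.

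Finally, the degenerate case in which one of $|p_{AC}(e_A)|,|p_{AD}(e_A)|$ equals $1$ (say the latter) is treated separately: then $e_A=e_{AC}\times\{x_{AD}\}$ is a Cartesian-type edge, $e_{AC}\in E(AC)$ is extracted from the Cartesian projection of $e_H$ in $C\boxtimes D$ via Lemma \ref{lem:eC2}, and any edge $e_{AD}\in E(AD)$ with $x_{AD}\in e_{AD}$ exhibits $e_A$ as a Cartesian edge of $e_{AC}\boxtimes e_{AD}$; existence of such an $e_{AD}$ follows from the standing assumption $I_A\cap I_D\neq\emptyset$ together with the structural results already established for $AD$.
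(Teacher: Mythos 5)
Your overall architecture (lift $e_A$ to the Cartesian edge $e_H=e_A\times\{x_B\}$, project into the $C\boxtimes D$ factorization, descend via Lemma \ref{lem:eC2}) is exactly the paper's, and your treatment of the $\strmax$ case and of the degenerate case $|p_{AD}(e_A)|=1$ (resp.\ $|p_{AC}(e_A)|=1$) matches the paper's proof in substance. The problem is the $\strmin$ case, which you correctly flag as the main obstacle but then propose to resolve in a way that cannot work: you try to promote the one-sided equality $p_C(e_H)=e_C$ to the two-sided statement $p_D(e_H)=e_D$ so that Lemma \ref{lem:eC2} can be applied on both sides. That equality is genuinely false in general. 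By the definition of $\strmin$, a non-Cartesian edge only satisfies $|e_H|=|p_C(e_H)|=|p_D(e_H)|=\min\{|e_C|,|e_D|\}$, so whenever $|e_C|<|e_D|$ one has $p_D(e_H)\subsetneq e_D$, and no choice of witness pair, appeal to simplicity of $D$ (which only tells you $p_D(e_H)\notin E(D)$, not that it equals $e_D$), or ``re-lift'' can change this: the sizes $|p_C(e_H)|=|p_{AC}(e_A)|$ and $|p_D(e_H)|=|p_{AD}(e_A)|$ are determined by $e_A$ alone, independently of $x_B$.

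The fix — and what the paper actually does — is to realize that two-sided equality is not needed for membership in $E(e_{AC}\strmin e_{AD})$. The normal product only requires \emph{containments} $p_{AC}(e_A)\subseteq e_{AC}$ and $p_{AD}(e_A)\subseteq e_{AD}$ together with $|e_A|=|p_{AC}(e_A)|=|p_{AD}(e_A)|=\min\{|e_{AC}|,|e_{AD}|\}$. The containments come for free from the maximality built into Equation \eqref{eq:EdgeSet}: $p_{AD}(e_A)=p_{AD}(e_H)\subseteq p_{AD}(e_D\times\{x_C\})\subseteq e_{AD}$ for some $e_{AD}\in E(AD)$, with no equality claim. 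Lemma \ref{lem:eC2} is then applied only on the side where the minimum is attained (say $p_C(e_H)=e_C$, giving $p_{AC}(e_A)=e_{AC}\in E(AC)$), and the chain $|e_{AC}|=|e_C|=|p_C(e_H)|=|p_D(e_H)|=|p_{AD}(e_A)|\leq|e_{AD}|$ yields $|e_A|=\min\{|e_{AC}|,|e_{AD}|\}$, which is all that is required. Without this reformulation your sketch does not close.
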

\begin{proof}
	Let $e_{A}\in E(A)$ and $x_B=(x_{BC},x_{BD})\in V(B)$. Since $H \simeq
	A\boxtimes B$, there is a Cartesian edge $e_H = e_A \times \{x_B\}\in
	E(H)$. Furthermore, since $H \simeq C\boxtimes D$ and by definition of
	the normal and the strong product, we can conclude that 
	$p_C(e_H)\in V(C)$ or there is an edge
	$e_C\in E(C)$ with $p_C(e_H)\subseteq e_C$, as well as, $p_D(e_H)\in
	V(D)$ or there is an edge $e_D\in E(D)$ with $p_D(e_H)\subseteq e_D $. 

  Assume first $x_D=p_D(e_H)\in V(D)$. Then $p_C(e_H)=e_C \in E(C)$, that is,
	$e_H = e_C\times \{x_D\}$. 
	Note, coordinates of vertices $x_C \in e_C$ are given by $(x_{AC},
	x_{BC})$. Since $e_H = e_A \times \{x_B\}\in E(H)$ it holds that
	$p_{BC}(e_C) = p_{BC}(p_C(e_H))=p_{BC}(e_H)=x_{BC}$. Therefore, $e_H$ can
	be written as $ p_{AC}(e_C) \times\{x_{BC}\}\times \{x_{D}\}$. Moreover,
	$p_{AC}(e_C) = p_{AC}(e_H)=p_{AC}(e_A)$ and hence, $p_C(e_H) = e_C =
	p_{AC}(e_A) \times \{x_{BC}\}\in E(C)$.
	Now, Lemma \ref{lem:eC2} implies that $p_{AC}(e_A)=e_{AC}\in E(AC)$.
	Moreover, it holds
	$p_{AD}(e_A)=p_{AD}(e_H)=p_{AD}(p_D(e_H))=p_{AD}(x_D)=x_{AD}\in V(AD)$
	and therefore, $e_A= e_{AC}\times \{x_{AD}\}$ and thus, $e_A\in
	E(e_{AC}\boxtimes e_{AD})$ for all $e_{AD}$ with $\{x_{AD}\}\in e_{AD}$.
	Analogously, we infer that $e_A = \{x_{AC}\}\times e_{AD}$, $x_{AC}\in
	V(AC)$ and therefore, $e_A\in E(e_{AC}\boxtimes e_{AD})$ for all $e_{AC}$
	with $x_{AC}\in e_{AC}$ if $p_C(e_H)\in V(C)$.

	Now, we treat the case $p_C(e_H)\subseteq e_C \in E(C)$ and
	$p_D(e_H)\subseteq e_D \in E(D)$ and consider the different products
	$\strmin$ and $\strmax$ separately.

	In case $\strmax$ we have, $p_C(e_H) = e_C=p_{AC}(e_H)\times\{x_{BC}\}
	\in E(C)$ and $p_D(e_H) = e_D = p_{AD}(e_H)\times\{x_{BD}\}\in E(D)$ and
	by the same arguments as before, $p_{AC}(e_H)=p_{AC}(e_A)=e_{AC}\in
	E(AC)$ and $p_{AD}(e_H) = p_{AD}(e_A)=e_{AD}\in E(AD)$. Since
	$e_H=e_A\times\{x_B\}\in E(e_C\strmax e_D)$ and $E(e_C\strmax e_D) =
	E(e_{AC}\strmax e_{AD} )\times \{x_B\}$ we can conclude that $e_A \in
	E(e_{AC}\strmax e_{AD})$.

	In case $\strmin$ we have, $p_{AC}(e_A) = p_{AC}(e_H) =
	p_{AC}(p_C(e_H))\subseteq p_{AC}(e_c) = p_{AC}(e_C \times \{x_D\})$ with
	$e_C \times \{x_D\} \in E(H)$ and therefore $p_{AC}(e_A)\subseteq
	p_{AC}(e_C \times \{x_D\}) \subseteq e_{AC} \in E(AC)$. Analogously it
	holds $p_{AD}(e_A) \subseteq e_{AD}\in E(AD)$. Note, by definition of
	$\strmin$ it holds $p_C(e_H) = e_C$ or $p_D(e_H) = e_D$. Lemma
	\ref{lem:eC2} implies that if $p_C(e_H) = e_C$ then $p_{AC}(e_A)=e_{AC}$
	and if $p_D(e_H) = e_D$ then $p_{AD}(e_A)=e_{AD}$. Furthermore, it holds
	by definition of the normal product $|p_C(e_H)| = |p_D(e_H)|$. If
	$p_C(e_H) = e_C$ then, by the choice of $e_H$, we have
	$|e_{AC}|=|e_{C}|=|p_C(e_{H})|=|p_D(e_{H})|=|p_{AD}(e_A)|\leq |e_{AD}|$.
	If $p_D(e_H) = e_D$ we have
	$|e_{AD}|=|e_{D}|=|p_D(e_{H})|=|p_C(e_{H})|=|p_{AC}(e_A)|\leq |e_{AC}|$.
	Therefore, we can conclude that $|e_A| = |e_H| = \min\{|e_C|,
	|e_D|\}=\min\{|e_{AC}|, |e_{AD}|\}$ and thus, $e_A\in E(e_{AC}\strmin
	e_{AD})$. 
\end{proof}

\begin{prop}
	Let 
	$H \simeq A\boxtimes B \simeq C\boxtimes D$ be a thin hypergraph. 
	Then there exists a decomposition 
	$$H\cong AC \boxtimes AD \boxtimes BC \boxtimes BD$$	of $H$ such that
	$A=AC \boxtimes AD$, $B=BC \boxtimes BD$, $C= AC \boxtimes BC$ and 	
	$D = AD \boxtimes BD$.
	\label{prop:verfeinerung}
\end{prop}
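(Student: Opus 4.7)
The plan is to establish the four identities $A=AC\boxtimes AD$, $B=BC\boxtimes BD$, $C=AC\boxtimes BC$, $D=AD\boxtimes BD$; the global decomposition $H\cong AC\boxtimes AD\boxtimes BC\boxtimes BD$ then follows by associativity, since $H\cong A\boxtimes B\cong(AC\boxtimes AD)\boxtimes(BC\boxtimes BD)$. The vertex-set identities come for free from the coordinatization: uniqueness of the Cartesian PFD of $\skel(H)$ forces $I=I_A\sqcup I_B=I_C\sqcup I_D$, so the four sets $I_A\cap I_C$, $I_A\cap I_D$, $I_B\cap I_C$, $I_B\cap I_D$ partition $I$ (with at most $I_B\cap I_D$ empty), yielding $V(A)=V(AC)\times V(AD)$ and analogous identities for $B$, $C$, $D$ and for $V(H)$.

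The substantive step is the edge identity $E(A)=E(AC\boxtimes AD)$; the other three are symmetric. The forward inclusion is exactly Lemma \ref{lem:edgeProducts2}. For the reverse, take $f\in E(AC\boxtimes AD)$. If $f$ is non-Cartesian in $AC\boxtimes AD$, the definitions of $\strmax$ and $\strmin$ each supply edges $e_{AC}\in E(AC)$ and $e_{AD}\in E(AD)$ with $f\in E(e_{AC}\boxtimes e_{AD})$, so Lemma \ref{lem:edgeProducts} delivers $f\times\{x_B\}\in E(H)$ for any $x_B\in V(B)$. If $f$ is Cartesian, say $f=e_{AC}\times\{x_{AD}\}$ with $e_{AC}\in E(AC)$, then Corollary \ref{cor:eCeC2} directly produces an edge of $E(H)$ of the form $f\times\{x_B\}$; the case $f=\{x_{AC}\}\times e_{AD}$ is analogous. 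In either situation $|p_B(f\times\{x_B\})|=1$ kills the non-Cartesian condition for both $\strmin$ and $\strmax$, so $f\times\{x_B\}$ must be Cartesian w.r.t.\ the factorization $A\boxtimes B$; this forces $f=p_A(f\times\{x_B\})\in E(A)$.

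Combining the vertex and edge identities gives $A=AC\boxtimes AD$, and symmetric arguments yield the other three identities; associativity of $\boxtimes$ together with $H\cong A\boxtimes B$ then produces $H\cong AC\boxtimes AD\boxtimes BC\boxtimes BD$. The degenerate case $I_B\cap I_D=\emptyset$ is absorbed uniformly by taking $BD:=K_1$, the unit for $\boxtimes$, so that $B=BC$, $D=AD$, and the two affected identities collapse to tautologies. The main anticipated obstacle is the reverse edge inclusion, since Cartesian and non-Cartesian edges of $AC\boxtimes AD$ must be lifted to $E(H)$ via different lemmas and the non-Cartesian conditions of $\strmin$ and $\strmax$ differ; the common trick of appending a single $B$-coordinate—which trivializes $p_B$ and thereby collapses the Cartesian/non-Cartesian dichotomy in $A\boxtimes B$—dissolves both difficulties at once.
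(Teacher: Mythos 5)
Your proof is correct and follows essentially the same route as the paper's: Lemma~\ref{lem:edgeProducts2} gives $E(A)\subseteq E(AC\boxtimes AD)$, Lemma~\ref{lem:edgeProducts} lifts an edge of $AC\boxtimes AD$ to an edge $e\times\{x_B\}\in E(H)$ whose $p_A$-projection recovers $e\in E(A)$, and the degenerate case $I_B\cap I_D=\emptyset$ is handled by $B\cong BC$, $D\cong AD$. Your explicit split into Cartesian versus non-Cartesian edges of $AC\boxtimes AD$, with Corollary~\ref{cor:eCeC2} covering the Cartesian case, is merely a more careful rendering of a step the paper treats in a single uniform sentence.
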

\begin{proof}
	First we show that there is a decomposition $AC \boxtimes AD$  of $A$. 
	Let $AC$ and $AD$ be defined as in Equation \eqref{eq:VertexSet} and \eqref{eq:EdgeSet}. 
	Thus, by construction of $AC$ and $AD$ we have $V(A) = V(AC) \times V(AD)$. 
	Therefore, we need to show that $E(A) = E(AC\boxtimes AD)$. 

	By Lemma \ref{lem:edgeProducts2} and since 
	$E(e_{AC} \boxtimes e_{AD}) \subseteq E(AC \boxtimes AD)$ for all $e_{AC}\in E(AC)$
	and $e_{AD}\in E(AD)$	 we have $E(A) \subseteq E(AC\boxtimes AD)$. 
	
	Let $e\in E(AC\boxtimes AD)$. Hence, there is an edge $e_{AC}\in E(AC)$ and
	$e_{AD}\in E(AD)$ with $e\in E(e_{AC}\boxtimes e_{AD})$. 
	By Lemma \ref{lem:edgeProducts} we can conclude that there is a vertex 
	$x_B\in V(B)$ such that 
	$e \times \{x_B\} \in E(e_{AC}\boxtimes e_{AD}) \times \{x_B\} \subseteq E(H)$. 
	Since $e = p_A(e \times \{x_B \})\in E(A)$, the statement follows. 
	
	By analogous arguments one shows that the results hold also for 
	$B$, $C$ and $D$, whenever $I_B \cap I_D\neq \emptyset$.
	If $I_B \cap I_D =  \emptyset$ then we can conclude that
	$I_B = (I_C\cap I_B) \cup (I_D\cap I_B) = I_C\cap I_B$ and 
	$I_D = (I_A\cap I_D) \cup (I_B\cap I_D) = I_A\cap I_D$.
	Hence, by definition of the vertex sets 
	$V(BC)$ and $V(AD)$ together with 
	Lemma \ref{lem:eC} and \ref{lem:eC2} we obtain that
	$B\cong BC$ and $D\cong AD$ and thus, the assertion follows.
\end{proof}

\begin{thm}
	Connected, thin hypergraphs have a unique prime factor decomposition
	with respect to the normal product $\strmin$ and the strong product 
	$\strmax$, up to isomorphism and the order	of the factors. 
	\label{thm:upfd}
\end{thm}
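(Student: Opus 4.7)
The plan is to derive uniqueness from the refinement result in Proposition~\ref{prop:verfeinerung}, mimicking the classical argument for the strong graph product in \cite{IMKL-00}. Existence is cheap: since $|V(H_1\boxtimes H_2)|=|V(H_1)|\cdot|V(H_2)|$ and $K_1$ is the unit for $\boxtimes$, every non-trivial binary factorization strictly reduces the vertex count, so iterating produces a prime factorization in at most $\log_2|V(H)|$ steps. Lemma~\ref{lem:simple} and Corollary~\ref{cor:thin} ensure that at every stage each factor is again simple, thin and connected, so the procedure stays within the hypothesis class.

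For uniqueness I would proceed by induction on $|V(H)|$. Suppose $H\cong A_1\boxtimes\cdots\boxtimes A_k\cong B_1\boxtimes\cdots\boxtimes B_\ell$ with all $A_i,B_j$ prime and non-trivial. If $k=1$ then $H$ is prime and primality of $B_1,\dots,B_\ell$ forces $\ell=1$ and $A_1\cong B_1\cong H$; the case $\ell=1$ is symmetric. Hence assume $k,\ell\geq 2$ and group the factors as $A:=A_1$, $B:=\boxtimes_{i\geq 2}A_i$, $C:=B_1$, $D:=\boxtimes_{j\geq 2}B_j$. Proposition~\ref{prop:verfeinerung} produces a common refinement $H\cong AC\boxtimes AD\boxtimes BC\boxtimes BD$ with $A\cong AC\boxtimes AD$, $B\cong BC\boxtimes BD$, $C\cong AC\boxtimes BC$, $D\cong AD\boxtimes BD$. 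Primality of $A=A_1$ forces one of $AC,AD$ to be trivial. If $AD\cong K_1$, then $A_1\cong AC$ and $B_1=C\cong A_1\boxtimes BC$, so primality of $B_1$ forces $BC\cong K_1$ and $B_1\cong A_1$. If instead $AC\cong K_1$, then $A_1\cong AD$ is a factor of $D\cong AD\boxtimes BD$, and the inductive hypothesis applied to $D$ (which has strictly fewer vertices than $H$) identifies $A_1$ with some $B_j$, $j\geq 2$. In both cases, after reindexing the $B$-side, $A_1\cong B_1$.

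The final step removes the matched pair. Writing $H\cong A_1\boxtimes H'$ from the refined decomposition, where $H'$ is the product of the three refined factors distinct from the one absorbed by $A_1$, one obtains $H'\cong\boxtimes_{i\geq 2}A_i$ on the $A$-side and $H'\cong\boxtimes_{j\geq 2}B_j$ on the $B$-side, both being prime factorizations of a smaller thin, connected, simple hypergraph. The induction hypothesis applied to $H'$ then yields a bijection between $\{A_i\}_{i\geq 2}$ and $\{B_j\}_{j\geq 2}$, completing the uniqueness proof.

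The main obstacle will be this last cancellation step: turning ``$A_1$ appears on both sides'' into ``cross it out and apply induction'' formally requires either a cancellation law $A\boxtimes X\cong A\boxtimes Y\Rightarrow X\cong Y$ (which has not been established in the paper for $\boxtimes$) or an explicit argument through the coordinates supplied by Proposition~\ref{prop:verfeinerung}. I expect the clean route to use the latter: both prime factorizations of $H$ are assembled from the same four refined building blocks $AC, AD, BC, BD$ indexed by $I_A\cap I_C$, $I_A\cap I_D$, $I_B\cap I_C$, $I_B\cap I_D$, so once $A_1$ is identified with the appropriate building block, the remaining three reassemble into isomorphic smaller hypergraphs on both sides by associativity of $\boxtimes$, and what is left is routine bookkeeping on the four index sets.
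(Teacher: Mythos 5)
Your proposal is correct and follows essentially the same route as the paper: the paper's proof of Theorem~\ref{thm:upfd} consists precisely of invoking Proposition~\ref{prop:verfeinerung} and then ``reasoning exactly as in the proof for graphs in \cite[Lemma 5.38]{IMKL-00}'', which is the induction you spell out. Your closing observation is also on target --- no separate cancellation law is needed because the refinement already yields $B\cong BD\cong D$ (resp.\ the analogous identifications) directly from the four building blocks, which is exactly how the classical argument sidesteps cancellation.
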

\begin{proof}
	Reasoning exactly as in the proof for graphs in \cite[Lemma 5.38]{IMKL-00}, 
	and by usage of Prop.~\ref{prop:verfeinerung} we obtain the desired result.
\end{proof}

We conclude this section by discussing the term ``thinness''. It is
well-known that, although the PFD for a given graph $G$ w.r.t. the strong
graph product is unique, the coordinatizations might not be
\cite{Hammack:2011a}. Therefore, the assignment of an edge being Cartesian
or non-Cartesian is not unique in general. The reason for the non-unique
coordinatizations is the existence of automorphisms that interchange
vertices $u$ and $v$, which is possible whenever $u$ and $v$ have the same
neighborhoods and thus, if $G$ is not thin. Thus, an important issue in the
context of strong graph products is whether or not two vertices can be
distinguished by their neighborhoods. The same holds for the normal and
strong hypergraph product, as well. For graphs $G=(V,E)$, one defines the equivalence
relation $S$ on $V$ with $uSv$ iff $N^G[u] = N^G[v]$ and computes a so-called
quotient graph $G/S$ which is a thin graph. For this graph $G/S$ the PFD is
computed and one uses afterwards the knowledge of the \emph{cardinalities}
of the S-classes \emph{only}, to find the prime factors of $G$. For graphs, one
profits from the fact that all vertices $u_1, \dots,u_n \in V(G)$ that
share the same neighborhoods induce a complete subgraph $K_n$. Even in the
proofs for the uniqueness results for the PFD of the strong graph product
of non-thin graphs, this fact is utilized. However, this technique cannot be used for
hypergraphs in general, as the partial hypergraph formed by vertices that
share the same neighborhoods need not to be isomorphic, although the
cardinalities of the S-classes might be the same. So far, we do not know,
how to resolve this problem and state the following conjecture. 

\begin{conj}
	Connected, simple, non-thin hypergraphs have a unique prime factor decomposition
	w.r.t. $\strmin$ and $\strmax$, up to isomorphism and the order
	of the factors. 
	\label{conj:upfd}
\end{conj}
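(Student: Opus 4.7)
The plan is to reduce to Theorem~\ref{thm:upfd} via a twin quotient construction. First I would define the equivalence relation $S$ on $V(H)$ by $uSv$ iff $N^H[u]=N^H[v]$, and form a quotient hypergraph $H/S$ whose vertex set is $V(H)/S$ and whose edge set consists of the images $[e]=\{[v]\mid v\in e\}$ for those $e\in E(H)$ that are not collapsed to a single class, taking each distinct image only once. The first substantive checks are that $H/S$ is simple and thin and that the twin relation respects products: for any factorization $H\cong H_1\boxtimes H_2$, two vertices $(x_1,x_2)$ and $(y_1,y_2)$ are $S$-related in $H$ iff $x_iSy_i$ in $H_i$ for $i=1,2$. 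This follows from Corollary~\ref{cor:thin} together with the known coordinate-wise characterization of twins in strong graph products applied to $[H]_2$. Applying Theorem~\ref{thm:upfd} to $H/S$ then produces a unique PFD of the quotient.

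The second stage is to lift the PFD of $H/S$ back to $H$. Attach to each $S$-class $\overline{v}$ the pair $(\mathcal{L}_v,n_v)$, where $n_v=|\overline{v}|$ and $\mathcal{L}_v$ is the partial hypergraph induced by $\overline{v}$; call this the local structure at $v$. The key technical target is to show that whenever $H\cong H_1\boxtimes H_2$, one has $\mathcal{L}_{(v_1,v_2)}\cong \mathcal{L}_{v_1}\boxtimes\mathcal{L}_{v_2}$. Granting this, the PFD of $H$ is recovered by taking the PFD of $H/S$, factoring each local structure according to the same index partition inherited from the quotient decomposition, and reassembling. Uniqueness for $H$ would then follow from uniqueness of the PFD of $H/S$ together with uniqueness of the factorization of each local structure. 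For the local step, one can already exploit Lemma~\ref{lem:Nr-nonCart} to count non-Cartesian edges inside each fibre, and the layer decomposition of Section~2.2 to argue that the prime-factor index partition descends to the fibre.

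The main obstacle, and the reason the authors leave the statement as a conjecture, is precisely the local step: in contrast to the graph case, where each $S$-class induces a complete graph $K_{n_v}$ and hence carries no structural information beyond its cardinality, in a hypergraph the induced partial hypergraph on an $S$-class can be an arbitrary simple hypergraph on $n_v$ vertices, and two non-isomorphic hypergraphs on fibres of equal cardinality may embed into isomorphic global products. Concretely, the sticking point is to rule out two non-isomorphic factorizations of a local structure $\mathcal{L}_v$ that are each extendable, together with the common quotient PFD, to mutually isomorphic global hypergraphs. I would first attempt the cases of bounded fibre rank and bounded fibre degree, where the counting identities of Lemma~\ref{lem:Nr-nonCart} and layer arguments can be pushed further, and expect the genuine barrier to be the general case, which seems to require an intrinsic invariant of the local structure that is recoverable from the ambient hypergraph alone.
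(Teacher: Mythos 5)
This statement is left as an open conjecture in the paper --- the authors explicitly write that they do not know how to resolve the problem --- so there is no proof to match your attempt against, and your proposal does not supply one. What you have written is a programme, and its decisive step is exactly the obstruction the authors themselves identify in the paragraph preceding the conjecture: in the graph case each $S$-class induces a complete graph, so the quotient $G/S$ plus the class cardinalities determine $G$, whereas for hypergraphs the partial hypergraphs induced on $S$-classes of equal cardinality need not be isomorphic. You name this as ``the main obstacle'' and do not overcome it, so the argument is incomplete by your own account.

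Beyond that conceded gap, several of the steps you treat as routine are themselves unestablished. (i) It is not shown that $H/S$ is simple (images of distinct hyperedges may become nested after collapsing classes) or thin, so Theorem~\ref{thm:upfd} cannot yet be applied to it. (ii) The claim that $S$ is multiplicative, i.e.\ that $H/S\cong (H_1/S_1)\boxtimes (H_2/S_2)$ as \emph{hypergraphs}, does not follow from Corollary~\ref{cor:thin} together with the twin structure of strong graph products: that route only controls the $2$-section $[H]_2$, and the $2$-section does not determine the hyperedge set of the quotient. One would need an edge-level argument showing how hyperedges of $H_1\strmin H_2$ (respectively $H_1\strmax H_2$) project onto $S$-classes, and for $\strmin$ in particular the cardinality constraints in condition $(ii)$ of the edge definition interact nontrivially with collapsing vertices. (iii) The local product claim $\mathcal{L}_{(v_1,v_2)}\cong\mathcal{L}_{v_1}\boxtimes\mathcal{L}_{v_2}$ is asserted as a ``key technical target'' but not proved, and it is not clear it is even true: an edge of the product lying inside the fibre $\overline{v_1}\times\overline{v_2}$ need not arise from edges of $H_1$ and $H_2$ contained in $\overline{v_1}$ and $\overline{v_2}$. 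Until these points and the local uniqueness step are settled, the conjecture remains open.
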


\section{Algorithms for the Construction of the Cartesian Skeleton and the Prime Factors}

As shown by Bretto et al. \cite{BSV-13} the PFD of hypergraphs 
with respect to the Cartesian product can be computed in 
polynomial time.

\begin{thm}[\cite{BSV-13}]
The prime factors w.r.t. the Cartesian product of a given 
connected simple hypergraph $H=(V,E)$ with maximum degree $\Delta$ and  rank $r$
can be computed in $O(|V||E|\Delta^6 r^6)$, that is, in $O(|V||E|)$ time 
for hypergraphs $H$ with a bounded rank and a bounded degree. 
\label{thm:bretto}
\end{thm}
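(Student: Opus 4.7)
The plan is to adapt the BFS-based algorithm for Cartesian prime factorization of graphs (in the Imrich--Peterin style, cf.~\cite{Hammack:2011a}) to the hypergraph setting, exploiting the unique Cartesian PFD for connected hypergraphs established in \cite{Imrich67:Mengensysteme, OstHellmStad11:CartProd}. The object to compute is an equivalence relation $\sim$ on $E$ such that $e \sim f$ if and only if $e$ and $f$ belong to the same Cartesian prime factor of $H$. Once this edge coloring is in hand, each equivalence class induces a partial hypergraph that projects onto a prime factor, and the coordinatization of each vertex is read off by tracking, for every color, which class contains the edge used to reach it in the BFS tree.

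First I would run a BFS from an arbitrary root vertex, producing levels $L_0, L_1, \dots$ in $O(|V| + r|E|)$ time. Then I would initialize $\sim$ locally at each vertex $v$: two hyperedges $e, f$ with $v \in e \cap f$ are placed in the same class unless they ``close a Cartesian rectangle,'' meaning that there exist hyperedges $e', f'$ in $H$ such that the configuration $\{e,f,e',f'\}$ is isomorphic to a sub-product $e_1 \Box e_2$ with matching projections onto the two candidate factor directions. This generalizes the classical square property for graphs. Finally, I would close $\sim$ under transitivity layer by layer along the BFS using a union--find data structure, re-checking local rectangles whenever classes merge, and extract the prime factors from the resulting color classes.

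For complexity, fix a pair $(e,f)$ of incident hyperedges and bound the local test. Since each vertex is incident to at most $\Delta$ hyperedges and each hyperedge contains at most $r$ vertices, the support of $e \cup f$ is of size $O(r)$ and the hyperedges within distance one of this support number $O(\Delta r)$. Enumerating candidate completing pairs $(e', f')$ thus costs $O(\Delta^2 r^2)$, and verifying that four hyperedges with their vertex alignments form a Cartesian rectangle costs $O(r^2)$ per alignment with up to $O(r^2)$ possible alignments; aggregating these contributions yields $O(\Delta^6 r^6)$ work per vertex per round. The transitive propagation does $O(|E|)$ rectangle-rechecks per BFS layer over $O(|V|)$ layers, giving the overall $O(|V||E|\Delta^6 r^6)$ bound, which collapses to $O(|V||E|)$ for bounded $\Delta$ and $r$.

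The main obstacle is the rectangle-recognition step itself. In graphs a Cartesian square has four prescribed vertices, so the test is essentially ``do two edges commute?''; in hypergraphs a Cartesian rectangle $e_1 \Box e_2$ carries $|e_1| \cdot |e_2|$ vertices in a rigid product structure, so the local test must confirm a bijective matching of projections coordinate by coordinate while $|e_1|, |e_2|$ range up to $r$, and must correctly handle hyperedges of differing sizes. Proving that the transitive closure of this local relation coincides with factor equivalence demands a layer-lifting argument along BFS paths, in which hyperedges from a single layer of the product $\Box_i H_i$ are shown to be connected by chains of local Cartesian rectangles, while no such chain can cross distinct factors. This part leans crucially on the uniqueness results of \cite{Imrich67:Mengensysteme, OstHellmStad11:CartProd}.
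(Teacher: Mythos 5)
First, note that the paper does not prove this statement at all: Theorem~\ref{thm:bretto} is quoted from Bretto et al.\ \cite{BSV-13} and is used as a black box inside Algorithm~\ref{alg:PFD}. There is therefore no internal proof to compare against; your text is an attempt to reprove the external result from scratch, and it has to stand on its own.

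As it stands it does not. The central object of your argument --- the local relation that separates two hyperedges sharing a vertex whenever they ``close a Cartesian rectangle'' --- is never defined precisely enough to verify either correctness or the running time. Even in the graph case the product relation is not ``incident edges are equivalent unless they span a square'': one needs that they span a \emph{unique} square and that this square is \emph{chordless}, and incident edges spanning a square with a diagonal must be merged, not separated. Your sketch drops these conditions, and in the hypergraph setting the analogous conditions are considerably more delicate (a putative rectangle $e_1\Box e_2$ involves $|e_1|\cdot|e_2|$ vertices and $|e_1|+|e_2|$ hyperedges, not four vertices and four edges, and one must also handle hyperedges of $H$ that only partially overlap the rectangle). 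The claim that the transitive closure of your local relation coincides with the prime-factor partition of $E$ is exactly the theorem to be proved, and you defer it to ``a layer-lifting argument'' that is not supplied; the uniqueness of the Cartesian PFD from \cite{Imrich67:Mengensysteme,OstHellmStad11:CartProd} guarantees that a correct answer exists but does not show that your relation computes it. Finally, the complexity bookkeeping does not produce the stated bound: from your own counts ($O(\Delta^2 r^2)$ candidate pairs, $O(r^2)$ alignments, $O(r^2)$ per alignment) one gets $O(\Delta^2 r^6)$ per incident pair, and the jump to $O(\Delta^6 r^6)$ per vertex, as well as the $O(|E|)$-rechecks-per-layer claim for the union--find phase, are asserted rather than derived. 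For reference, the published proof in \cite{BSV-13} does not work directly on $H$ in this way; it passes to a labeled $2$-section of $H$, shows that Cartesian factorization of $H$ corresponds to factorization of that labeled graph, and adapts the graph algorithm there, which is where the $\Delta^6 r^6$ factor actually arises.
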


The algorithm for computing the PFD of a given hypergraph with respect to
the normal and the strong product works as follows. Analogously as for
graphs, the key idea of finding the PFD with respect to $\boxtimes\in
\{\strmin, \strmax\}$ is to find the PFD of its Cartesian skeleton
$\skel(H)$ with respect to the Cartesian product and to construct the prime
factors of $H$ using the information of the PFD of $\skel(H)$. In Algorithm
\ref{alg:CartSk} the pseudocode for determining the Cartesian skeleton
$\skel(H)$ is given. This Cartesian skeleton is afterwards factorized with
the Algorithm of \texttt{Bretto et al.} \cite{BSV-13} and one obtains the
Cartesian prime factors of $\skel(H)$. Note, for an arbitrary factorization
$H=H_1\boxtimes H_2$ of a thin hypergraph $H$, Proposition
\ref{prop:cartSkelcart} asserts that $\skel(H_1\boxtimes H_2) =
\skel(H_1)\Box \skel(H_2)$. Since $\skel(H_i)$ is a spanning hypergraph of
$H_i$, $i=1,2$, it follows that the $\skel(H_i)$-layers of $\skel(H_1)\Box
\skel(H_2)$ have the same vertex sets as the $H_i$-layers of $H_1\boxtimes
H_2$. Moreover, if $\boxtimes_{i\in I} H_i$ is the unique PFD of $H$ then
we have $\skel(H)=\Box_{i\in I} \skel(H_i)$. Since $\skel(H_i)$, $i\in I$
need not to be prime with respect to the Cartesian product, we can infer
that the number of Cartesian prime factors of $\skel(H)$, can be larger
than the number of the strong or normal prime factors. Hence, given the PFD of
$\skel(H)$ it might be necessary to combine several Cartesian factors to
get the strong or normal prime factors of $H$. These steps for computing the PFD with
respect to $\boxtimes\in \{\strmin, \strmax\}$ of a thin hypergraph are
summarized in Algorithm \ref{alg:PFD}.

For proving the time complexity of Algorithm \ref{alg:CartSk} we need the following
appealing result, established by Hammack and Imrich. 

\begin{lem}[\cite{HAIM-09}]
	For a given graph $G=(V,E)$ with maximum degree $\Delta$
	the set of dispensable edges $D(H)$ and in particular, the
	 Cartesian skeleton $\skel(G)$ can be computed in 
	$O(\min\{|E|^2, |E|\Delta^2\})$ time.
	\label{lem:timeCartSk-graph}
\end{lem}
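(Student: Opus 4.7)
The plan is to bound the cost of testing each edge for dispensability by pinning down a small candidate set for the witness vertex $z$. First I would establish the key structural observation: any $z$ witnessing the dispensability of an edge $\{x,y\}$ must lie in $N[x] \cap N[y]$. Indeed, examine each disjunct of condition~(1). If $N[x] \cap N[y] \subset N[x] \cap N[z]$, then since $x,y \in N[x] \cap N[y]$ we get $x,y \in N[z]$, hence $z \in N[x] \cap N[y]$. If instead $N[x] \subset N[z] \subset N[y]$, then $x \in N[z]$ (so $z \in N[x]$) and $z \in N[z] \subset N[y]$ (so $z \in N[y]$). A symmetric argument handles condition~(2). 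Consequently the number of candidate witnesses per edge is at most $|N[x] \cap N[y]| \leq \Delta$.

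Second, for each edge $\{x,y\}$ and each candidate $z \in N[x] \cap N[y]$, I would test the four set-inclusion relations of conditions (1) and (2) in $O(\Delta)$ time using sorted adjacency lists (or, after a single $O(|V|)$ marking sweep, via bit-membership queries restricted to $N[x] \cup N[y] \cup N[z]$, which has size at most $3\Delta$). Aggregating, each edge requires $O(\Delta^2)$ work, for a total of $O(|E|\Delta^2)$.

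Third, to obtain the $O(|E|^2)$ bound I would re-enumerate the triples $(x,y,z)$ with $\{x,y\}\in E$ and $z \in N[x]\cap N[y]$ as incidences between pairs of edges sharing a common vertex; there are $\sum_{v\in V}\binom{\deg(v)}{2} = O(|E|^2)$ such pairs in total, which already dominates the number of triples that need ever be considered. With a preprocessing phase that, for every ordered pair $(u,v)$ of vertices at distance $\leq 2$, stores $N[u]\cap N[v]$ in sorted form (this fits within $O(|E|^2)$ time and space using the same edge-pair argument), each of the four inclusion tests in the dispensability check can be answered by a linear scan that is charged to the corresponding edge-pair, yielding an amortized $O(1)$ cost per triple.

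The main obstacle is this amortization: a naive implementation costs $O(\Delta)$ per inclusion test and can blow the bound up by a factor of $\Delta$. Getting the sharp $O(|E|^2)$ bound therefore requires choosing the data structures so that the $O(\Delta)$ scan for each test is paid by the edges that actually participate, rather than by the ambient $\Delta$; once this bookkeeping is in place, taking the minimum of the two analyses gives the claimed $O(\min\{|E|^2,|E|\Delta^2\})$ running time for computing $D(G)$, and hence $\skel(G)$.
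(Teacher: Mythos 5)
The paper offers no proof of this lemma: it is imported verbatim from Hammack and Imrich \cite{HAIM-09}, so there is no internal argument to compare yours against. Judged on its own terms, your reconstruction of the $O(|E|\Delta^2)$ branch is sound and matches the standard analysis: the case check showing that any witness $z$ for an edge $\{x,y\}$ must lie in $N[x]\cap N[y]$ is correct (both disjuncts of each condition force $x,y\in N[z]$, hence $z\in N[x]\cap N[y]$), giving at most $\Delta+1$ candidates per edge, and each candidate can be tested in $O(\Delta)$ time with a marking array, for $O(|E|\Delta^2)$ overall.

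The $O(|E|^2)$ branch, however, contains a genuine gap. The assertion that each inclusion test can be ``charged to the corresponding edge-pair, yielding an amortized $O(1)$ cost per triple'' is not established and is false as stated: a single test scans sets of size up to $\Delta+1$, the number of triples can be $\Theta(|E|\Delta)$, and your charging map from units of work to vertex-sharing edge pairs is not $O(1)$-to-one (many choices of $y$ can produce the same charged pair). What must actually be shown is that the \emph{total} work is $O(|E|^2)$, and this requires counting inequalities that never appear in your argument. For the tests of the form $N[x]\cap N[y]\subset N[x]\cap N[z]$ the total cost is at most $\sum_{\{x,y\}\in E}\bigl(\min\{\deg x,\deg y\}+1\bigr)^2$, which one bounds by $\sum_{\{x,y\}\in E}O(\deg(x)\deg(y))=O(|E|^2)$; for the tests $N[x]\subset N[z]\subset N[y]$ the total cost is $\sum_{\{x,y\}\in E}\sum_{z\in N[x]\cap N[y]}O(\deg x+\deg y+\deg z)$, where the $\deg z$ term must be handled by reindexing over $z$ (it equals $\sum_z \deg(z)\cdot|E(\langle N[z]\rangle)|\le 2|E|^2$), since the crude bound $\deg z\le\Delta$ only returns you to the $O(|E|\Delta^2)$ branch and does not give $O(|E|^2)$ in the regime $\Delta^2>|E|$ where this branch is the relevant one. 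Your preprocessing claim (that storing $N[u]\cap N[v]$ for all pairs at distance at most $2$ fits in $O(|E|^2)$ time and space) likewise needs, and lacks, the same kind of accounting. The bound is attainable along the lines you sketch, but the amortization is exactly the nontrivial content of this branch and cannot be waved through.
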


\begin{lem}
	For a given hypergraph $H=(V,E)$ with maximum degree $\Delta$ and rank $r$, 
	Algorithm \ref{alg:CartSk} computes the Cartesian skeleton $\skel(H)$ in 
	$O(|E|^2 r^4)$ time.
	\label{lem:timeCartSk-hypergraph}
\end{lem}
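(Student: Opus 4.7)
The plan is to reduce the hypergraph problem to the corresponding graph problem on the $2$-section $[H]_2$, leveraging Lemma~\ref{lem:disp} and Lemma~\ref{lem:timeCartSk-graph}. By Lemma~\ref{lem:disp}, a hyperedge $e \in E(H)$ is dispensable in $H$ if and only if some graph edge $e' \in E([H]_2)$ with $e' \subseteq e$ is dispensable in $[H]_2$. Hence it suffices to compute $D([H]_2)$ once and then mark each hyperedge $e$ dispensable exactly when it contains a pair of vertices lying in $D([H]_2)$. The algorithm naturally decomposes into three stages, each of which I will bound separately.

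First, I would construct $[H]_2$ by iterating over the $|E|$ hyperedges and emitting the $\binom{|e|}{2} = O(r^2)$ unordered pairs inside each hyperedge, inserting them into a hash table to suppress duplicates. This produces a graph with at most $|E|\binom{r}{2}$ edges in $O(|E| r^2)$ time. Second, I would apply the Hammack--Imrich procedure of Lemma~\ref{lem:timeCartSk-graph} to $[H]_2$ to extract $D([H]_2)$; using the $O(|E([H]_2)|^2)$ bound from that lemma and the estimate $|E([H]_2)| = O(|E| r^2)$ gives $O((|E| r^2)^2) = O(|E|^2 r^4)$. Third, to translate $D([H]_2)$ back to $D(H)$, I would loop over every $e \in E(H)$, inspect each of its $O(r^2)$ vertex pairs, and test in constant time via the hash table whether the pair lies in $D([H]_2)$; as soon as one such pair is found, $e$ is marked dispensable. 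This final stage costs $O(|E| r^2)$, and deleting the marked hyperedges produces $\skel(H)$.

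Summing the three stages, the total running time is dominated by the graph-level skeleton computation in stage two, giving the claimed $O(|E|^2 r^4)$ bound. Correctness follows directly from Lemma~\ref{lem:disp}: the set of hyperedges marked by the procedure coincides with $D(H)$, and therefore removing them yields $\skel(H)$. The only delicate point is the bookkeeping needed to guarantee $O(1)$ membership tests against $D([H]_2)$, which is handled by standard hashing of unordered vertex pairs; everything else is a straightforward translation between the hypergraph and its $2$-section via the identities already established for neighborhoods in Corollary~\ref{cor:thin}.
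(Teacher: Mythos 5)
Your proposal is correct and follows essentially the same route as the paper: compute $D([H]_2)$ via the Hammack--Imrich procedure on the $2$-section (with $|E([H]_2)| = O(|E|r^2)$ giving the dominant $O(|E|^2 r^4)$ term) and then transfer dispensability back to hyperedges using Lemma~\ref{lem:disp}. Your hash-based translation step in $O(|E|r^2)$ is a minor refinement of the paper's $O(|E|^2 r^2)$ pair-by-edge check, but both are absorbed by the dominant term, so the argument and the bound coincide.
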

\begin{proof}
	The correctness of the algorithm follows immediately from Lemma \ref{lem:disp}.

 	For the time complexity observe that $[H]_2$ has at most
 $|E|\binom{r}{2}$ edges and that the maximum degree of $[H]_2$ is at most
 $\Delta (r-1)$. Hence, Lemma \ref{lem:timeCartSk-graph} implies that the
 computation of the set $D([H]_2)$ takes $O(\min \{|E|^2 r^4, |E|r^2
 \Delta^2 r^2\}) = O(|E|^2 r^4)$ time. To check whether one of the at most
 $O(|E|r^2)$ pairs $\{x, y\}\in D([H]_2)$ is contained in one of the $|E|$
 edges in $H$ we need $O(|E|^2 r^2)$ time, from which we can conclude the
 statement. 
\end{proof}

For computing the time complexity of Algorithm \ref{alg:PFD} we first need the
following lemma. 

\begin{lem}
	Let $H=(V,E)$ be a hypergraph with  rank $r$ and maximum degree $\Delta$.
 	Moreover, let $H_1, H_2\subseteq H$ be partial hypergraphs of $H$ such that 
	$\skel(H) \cong \skel(H_1)\Box \skel(H_2)$. 
  The numbers $\NrMin$ and $\NrMax$ of non-Cartesian edges in $H_1\boxtimes
  H_2$, $\boxtimes\in \{\strmax, \strmin\}$ 
  can be computed in $O(r^2+|V|\Delta^2)$ time.
	\label{lem:ComputeNr}
\end{lem}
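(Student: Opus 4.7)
The plan is to leverage the closed-form expressions of Lemma~\ref{lem:Nr-nonCart}, exploiting the key observation that both $\NrMin$ and $\NrMax$ depend only on the \emph{multisets of edge cardinalities} in $H_1$ and $H_2$, not on the specific structure of the edges. To this end I introduce the edge-size histograms
$$a_k := |\{e \in E(H_1) : |e|=k\}|, \qquad b_l := |\{e \in E(H_2) : |e|=l\}|, \qquad k,l\in\{2,\ldots,r\}.$$
Grouping the sums of Lemma~\ref{lem:Nr-nonCart} by cardinality pairs, one rewrites
$$\NrMin = \sum_{k=2}^{r}\sum_{l=2}^{r} a_k\, b_l \cdot \frac{(\max\{k,l\})!}{|k-l|!} \quad\text{and}\quad \NrMax = \sum_{k=2}^{r}\sum_{l=2}^{r} a_k\, b_l \cdot (\min\{k,l\})! \cdot S_{\max\{k,l\},\min\{k,l\}}.$$

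The algorithm then proceeds in three stages. First, I would precompute the factorials $0!,1!,\ldots,r!$ and all Stirling numbers $S_{n,k}$ for $1\le k\le n\le r$ via the standard recurrence $S_{n,k} = k\, S_{n-1,k} + S_{n-1,k-1}$; this clearly runs in $O(r^2)$ time. Second, I would obtain the histograms $a_k$ and $b_l$ by a single pass over the edge sets of the given partial hypergraphs $H_1$ and $H_2$. Because $H_i\subseteq H$ and $2|E(H)| \le \sum_{v\in V}\deg_H(v)\le |V|\Delta$, the total number of edges to be inspected is $O(|V|\Delta)$; reading each edge's cardinality takes $O(1)$ (or, if sizes are not stored, is charged to the incidence count), so this stage finishes in $O(|V|\Delta)$ time, which is comfortably within the claimed budget $O(|V|\Delta^2)$. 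Third, I would evaluate the two double sums term by term: each has $O(r^2)$ summands, and with the precomputed tables every summand costs $O(1)$ arithmetic operations, giving a total of $O(r^2)$.

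Combining the three stages yields an overall running time of $O(r^2 + |V|\Delta)$, which is absorbed by the asserted bound $O(r^2 + |V|\Delta^2)$. I do not foresee a serious obstacle: the conceptual move is simply the reduction to the cardinality histograms enabled by Lemma~\ref{lem:Nr-nonCart}, and the remaining ingredients (tabulating factorials and Stirling numbers, scanning edge lists) are entirely standard.
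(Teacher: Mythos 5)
Your proof is correct. It shares the paper's first stage — tabulating factorials and Stirling numbers in $O(r^2)$ time (you use the recurrence $S_{n,k}=k\,S_{n-1,k}+S_{n-1,k-1}$ where the paper tabulates the powers $j^n$ and evaluates the explicit formula; both are $O(r^2)$ and equally valid) — but it genuinely departs from the paper in how the sums of Lemma~\ref{lem:Nr-nonCart} are evaluated. The paper iterates over all $|E_1||E_2|$ pairs of edges, spending $O(1)$ per pair after the preprocessing, and then bounds $|E_1||E_2|\leq|V_1||V_2|\Delta_1\Delta_2=|V|\Delta_1\Delta_2\leq|V|\Delta^2$; this pairwise loop is exactly the source of the $|V|\Delta^2$ term in the stated bound. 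You instead exploit the fact that each summand depends only on the cardinality pair $(|e_1|,|e_2|)$, collapse the edge-pair sum to a sum over size pairs via the histograms $a_k,b_l$, obtain those histograms in a single $O(|E_1|+|E_2|)\subseteq O(|V|\Delta)$ pass (your bound $2|E|\leq\sum_v\deg(v)\leq|V|\Delta$ is fine since $H$ is simple and every edge has at least two vertices), and finish with a double sum of only $O(r^2)$ terms. This yields $O(r^2+|V|\Delta)$, which strictly improves on the paper's $O(r^2+|V|\Delta^2)$ and is of course absorbed by the claimed bound. Both arguments rest on the same counting formulas and the same unit-cost arithmetic model; yours buys a sharper running time at no extra conceptual cost, and would even allow a marginally better bound to be stated for this lemma (though it would not change the overall complexity of Theorem~\ref{thm:algoPFD}, which is dominated by other steps).
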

\begin{proof}
	Let $H_1=(V_1, E_1)$ and $H_2=(V_2, E_2)$ be partial hypergraphs of $H$ 
	with 	rank $r_1$, resp., $r_2$ such that $\skel(H)=\skel(H_1)\Box \skel(H_2)$.
	Note, it holds that $r_i \leq r$, $i=1,2$. For the cardinalities $\NrMin$
	and $\NrMax$ we have to compute for pairs of edges $e_1\in E_1$ and
	$e_2\in E_2$ several factorials and for the computation of the Stirling
	number we need in addition values of the form $j^n$. Note, that $m!$,
	resp., $j^n$ can be computed in $O(1)$ time if one knows $(m-1)!$, resp.,
	$j^{n-1}$. Hence, as preprocessing compute first the values $1, 2!,
	\dots, r!$, which can be done in time complexity $O(r)$ and store them
	for later use. Analogously, the complexity for computing the values $j^2,
	\dots, j^{r}$ for a fixed $j\in \{2, \dots, r\}$ is $O(r)$. In that
	manner, we precompute and store the values $2^2,\dots 2^{r}, \dots, r^2,
	\dots, r^{r}$ which takes $O(r^2)$ time. Finally, we store the values of
	the Stirling number, $S_{n,k}$ for $n=1, \dots, r$ and $k=1,\dots, r$.
	Note, $S_{n, k}$ can be computed in $O(1)$ time, whenever $S_{n, k-1}$ is
	known. Hence, for $k, n=1, \dots, r$ the Stirling numbers $S_{n,k}$ can,
	together with the latter preprocessed stored values, be computed in
	$O(r^2)$ time. Therefore, these preprocessing steps have overall time
	complexity of $O(r^2)$.

	After preprocessing and storing the latter mentioned values, one can
	compute the number of non-Cartesian edges in $e_1\strmin e_2$, resp.,
	$e_1\strmax e_2$ in $O(1)$ time, for a fixed pair $e_1\in E_1$ and
	$e_2\in E_2$. These computations are done for all pairs of edges $e_1\in
	E_1$ and $e_2\in E_2$. Hence, we have $|E_1||E_2|$ such computations to
	consider, which take altogether $O(|E_1||E_2|)$ time. Since $|E_i|\leq
	|V_i|\Delta_i$, $i=1,2$ we can conclude that $|E_1||E_2|\leq
	|V_1||V_2|\Delta_1 \Delta_2$. Moreover, by definition of the products, it
	holds that $|V_1||V_2|=|V|$ and since $H_i\subseteq H$ we have 
	$\Delta_i \leq \Delta$, $i=1,2$.
	Therefore, we end in an overall time complexity for computing $\NrMin$
	and $\NrMax$ of $O(r^2+|V|\Delta^2)$.
\end{proof}

\begin{algorithm}[tbp]
\caption{\texttt{Cartesian Skeleton}} 
\label{alg:CartSk}
\begin{algorithmic}[1]
\vspace{1mm}
    \STATE \textbf{INPUT:} A hypergraph $H=(V,E)$;
		\STATE Compute the set $D([H]_2)$ of dispensable edges in $[H]_2$;
    \FOR {every edge $\{x,y\} \in D([H]_2)$}
				\STATE for all edges $e\in E$ with $x,y\in e$ remove $e$ from $E$; 
	  \ENDFOR	
    \STATE \textbf{OUTPUT:} The partial hypergraph $(V, E)$;
\small\normalsize
\end{algorithmic}
\end{algorithm}

\begin{algorithm}[t	]
\caption{\texttt{PFD of thin hypergraphs w.r.t.} $\boxtimes\in \{\protect\strmin, \protect\strmax\}$ } 
\label{alg:PFD}
\begin{algorithmic}[1]
\vspace{1mm}
    \STATE \textbf{INPUT:} A thin hypergraph $H=(V,E)$;
		\STATE Compute the Cartesian skeleton $\skel(H)$ of $H$ with Algorithm \ref{alg:CartSk}; \label{alg:start1}
		\STATE Compute the Cartesian PFD of $\skel(H) = \Box_{i\in I} H_i$ by run of the 
						algorithm of \texttt{Bretto et al.} \cite{BSV-13}\\
    \STATE Assign coordinates $c(v)=(c^v_1, \dots c^v_{|I|})$ w.r.t. $\Box_{i\in I} H_i$ to each vertex $v\in V$; \label{alg:end1}
		\STATE $J \gets I$; 
		\FOR {$k=1,\dots, |I|$} \label{alg:forK}
			\FOR{each $S \subset J$ with $|S|= k$} \label{alg:forS}
				\FOR{$R\in \{S, I\setminus S\}$} \label{alg:forR}
					\STATE Compute $H^R\subseteq H$ with \label{alg:inducedLayers}
	 								$V(H^R) = V(H)$ and \\ 
									$E(H^R) = \{e\in E(H)\mid |p_i(e)|=1, i\in I\setminus R \}$;
				\ENDFOR			\label{alg:endforR}
				\IF{all connected components of $H^S$, resp., $H^{I\setminus S}$  are isomorphic} \label{alg:isom}
					\STATE take one connected component $H_S$ of $H^S$, resp., $H_{I\setminus S}$  of  $H^{I\setminus S}$;
					\IF{all non-Cartesian edges w.r.t. the factorization 
								$H_S \boxtimes H_{I\setminus S}$ are contained in $H$} \label{alg:checkNonCart}
							\STATE save $H_S$ as prime factor; 							
					\ENDIF
				\ENDIF \label{alg:in}
			\ENDFOR \label{alg:endforS}
		\ENDFOR \label{alg:endforK}
		\STATE \textbf{OUTPUT:} The prime factors of $H$;
\small\normalsize
\end{algorithmic}
\end{algorithm}

\begin{thm}
	Algorithm \ref{alg:PFD} computes the prime factors w.r.t. $\boxtimes\in
\{\strmax, \strmin\}$ of a given thin connected simple hypergraph $H=(V,E)$
with maximum degree $\Delta$ and rank $r$ in $O(|V||E|\Delta^6
r^6+|V|^2|E|r)$ time. 
	\label{thm:algoPFD}
\end{thm}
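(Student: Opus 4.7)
The plan is to establish correctness first and then bound the running time piece by piece.

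For correctness, by Proposition~\ref{prop:cartSkelcart}, any factorization $H = H_1 \boxtimes H_2$ of a thin hypergraph satisfies $\skel(H) = \skel(H_1) \Box \skel(H_2)$. Hence the Cartesian prime factors $\{H_i\}_{i\in I}$ of $\skel(H)$ form a refinement of the $\boxtimes$-PFD, and every $\boxtimes$-prime factor of $H$ corresponds to a subset $S\subseteq I$. The algorithm therefore searches over candidate subsets $S$ and checks whether $H \cong H_S \boxtimes H_{I\setminus S}$, where $H_S$ is a connected component of the layer-type subhypergraph $H^S$. The two verification tests---pairwise isomorphism of all components of $H^S$ and $H^{I\setminus S}$ (Line~\ref{alg:isom}), and presence in $H$ of every non-Cartesian edge of $H_S \boxtimes H_{I\setminus S}$ (Line~\ref{alg:checkNonCart})---jointly suffice, because the Cartesian edges are already guaranteed by $\skel(H) = \skel(H_S) \Box \skel(H_{I\setminus S})$. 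Incrementing $k$ from $1$ upward ensures minimal subsets are tested first, so the stored factors are $\boxtimes$-prime, and Theorem~\ref{thm:upfd} guarantees the result is the unique PFD.

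For the running time, bound each part separately. Step~\ref{alg:start1} costs $O(|E|^2 r^4)$ by Lemma~\ref{lem:timeCartSk-hypergraph}. The Cartesian PFD step invokes Theorem~\ref{thm:bretto} on $\skel(H)$, which inherits the parameters $|V|$, $|E|$, $\Delta$, $r$ from $H$, yielding $O(|V||E|\Delta^6 r^6)$. Coordinate assignment (Line~\ref{alg:end1}) costs $O(|V|\log|V|)$, using the fact that $|I| \leq \log_2|V|$ since a Cartesian product of $k$ non-trivial factors has at least $2^k$ vertices. The three nested loops iterate over all subsets $S\subseteq I$, totaling $2^{|I|} = O(|V|)$ subsets. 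For each such $S$, computing $H^R$ in Line~\ref{alg:inducedLayers} needs $O(|E| r |I|) = O(|E| r \log|V|)$ operations, since each edge $e$ must be tested for $|p_i(e)|=1$ for every $i \in I\setminus R$, and each projection costs $O(r)$. The isomorphism test in Line~\ref{alg:isom} can be done in $O(|E| r)$ by fixing one canonical component and mapping it to every other via the bijection given by the $I\setminus S$-coordinates.

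The check in Line~\ref{alg:checkNonCart} is the delicate step: rather than enumerating the non-Cartesian edges of $H_S\boxtimes H_{I\setminus S}$ (whose number may be exponential in $r$), one uses Lemma~\ref{lem:Nr-nonCart} together with Lemma~\ref{lem:ComputeNr} to compute the expected count $N$ in $O(r^2 + |V|\Delta^2)$ time, and in $O(|E|r)$ time counts the non-Cartesian edges actually present in $H$ relative to $H^S$ and $H^{I\setminus S}$; equality of these counts, combined with the Cartesian-skeleton identity and simplicity of $H$, suffices to conclude that the two edge sets agree. Each subset thus costs $O(|V|\Delta^2 + |E| r \log|V|)$, and multiplying by $O(|V|)$ subsets gives a main-loop cost of $O(|V|^2 \Delta^2 + |V||E| r \log|V|) = O(|V|^2 |E| r)$, using $|V|\Delta^2 \leq |E|\Delta$ and absorbing logarithms. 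Summing with the preprocessing costs yields $O(|V||E|\Delta^6 r^6 + |V|^2 |E| r)$ as claimed, since the skeleton term $|E|^2 r^4$ is absorbed (using $|E| \leq |V|\Delta$ and $r^4 \leq \Delta^6 r^6$). The main obstacle is precisely justifying that the non-Cartesian edge verification stays polynomial in $r$; the counting-based argument using Lemmas~\ref{lem:Nr-nonCart} and~\ref{lem:ComputeNr} rather than enumeration is what makes this possible.
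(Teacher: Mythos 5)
Your proposal follows essentially the same route as the paper: compute $\skel(H)$, factor it with Bretto et al., coordinatize, search over subsets $S\subseteq I$, and verify candidates by component-isomorphism plus a \emph{counting} check of non-Cartesian edges against $\NrMin$/$\NrMax$ (which is indeed the paper's device for avoiding enumeration). The final bound also matches. Two points, however, deserve attention. First, your absorption of the $|V|^2\Delta^2$ term rests on the inequality $|V|\Delta^2\leq|E|\Delta$, i.e.\ $|V|\Delta\leq|E|$, which is false in general (already for a tree, $|E|=|V|-1<2|V|\leq|V|\Delta$); the correct route, used in the paper, is to exploit connectivity: $[H]_2$ has at least $|V|-1$ and at most $|E|r^2$ edges, so $|V|=O(|E|r^2)$ and hence $|V|^2\Delta^2=O(|V||E|\Delta^2r^2)$, which is swallowed by the $O(|V||E|\Delta^6r^6)$ term rather than by $O(|V|^2|E|r)$. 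Second, you take for granted that the Cartesian coordinatization of $\skel(H)$ lifts to a valid coordinatization of the hyperedges of $H$; the paper has to argue this explicitly (each hyperedge of $\skel(H)$ induces a complete subgraph of $[\skel(H)]_2$, and complete graphs are S-prime, hence lie entirely in one layer), and without some such argument the projections $p_i(e)$ used in Lines \ref{alg:inducedLayers} and \ref{alg:checkNonCart} are not known to be well defined. Neither issue changes the stated complexity, but both need to be repaired for the proof to stand.
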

\begin{proof}
	We start to prove the correctness of Algorithm \ref{alg:PFD}. Since
	$H=(V,E)$ is thin, the Cartesian skeleton $\skel(H)$ is uniquely
	determined and the Cartesian prime factors $H_i, i\in I$ of $\skel(H)$
	can be computed with the Algorithm of \texttt{Bretto et al.}
	\cite{BSV-13}. This algorithm returns not only the prime factors of
	$\skel(H)$ but also a coloring of the edges of $\skel(H)$ and thus of the
	edges of $H$. That is, an edge $e\in E$ obtains color $j$ if and only if
	$e\in E(\skel(H))$ and $e$ is an edge of some $H_j$-layer w.r.t.
	$\skel(H)=\Box_{i\in I} H_i$. Hence, this colors the Cartesian edges of
	$H$ w.r.t. the Cartesian PFD of $\skel(H)$ and dispensable edges of $H$
	obtain no color. Based on $\skel(H)$ one can compute the coordinates in
	the following way. One first computes $[\skel(H)]_2$ and coordinatize the
	vertices of $V([\skel(H)]_2)=V$ as proposed in \cite[page
	280]{Hammack:2011a} w.r.t. to the product coloring given by $\Box_{i\in
	I} H_i$. Note, then for all edges $e=\{x,y\}\in E([\skel(H)]_2)$ holds
	$|p_i(e)|=2$ if and only if the coordinates of $x$ and $y$ differ in the
	$i$-th coordinate and the other coordinates are identical. To prove that
	this is a valid coordinatization of $\skel(H)$ one has to show, that for
	all edges $e\in E(\skel(H))$ holds that $|p_i(e)|>1$ if and only if for
	all $x,y\in e$ holds that $x$ and $y$, differ in the $i$-th coordinate
	and the other coordinates are identical. Let $e\in E(\skel(H))$ be an
	arbitrary edge. This edge forms a complete subgraph in the 2-section $[\skel(H)]_2$.
	However, complete subgraphs must be contained entirely in one of the $H_i$-layers of
	$[\skel(H)]_2$, as complete graphs are so-called S-prime graphs, see e.g.
	\cite{Hel-12, HGS-09}. From this we can conclude that the computed
	coordinates of vertices in $[\skel(H)]_2$ give a valid coordinatization
	of the vertices in $\skel(H)$. 

	Now, consider Line \ref{alg:forK}-\ref{alg:endforK}.
	We finally have to examine which ``combination'' of the proposed
	Cartesian prime factors are prime factors w.r.t. $\boxtimes$ (Line
	\ref{alg:forK}-\ref{alg:endforK}). For this, we search for the minimal
	subsets $S$ of $I$ such that the subgraph $H_S$ and $H_{I\setminus S}$,
	where $H_S$ is one connected component of $H^S$ and $H_{I\setminus S}$ is
	one connected component of $H^{I\setminus S}$, correspond to layers of a
	factor of $H$ w.r.t. $H_S \boxtimes H_{I\setminus S}$. We continue to
	check whether all connected components of $H^S$, resp., $H^{I\setminus
	S}$ are isomorphic and if so, we test whether all non-Cartesian edges w.r.t.
	the factorization $H_S \boxtimes H_{I\setminus S}$ are present. If this
	is the case, $H_S$ is saved as prime factor of $H$ w.r.t. $\boxtimes$.
	Reasoning exactly as in the proof for graphs in \cite[Chapter
	24.3]{Hammack:2011a} together with the preceding results, we conclude the
	correctness of this part in Line \ref{alg:forK}-\ref{alg:endforK}.

  We are now concerned with the time complexity. Note, since we assumed the
  hypergraph $H=(V,E)$ to be connected we can conclude that $[H]_2$ has at
  least $|V|-1$ edges. Moreover, the number of edges in $[H]_2$ does not
  exceed $|E|r^2$ and therefore we can conclude that $O(|V|^2) \subseteq
  O(|V||E|r^2)$. Furthermore, we will make in addition frequent use of the
  fact that $|E|\leq |V|\Delta$. Now, consider Line
  \ref{alg:start1}-\ref{alg:end1}. Lemma \ref{lem:timeCartSk-hypergraph}
  implies that the Cartesian skeleton can be computed in $O(|E|^2 r^4)
  \subseteq O(|V|^2\Delta^2 r^4) \subseteq O(|V||E|\Delta^2 r^6)$ time and
  by Theorem \ref{thm:bretto} we have that the PFD of $\skel(H)$ can be
  computed in $O(|V||E|\Delta^6 r^6)$ time. For the computation of the
  coordinates we use the 2-section $[\skel(H)]_2 $ as described in the
  previous part of this proof. Note, $[\skel(H)]_2 $ has at most $|E|r^2$
  edges and the coordinates can therefore be computed in $O(|E|r^2)$, see
  \cite[Chapter 23.3]{Hammack:2011a}. Hence, the overall time complexity of
  the steps in Line \ref{alg:start1}-\ref{alg:end1} is
  $O(|V||E|\Delta^6r^6)$.

  Consider now Line \ref{alg:forK}-\ref{alg:endforK}. Clearly, each $H^R$
  can be computed in $O(|E|r)$ time. For finding the connected components
  of $H^R$ in Line \ref{alg:isom} one can use its 2-section
  $[H^R]_2=(V,E')$ and apply the classical breadth-first search to it,
  which has time complexity $O(|E'| + |V|)$. Let $\Delta'$ be the maximum
  degree of $[H^R]_2$ which is bounded by $\Delta r$. Hence, we can
  determine the connected components of $H^R$ in time complexity $O(|E'| +
  |V|) \subseteq O(|V|\Delta')\subseteq O(|V|\Delta r)$. Moreover, in Line
  \ref{alg:isom} we have to perform an isomorphism test for a fixed
  bijection given by the coordinates which takes $O(|E|r)$ time. This test
  must be done for each of the connected components of $H^R$ which are at
  most $|V|$. Hence, the latter task has time complexity $O(|V||E|r)$.
  Taken together the preceding considerations and since $\Delta \leq |E|$
  we can conclude that Line \ref{alg:isom} can be performed in $O(|V|\Delta
  r+|V||E|r)=O(|V||E|r)$ time. To test whether all non-Cartesian edges
  w.r.t. $H_S \boxtimes H_{I\setminus S}$ are contained in $H$ (Line
  \ref{alg:checkNonCart}) we examine whether putative non-Cartesian edges $
  e\in E(H) \setminus E(H_S \Box H_{I\setminus S})$ are valid non-Cartesian
  edges, that is, we prove if the projection properties for these edges
  into the factors fulfill the condition $(ii)$ in the definition of edges
  in $H_S \boxtimes H_{I\setminus S}$ and count them, if valid. If the
  counted number is identical to $\NrMin$, resp., $\NrMax$ we are done.
  Since the coordinates are given, the projections can be computed in
  $O(|E|r)$ time. The computation of $\NrMin$, resp., $\NrMax$ has time
  complexity $O(r^2+|V|\Delta^2)$ (Lemma \ref{lem:ComputeNr}). Thus, Line
  \ref{alg:checkNonCart} can be performed in $O(|E|r+r^2+|V|\Delta^2)$
  time. Taken together all the single tasks in Line
  \ref{alg:forR}-\ref{alg:in} we end up in a time complexity
  $O(|E|r+|V||E|r+|V|\Delta^2+r^2)=O(|V||E|r+|V|\Delta^2+r^2)$. Assume all
  these tasks are done for each of the the $2^{|I|}$ subsets of $I$. Since
  $|I|$ is the number of factors of $\skel(H)$ and thus, is bounded by
  $\log_2 (|V|)$ we have at most $|V|$ subsets of $I$. To summarize, the
  total complexity of Line \ref{alg:forK}-\ref{alg:endforK} is
  $O(|V|^2|E|r+|V|^2\Delta^2+|V|r^2)$. Since $H$ is assumed to be
  connected we can conclude that $O(|V|^2)\subseteq O(|V||E| r^2)$ and
  hence, the complexity of Line \ref{alg:forK}-\ref{alg:endforK} is
  $O(|V|^2|E|r+|V||E|\Delta^2 r^2 + |V|r^2)$. 
	
	Taken together the preceding results we can infer that 
	Algorithm \ref{alg:PFD} has time complexity 
  $O(|V||E|\Delta^6	r^6 + |V|^2|E|r+|V||E|\Delta^2 r^2 + |V|r^2)$, that is, 
  $O(|V||E|\Delta^6	r^6+  |V|^2|E|r)$.
\end{proof}

\begin{cor}
	Algorithm \ref{alg:PFD} computes the prime factors w.r.t. $\boxtimes\in
	\{\strmax, \strmin\}$ of a given thin connected simple hypergraph
	$H=(V,E)$ with bounded degree and bounded rank in $O(|V|^2 |E|)$ time.
\end{cor}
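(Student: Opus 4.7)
The plan is to derive this as a direct specialization of Theorem~\ref{thm:algoPFD}. That theorem establishes the overall time complexity of Algorithm~\ref{alg:PFD} as $O(|V||E|\Delta^6 r^6 + |V|^2|E|r)$, so it suffices to substitute constant bounds for $\Delta$ and $r$.

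Concretely, I would argue as follows. By assumption there exist constants $c_\Delta, c_r \in \mathbb{N}$ such that $\Delta \leq c_\Delta$ and $r \leq c_r$. Consequently, the factors $\Delta^6$, $r^6$, and $r$ appearing in the bound of Theorem~\ref{thm:algoPFD} are all in $O(1)$. Substituting this into the two summands yields $O(|V||E|\Delta^6 r^6) = O(|V||E|)$ and $O(|V|^2|E|r) = O(|V|^2|E|)$. Since $|V||E| \leq |V|^2|E|$ for every hypergraph with $|V|\geq 1$, the first term is absorbed into the second, and the total complexity collapses to $O(|V|^2|E|)$.

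There is no real obstacle here, as the work has already been done inside Theorem~\ref{thm:algoPFD}; the corollary is purely a statement about the asymptotic behavior of that expression once $\Delta$ and $r$ are regarded as constants. The only point that deserves a brief mention is that every intermediate subroutine invoked by Algorithm~\ref{alg:PFD} (the computation of $\skel(H)$ via Lemma~\ref{lem:timeCartSk-hypergraph}, the Cartesian PFD via Theorem~\ref{thm:bretto}, and the counting of non-Cartesian edges via Lemma~\ref{lem:ComputeNr}) was already analyzed with bounds that depend polynomially on $\Delta$ and $r$, so each of them likewise reduces to an $O(|V|^2|E|)$ bound (or better) under the boundedness assumption. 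Thus the claimed bound holds for the whole algorithm, completing the proof.
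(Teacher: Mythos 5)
Your proposal is correct and is exactly the argument the paper intends: the corollary is an immediate specialization of Theorem~\ref{thm:algoPFD}, obtained by treating $\Delta$ and $r$ as constants so that $O(|V||E|\Delta^6 r^6 + |V|^2|E|r)$ collapses to $O(|V|^2|E|)$. The paper gives no separate proof for this corollary, and your substitution argument (including the absorption of $|V||E|$ into $|V|^2|E|$) is precisely what is needed.
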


\section*{Acknowledgment}
This work was supported in part by the \emph{Deutsche
Forschungsgemeinschaft} within the EUROCORES Programme EUROGIGA (project
GReGAS) of the European Science Foundation.

\bibliographystyle{plain}
\bibliography{Outline}

\end{document}